\let\csname equation*\endcsname\relax
\let\csname endequation*\endcsname\relax
\newcommand{\qw}[1][-1]{\ar @{-} [0,#1]}
\newcommand{\gate}[1]{*{\xy *+<.6em>{#1};p\save+LU;+RU **\dir{-}\restore\save+RU;+RD **\dir{-}\restore\save+RD;+LD **\dir{-}\restore\POS+LD;+LU **\dir{-}\endxy} \qw}
\newcommand{\measureD}[1]{*{\xy*+=+<.5em>{\vphantom{\rule{0em}{.1em}#1}}*\cir{r_l};p\save*!R{#1} \restore\save+UC;+UC-<.5em,0em>*!R{\hphantom{#1}}+L **\dir{-} \restore\save+DC;+DC-<.5em,0em>*!R{\hphantom{#1}}+L **\dir{-} \restore\POS+UC-<.5em,0em>*!R{\hphantom{#1}}+L;+DC-<.5em,0em>*!R{\hphantom{#1}}+L **\dir{-} \endxy} \qw}
\newcommand{\multimeasureD}[2]{*+<1em,.9em>{\hphantom{#2}}\save[0,0].[#1,0];p\save !C *{#2},p+LU+<0em,0em>;+RU+<-.8em,0em> **\dir{-}\restore\save +LD;+LU **\dir{-}\restore\save +LD;+RD-<.8em,0em> **\dir{-} \restore\save +RD+<0em,.8em>;+RU-<0em,.8em> **\dir{-} \restore \POS !UR*!UR{\cir<.9em>{r_d}};!DR*!DR{\cir<.9em>{d_l}}\restore \qw}
\newcommand{\multigate}[2]{*+<1em,.9em>{\hphantom{#2}} \qw \POS[0,0].[#1,0];p !C *{#2},p \save+LU;+RU **\dir{-}\restore\save+RU;+RD **\dir{-}\restore\save+RD;+LD **\dir{-}\restore\save+LD;+LU **\dir{-}\restore}
\newcommand{\ghost}[1]{*+<1em,.9em>{\hphantom{#1}} \qw}
\newcommand{\Qcircuit}[1][0em]{\xymatrix @*[o] @*=<#1>}  
 \renewcommand{\Qcircuit}[1][0em]{\xymatrix @*=<#1>}
\newcommand{\pureghost}[1]{*+<1em,.9em>{\hphantom{#1}}}
\newcommand{\multiprepareC}[2]{*+<1em,.9em>{\hphantom{#2}}\save[0,0].[#1,0];p\save !C
  *{#2},p+RU+<0em,0em>;+LU+<+.8em,0em> **\dir{-}\restore\save +RD;+RU **\dir{-}\restore\save
  +RD;+LD+<.8em,0em> **\dir{-} \restore\save +LD+<0em,.8em>;+LU-<0em,.8em> **\dir{-} \restore \POS
  !UL*!UL{\cir<.9em>{u_r}};!DL*!DL{\cir<.9em>{l_u}}\restore}
\newcommand{\prepareC}[1]{*{\xy*+=+<.5em>{\vphantom{#1\rule{0em}{.1em}}}*\cir{l^r};p\save*!L{#1} \restore\save+UC;+UC+<.5em,0em>*!L{\hphantom{#1}}+R **\dir{-} \restore\save+DC;+DC+<.5em,0em>*!L{\hphantom{#1}}+R **\dir{-} \restore\POS+UC+<.5em,0em>*!L{\hphantom{#1}}+R;+DC+<.5em,0em>*!L{\hphantom{#1}}+R **\dir{-} \endxy}}
\newcommand{\poloFantasmaCn}[1]{{{}^{#1}_{\phantom{#1}}}}
\newcommand{\R}{\mathbb{R}}
\newcommand{\C}{\mathbb{C}}
\newcommand{\set}[1]{\mathsf{#1}}
\newcommand{\grp}[1]{\mathsf{#1}}
\newcommand{\spc}[1]{\mathcal{#1}}
\def\d{{\rm d}}
\newcommand{\Lin}{\mathsf{Lin}}
\def\>{\rangle}
\def\<{\langle}
\def\kk{\>\!\>}
\def\bb{\<\!\<}
\newcommand{\st}[1]{\mathbf{#1}}
\newcommand{\map}[1]{\mathcal{#1}}
\newcommand{\Herm}{\mathsf{Herm}}
\newcommand{\St}{{\mathsf{St}}}
\newcommand{\Comb}{{\mathsf{Comb}}}
\newcommand{\Dual}{{\mathsf{Dual}}}
\newtheorem{theo}{Theorem}
\newtheorem{lemma}{Lemma}
\newtheorem{prop}{Proposition}
\newtheorem{cor}{Corollary}
\newtheorem{defi}{Definition}
\newtheorem{eg}{Example}
\newcommand{\Proof}{{\bf Proof. \,}}
\begin{document}
\title{%
Optimal quantum  networks and one-shot entropies
}
\author{Giulio Chiribella} 
\address{Department of Computer Science, The University of Hong Kong, Pukfulam Road, Hong Kong}
\address{Canadian Institute for Advanced Research,
CIFAR Program in Quantum Information Science, Toronto, ON M5G 1Z8}
\author{Daniel Ebler} 
\address{Department of Computer Science, The University of Hong Kong, Pukfulam Road, Hong Kong}
\begin{abstract}
We develop a semidefinite programming method for the optimization of quantum networks, including both causal networks and networks with  indefinite causal structure.   Our method  applies to  a broad class of  performance measures, defined operationally in terms of interactive tests set up by a verifier. We show that the optimal performance  is equal to a max relative entropy, which quantifies the informativeness of the test.   Building on this result,  we extend the notion of conditional min-entropy from quantum states to quantum causal networks.   The optimization method is illustrated in a number of applications, including the  inversion, charge conjugation, and controlization of an unknown unitary dynamics.    In the non-causal setting, we show a proof-of-principle application to the maximization of the winning probability in a non-causal quantum game.   
\end{abstract}
\maketitle

\section{Introduction}


Advances in quantum communication \cite{aspelmeyer2005advances,pirandola2015advances,wang-2015-nature}  and in the integration of quantum hardware \cite{politi2009integrated,crespi2011integrated,barends2013coherent,devoret2013superconducting,zwanenburg2013silicon} are pushing towards  the realization of networked quantum information systems, such as  quantum communication networks \cite{cirac1997quantum,acin2007entanglement,kimble-2008-nature,perseguers2010quantum,yin-2013-oe}    and 
  distributed   quantum computing  \cite{broadbent-2009-book,beals-2013-prs,barz-2012-nature}. 
       Networks of interacting quantum devices   are attracting  interest also at the theoretical level, providing a framework for quantum games \cite{gutoski2007toward} and protocols \cite{chiribella-dariano-2009-pra,chiribella2013short,portmann2015causal}, insights on the foundations of quantum mechanics \cite{chiribella-dariano-2009-pra,chiribella2011informational,hardy2011reformulating,hardy2012operator},  a starting point for  a general theory of Bayesian inference \cite{tucci1995quantum,leifer2008quantum,leifer2013towards,henson2014theory,ried2015quantum,chaves2015information,pienaar2015graph,costa2015quantum} and for the development of  models  of  higher-order quantum computation \cite{selinger2004towards,chiribella-2009-arxiv,chiribella-dariano-2013-pra}.

The network scenario motivates a new set of optimization problems,  where the goal is not to optimize individual devices, but rather to optimize how different devices interact with one another.     In many situations, the devices operate in a well-defined causal order---this is the case, for example, in the circuit model of quantum computing, where computations are implemented by   sequences of gates \cite{nielsen2000quantum,kitaev2002classical}.  Recently,   researchers have started to investigate more general situations,  where the causal order can be  in a  quantum superposition  \cite{chiribella-2009-arxiv,chiribella-2012-pra,chiribella-dariano-2013-pra,colnagh-2012-pla,araujo-2014-prl,portmann2015causal} or can  be indefinite  in other more  exotic ways, in principle  compatible with quantum mechanics    \cite{hardy-2009-book,oreshkov-2012-nature,chiribella-dariano-2013-pra,baumeler-2013-arxiv,morimae2014acausal,baumeler-wolf-2016,akibue2016entanglement}.  
 In these new situations, optimizing quantum networks is important,  for at least three reasons: First,  in order to establish an advantage, one has to first find the  optimal performances achievable in  a  definite  causal order.  Second, finding the maximum advantage requires an optimization over all  non-causal networks. This is an essential step for assessing the power of the new, non-causal models of information processing.  Third, identifying the ultimate performances  achieved in the absence of pre-defined causal structure is expected to shed light on the interplay between quantum mechanics and spacetime.   

In this paper we develop a semidefinite programming  (SDP) approach to the optimization of quantum networks.
  We start  by analyzing scenarios with  definite causal order, choosing an operational measure of performance,  quantified by  how much the network scores in a given test.   The test consists in   sending  inputs to the devices, performing local computations, and finally measuring the outputs.  Tests of this type are also important in the theory of  quantum interactive proof systems \cite{kitaev-watrous-2000}, wherein they are used to model the interaction between a prover and a verifier.   The  input-output behavior of  a quantum causal network  is described in  the framework of \emph{quantum combs} \cite{chiribella-dariano-2008-prl,chiribella-dariano-2009-pra}  (also known as \emph{quantum strategies}  \cite{gutoski2007toward}), which associates a positive operator to any given sequence of quantum operations.  In this framework, the optimization is a semidefinite program. We work out the dual optimization problem, showing that   the  maximum  score  is quantified by a one-shot entropic quantity that characterizes the  informativeness  of the test.  
This quantity extends to  networks   the notion of \emph{max relative entropy}   \cite{renner-wolf-2004-book,renner-2008-thesis,datta-renner-2009,
konig-renner-2009-ieee}  (see also  the monograph \cite{tomamichel-2015-book}).    Building on the connection with the max relative entropy, we define a measure of  the amount of correlations that a causal network can generate over time.  This quantity is based on the notion of {\em conditional min-entropy} \cite{datta-renner-2009,konig-renner-2009-ieee}, originally defined for quantum states and extended here to quantum causal networks.

After discussing the causal case, we turn our attention to  quantum networks with indefinite causal order. Some of these networks arise when  multiple quantum devices  are connected in a way that is controlled by the state of a quantum system  \cite{chiribella-dariano-2009-pra,colnagh-2012-pla,portmann2015causal}.   Some other networks are not built  by  linking up   individual devices \cite{oreshkov-2012-nature}. They are ``networks" in a generalized sense: they are spatially distributed objects that can interact with a set of local devices.   The description of these generalized networks is trickier, because we cannot specify their behaviour in terms of the behaviour of individual quantum devices. Instead, we must characterize them through     the way they respond to external inputs.      More specifically, a general quantum network is specified by a map that accepts as input the operations taking place in local laboratories  and returns as output an operation, as Figure \ref{causalladder}.     Maps that transform quantum operations  are known as \emph{quantum supermaps}.   They were originally introduced in the causal scenario    \cite{chiribella-dariano-2008-epl,chiribella-dariano-2009-pra}   and later generalized to the case of networks with indefinite causal structure \cite{chiribella-dariano-2009-pra,oreshkov-2012-nature,chiribella-dariano-2013-pra}.     These maps can be represented by positive operators, subject to a set of constraints that guarantee that valid operations are transformed into valid operations.  Again, the form of these constraints leads to semidefinite programs. In this case, we find that the  maximum score    can be expressed in terms of a  max  relative entropy, here named  the \emph{max relative entropy of signalling}, which quantifies  the deviation from the set of no-signalling channels.   In addition, we characterize the max relative entropy between two non-causal network, showing that it is equal to the maximum of the  max relative entropy over all the states that can be generated by interacting with the two networks.   This result opens the way to the definition of hypothesis testing protocols to probe the fundamental structure of spacetime, by testing the possibility of exotic non-causal networks against the null hypothesis that events have a well-defined causal structure.


\begin{figure}
\begin{center}
\hspace*{0cm}\includegraphics[scale=0.5]{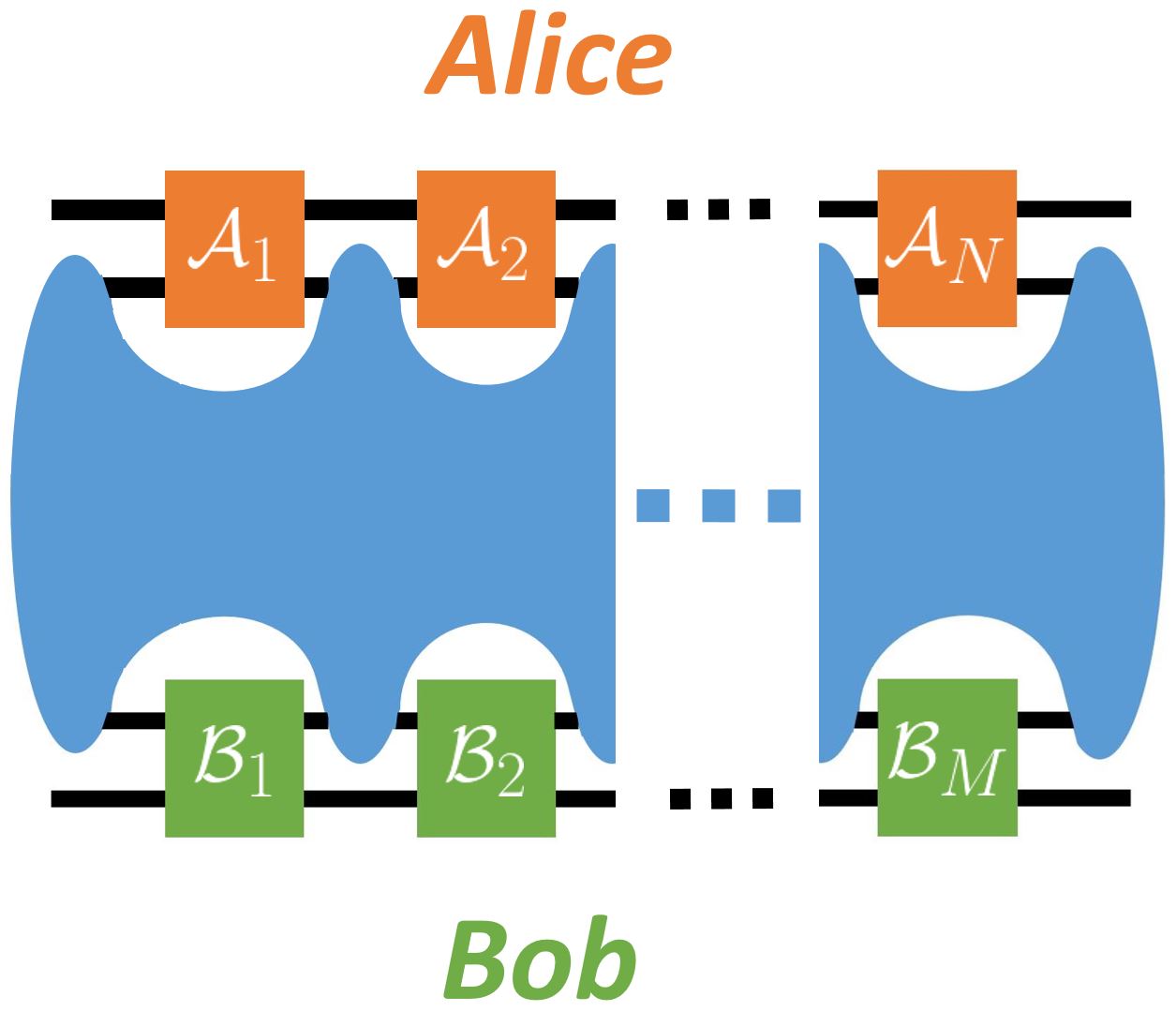}
\end{center}
\caption{Generalized network (in blue) interacting with two sequences of local devices in Alice's laboratory (orange boxes) and in Bob's laboratory (green boxes).    Devices acting in the same laboratory are applied in a well-defined causal order, corresponding to the direction from left to right in the picture. However, no causal order is assumed between the devices  in the two laboratories.  }
\label{causalladder}
\end{figure}


  To illustrate the general method, we  provide a number  of applications to concrete tasks, involving the optimization of both causal and non-causal networks.  For the optimization in the causal setting, we consider the tasks of    inverting an unknown unitary dynamics,  simulating the evolution of a charge conjugate particle, and adding control to an unknown unitary gate.     Looking at these tasks in terms of network optimization is a relatively new approach and here we provide the first optimized solutions.   
 For the optimization in the non-causal setting, we illustrate our method by analyzing the  non-causal game introduced by Oreshkov, Costa, and Brukner \cite{oreshkov-2012-nature}.  In this case, we fix  the operations performed by the players  (as in \cite{oreshkov-2012-nature}) and we search for the non-causal network that offers the largest advantage for these operations.   Using the SDP  approach, we obtain  a simple   proof   of the optimality of the network presented in  Ref.  \cite{oreshkov-2012-nature}.     Optimality  can also be derived from a recent result of  Brukner \cite{brukner-2015-njp}, who considered a more general scenario where  the players' operations are not fixed, but rather subject to optimization.    When the operations are fixed as in   Ref.  \cite{oreshkov-2012-nature}, however, our  SDP  technique yields  a significantly shorter   optimality proof.  The simplification in this restricted scenario suggests that   SDP  may prove useful  also for the broader scope  of identifying a non-causal  analogue of the Tsirelson bound, which was the motivating problem of Ref. \cite{brukner-2015-njp}.

The paper is organized as follows.  In Section \ref{preliminaries} we introduce the  framework of quantum combs and the characterization of quantum causal networks.    In Section \ref{sec:semidefinite} we  review the basic facts about semidefinite programming and establish a general relation with  the max relative entropy.  The  general result is applied to  quantum causal networks  in Section \ref{convexopt} and is  then used  to define a suitable extension of the conditional min-entropy  (Section \ref{sec:conditional}) and of the max relative entropy (Section \ref{sec:maxcausal}).   
In Sections \ref{non-causal} and \ref{sec:maxnoncausal} we extend  the results to quantum networks without predefined causal structure.    Our  techniques are illustrated in Section \ref{applications}, where we present applications to the tasks of inverting unknown evolutions, simulating charge conjugation, controlling unitary gates, and  maximizing the winning probability in a non-causal quantum game.  
Finally, the conclusions are drawn in Section \ref{sec:conclu}.

\section{The framework of quantum combs \label{preliminaries}}

In this section we introduce the concepts required for the optimization  of quantum causal networks. First of all, we review the connection between quantum channels and   operators. Then, we present the basics of the framework of quantum combs. 
\subsection{Quantum operations, quantum channels, and the Choi isomorphism}
Quantum operations \cite{kraus-1983} describe the most general transformations of quantum systems, including both the reversible transformations associated to unitary gates and the irreversible transformations due to measurements.    A quantum operation with input system $A$ and output system $B$  is a completely positive trace non-increasing map  $\map C$, transforming operators    on the input Hilbert space $\spc H_A$ into operators on the output Hilbert space $\spc H_B$.  We will often use the diagrammatic notation 
\begin{align}\label{C}
\Qcircuit @C=1em @R=0.2em @!R
{    &  \qw \poloFantasmaCn{A}  & \gate {  \map C}  & \qw \poloFantasmaCn{B}  &\qw    }   \, .
\end{align}
 We say that the  quantum operation $\map C$ in the above diagram is \emph{of type $A\to B$}.

 When  system $A$ is trivial---that is,~when its Hilbert space is one-dimensional---the quantum operation $\map C$ corresponds to the preparation of a \emph{state} of system $B$, diagrammatically represented as 
$\Qcircuit @C=1em @R=0.2em @!R
{      \prepareC{\rho}    &  \qw \poloFantasmaCn{B}  &  \qw      &  }$. 
When system $B$ is trivial, the quantum operation  $\map C$ in Eq. (\ref{C}) corresponds to a measurement \emph{effect} on system $A$ and is represented as 
$\Qcircuit @C=1em @R=0.2em @!R
{     &  \qw \poloFantasmaCn{A}          & \measureD{P}     }$.  
  Measurement effects  are positive (semidefinite) operators $P$ satisfying   $P\le I$, where $I$ is the identity operator on the system's Hilbert space.      Effects are associated to the outcomes of measurements and the probability of the outcome corresponding to the effect $P$ is given by the Born rule 
\begin{align} 
\Qcircuit @C=1em @R=0.2em @!R
{      \prepareC{\rho}    &  \qw \poloFantasmaCn{A}  &  \measureD{P}      }  =  \Tr [  P  \rho  ] \, ,
\end{align}
where $\rho$ is the state of the system before the measurement.   In the special case where $P$ is the identity  operator,   we represent the corresponding effect as   $\Qcircuit @C=1em @R=0.2em @!R
{     &  \qw \poloFantasmaCn{A}          & \measureD{\Tr}     }$ .  

In general,  quantum measurement processes are described by  \emph{quantum instruments}.  A quantum instrument  with input $A$ and output $B$ is a collection of quantum operations $\{ \map C_x\}_{x\in\set X}$ of type $A\to B$, subject to the condition that the sum $\sum_{x\in\set X}  \map C_x$  is trace-preserving.    Each quantum operation corresponds to one alternative outcome $x$ and the probability that the quantum operation $\map C_x$ takes places on a given input state $\rho$ is given by  
 \begin{align} 
\Qcircuit @C=1em @R=0.2em @!R
{      \prepareC{\rho}    &  \qw \poloFantasmaCn{A}  &    \gate{\map C_x}  &  \qw \poloFantasmaCn{B}  &  \measureD{\Tr}      }    ~=~  \Tr [\,  \map C_x  ( \rho)  \,] \, ,
\end{align}
When the instrument $\{\map C_x\}_{x\in\set X}$ has a single outcome,  say $x_0$,  the corresponding process is \emph{deterministic}, meaning that one can predict in advance that the outcome will be $x_0$.    In this case, the quantum operation $\map C_{x_0}$ is trace preserving.   
Trace preserving quantum operations  are also known as \emph{quantum channels}.

Completely positive maps can be represented by positive  operators.   Let $\Lin (\spc H)$ be the space of linear operators on the Hilbert space $ \spc H$ and    let  $\mathcal{C}$ be a completely positive map    transforming operators in $\Lin  (\mathcal{H}_0)$ into operators on $   \Lin  (\mathcal{H}_1)$.    Then,  the map $\map C$ can be represented by a positive operator $C \in \Lin  (\spc H_1 \otimes \spc H_0)$, defined as
\begin{align}\label{choi}
{C}
=(\mathcal{C} \otimes   \map I_0)(  \,  |   I  \kk  \bb I  |  \,  )
\end{align} 
where   $\map I_0$ denotes the identity map on  $\Lin (\spc H_0)$ and  $|I  \kk$  is the unnormalized maximally entangled state $|  I\kk  = \sum_i |i\rangle  |i\rangle    \in  \spc H_0\otimes \spc H_0  $. The operator $C$ is known as  the \emph{Choi operator}    \cite{choi-1975}.  

Quantum operations and quantum channels can be characterized in terms of their Choi operators:  a positive operator  $Q  \in  \Lin(\spc H_1\otimes \spc H_0)$ corresponds to a quantum operation if and only if it satisfies the condition 
\begin{align} \Tr_{1}  [   Q]  \le   I_0  \, ,\end{align}
where $\Tr_1$ denotes the partial trace over the Hilbert space $\spc H_1$,  $I_0$ denotes the identity operator on the Hilbert space $\spc H_0$, and $\le$ denotes the  standard operator order: $A  \le B$  iff $  \<\varphi | A  |\varphi\> \le \<\varphi|  B  |\varphi\>$, $\forall  |\varphi\>\in \spc H_0$.    
A positive operator  $C  \in  \Lin(\spc H_1\otimes \spc H_0)$ corresponds to a quantum channel if and only if it satisfies the condition 
\begin{align}\label{choichannel}  \Tr_{1}  [   C]  =   I_0  \, . \end{align}

\subsection{The link product}

Two quantum operations can be connected with each other, as long as the output of the first operation matches the input of the second.    
      At the level of Choi operators, the connection is implemented by the operation of  \emph{link product}  \cite{chiribella-dariano-2008-prl}, denoted as $*$.     To define the link product, it is convenient to introduct the following  shorthand notation:  if    $A$  is an operator  on $\mathcal{H}_{X} \otimes \spc H_{Y}$  and $B$ is an operator on $\spc{H}_{Y} \otimes  \spc  H_{ Z}$, then we use the notation $AB$ for the product
\begin{align}
AB:=(A\otimes I_{Z})(I_{X}\otimes B) \ .
\end{align}
With this notation, the link product of $A$ and $B$ is the operator $A*B$ defined as 
\begin{align}\label{link}
A* B:=\Tr_Y  \left [A \, B^{T_Y}\right]
\end{align}
where  $B^{T_Y}$ denotes the partial transpose of $B$ with respect to the Hilbert space  $\spc H_Y$.  Note that the definition of the link product presupposes that the Hilbert spaces have been labelled: in order to compute the link product, one needs to take the partial transpose and the trace on the Hilbert space in common between $A$ and $B$.       
  Mathematically, the  partial transpose in the r.h.s. of  Eq. (\ref{link}) is essential to guarantee that the link product of  two positive operators is a positive operator   \cite{chiribella-dariano-2008-prl}.   As a counterexample, think of  the case where      $\spc H_X$, $\spc H_Y$, and $\spc H_Z$ are two-dimensional  and $A$ and $B$ are projectors on a maximally entangled state: in this case,  removing the partial transpose results in a non-positive $A*B$). Physically, the role of the partial transpose can be understood in terms of entanglement swapping \cite{yurke1992,zukowski1993}. Thanks to the partial transpose, the link product can be expressed as 
  \begin{align}
  A*B    =    \Tr_{Y} \Tr_{Y'}  [   (  A_{X  Y}   \otimes B_{  Y'  Z })  \,  (  I_X \otimes | I\kk\bb I  |  \otimes I_Z ) ] \, ,
  \end{align}  
  where   $|I\kk  :=  \sum_{n=1}^{  d_Y}  \,  |n\>|n\>$ is the unnormalized maximally entangled state on $ \spc   H_Y\otimes \spc H_{Y'} $,  $\spc H_{Y'}$ being an identical copy of $\spc H_Y$.     This means that,   up to normalization, the link product $A*B$ is the state obtained when a Bell measurement, performed on the states $A/\Tr A$ and $B/\Tr B$, yields the outcome corresponding to the projector $ | I\kk\bb I | /d_Y$.     At the fundamental level, the possibility of representing operations as states and their composition as entanglement swapping follows from the Purification Principle---the property that every state can be obtained as the marginal of a pure state, unique up to reversible transformations \cite{chiribella2010probabilistic}.

The link product is associative, namely 
\[A  * (  B*C)   =     (A   * B )   *C  \, ,\]
for all operators $A$, $B$, and $C$. Moreover, the link product  is commutative, up to re-ordering of the Hilbert spaces: in formula, 
\[A* B \simeq B* A \, ,\]
having used  the notation $A  *B  \simeq  B*A  $ to mean  $A *  B  =  {\tt SWAP}_{XZ}     \,  (\,B*A\,)  \,   {\tt SWAP}_{XZ}$,  where  $ {\tt SWAP}_{XZ}   $ is the unitary operator that swaps the spaces $\spc H_X$ and $\spc H_Z$.    From now on we will omit the swaps, implicitly understanding that the Hilbert spaces have been   reordered in the right way wherever needed.      

Using the above notation, we have the following 
\begin{prop}[\cite{chiribella-dariano-2008-prl}]\label{prop:link}
Let $\map A$    be a quantum operation transforming operators on $\spc H_0$ to operators on $\spc H_1$, let $\map B$ be a quantum operation  transforming operators on $\spc H_1$ to operators on $\spc H_2$, and let $\map C =  \map B \map A$ be the quantum operation resulting from the composition of $\map A$ and $\map B$. Then, one has 
\[  C   :  =    A  *  B  \, ,\]
where $A$, $B$, and $C$ are the Choi operators of $A$, $B$, and $C$, respectively.  
\end{prop}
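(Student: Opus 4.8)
The plan is to reduce the claim to the defining formula~(\ref{choi}) for the Choi operator together with one elementary identity that recovers a completely positive map from its Choi operator. Observe first that the identity $C=A*B$ is purely algebraic, so it suffices to prove it for arbitrary linear maps $\map A,\map B$. By Eq.~(\ref{choi}), $C=(\map C\otimes\map I_0)(|I\kk\bb I|)$ with $|I\kk\in\spc H_0\otimes\spc H_0$; since $\map C=\map B\map A$ and $\map I_0$ acts on the reference factor, linearity gives
\begin{align*}
C=(\map B\map A\otimes\map I_0)(|I\kk\bb I|)=(\map B\otimes\map I_0)\big[(\map A\otimes\map I_0)(|I\kk\bb I|)\big]=(\map B\otimes\map I_0)(A),
\end{align*}
where in the last equality I used Eq.~(\ref{choi}) once more, now for $\map A$. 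Hence the whole content of the Proposition is the identity $(\map B\otimes\map I_0)(A)=A*B$.

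The key lemma is the ``inversion formula'': for every linear map $\map B$ from $\Lin(\spc H_1)$ to $\Lin(\spc H_2)$, with Choi operator $B=\sum_{i,j}\map B(|i\>\<j|)\otimes|i\>\<j|$, and every $X\in\Lin(\spc H_1)$,
\begin{align*}
\map B(X)=\Tr_1\big[\,B\,(I_2\otimes X^{T_1})\,\big]=X*B ,
\end{align*}
the transpose being taken on $\spc H_1$. I would prove this by evaluating both sides on the basis operators $X=|k\>\<l|$; equivalently, it is just the Proposition in the special case where $\map A$ is the preparation of the state $X$, so that $\spc H_0$ is trivial. Since the map $\map B$ appearing in $(\map B\otimes\map I_0)(A)$ acts only on the $\spc H_1$-slot of $A$, writing $A$ as a sum of product operators $M\otimes N$ with $M\in\Lin(\spc H_1)$ and $N\in\Lin(\spc H_0)$ and applying the inversion formula to each $M$ yields
\begin{align*}
(\map B\otimes\map I_0)(A)=\Tr_1\big[(B\otimes I_0)\,(I_2\otimes A^{T_1})\big].
\end{align*}
The last thing to do is to recognize the right-hand side as $A*B=\Tr_Y\!\big[A\,B^{T_Y}\big]$ with $Y=\spc H_1$: unfolding the shorthand $A\,B^{T_1}=(A\otimes I_2)(I_0\otimes B^{T_1})$, using that the partial transpose is an involution that is self-adjoint under the partial trace ($\Tr_1[P\,Q^{T_1}]=\Tr_1[P^{T_1}Q]$), and reordering the tensor factors --- legitimate thanks to the convention $A*B\simeq B*A$ --- the two operators coincide, so $C=A*B\in\Lin(\spc H_2\otimes\spc H_0)$, which is the correct space for a map of type $\spc H_0\to\spc H_2$.

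The only step carrying genuine mathematical content is the inversion formula, and even that is a one-line basis computation; everything else is linear algebra. Accordingly, the main point to watch is not any real difficulty but precision in the bookkeeping --- keeping straight which copy of which Hilbert space carries the partial transpose and the partial trace, and verifying that the reorderings left implicit in the notation really do place the result on $\spc H_2\otimes\spc H_0$.
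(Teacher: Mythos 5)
Your proof is correct. The paper itself states Proposition \ref{prop:link} without proof, citing \cite{chiribella-dariano-2008-prl}, and your argument --- reducing $C=(\map B\map A\otimes\map I_0)(|I\kk\bb I|)$ to $(\map B\otimes\map I_0)(A)$ and then applying the inversion formula $\map B(X)=\Tr_1[B(I_2\otimes X^{T_1})]$ slot-wise to identify the result with $\Tr_1[A\,B^{T_1}]$ up to reordering --- is precisely the standard derivation from that reference, with the bookkeeping (placement of the partial transpose, the identity $\Tr_1[P\,Q^{T_1}]=\Tr_1[P^{T_1}Q]$, and the final space $\spc H_2\otimes\spc H_0$) handled correctly.
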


   In the next paragraph we will use the link product to construct the Choi operator of quantum networks consisting of multiple interconnected quantum operations.

\subsection{Quantum causal networks and quantum combs} 
A quantum network is a collection of quantum devices connected with each other. We will call the network \emph{causal} if there are no loops connecting the output of a device to the output of the same device.  Mathematically, a quantum causal network can be represented by a direct acyclic graph,  where each vertex of the graph corresponds to a quantum device---cf. Figure \ref{DAG}. For every DAG, one can always define a total ordering of the vertices, through a procedure known as  topological sorting \cite{thomas2001introduction}.   Using this fact, one can always represent the a quantum causal network as an ordered  sequence of quantum devices, such as 
\begin{align}\label{comb}
\Qcircuit @C=1em @R=0.2em @!R
{     & \qw \poloFantasmaCn{A_{1}^{\rm in}}  &  \multigate{1}{\map C_1}   &  \qw \poloFantasmaCn{A^{\rm out}_1}    &\qw &    & \qw\poloFantasmaCn{A_2^{\rm in}}  &  \multigate{1}{\map C_2}   & \qw \poloFantasmaCn{A_2^{\rm out}}   &  \qw  &\dots  &&\qw   \poloFantasmaCn{A_N^{\rm in}} &  \multigate{1}{\map C_N}  &  \qw  \poloFantasmaCn{A_N^{\rm out}}   &\qw \\
     & & \pureghost{\map C_1}  &   \qw &  \qw &\qw &\qw & \ghost{\map C_2}  &\qw  &  \qw &  \dots   &&  \qw & \ghost{\map C_N}    & &    }
\quad ,
\end{align} 
where $A_j^{\rm in}$   ($A_j^{\rm out}$) denotes the input  (output) system of the network at the $j$-th time step. 

\begin{figure}
\begin{center}
\hspace*{0cm}\includegraphics[scale=0.3]{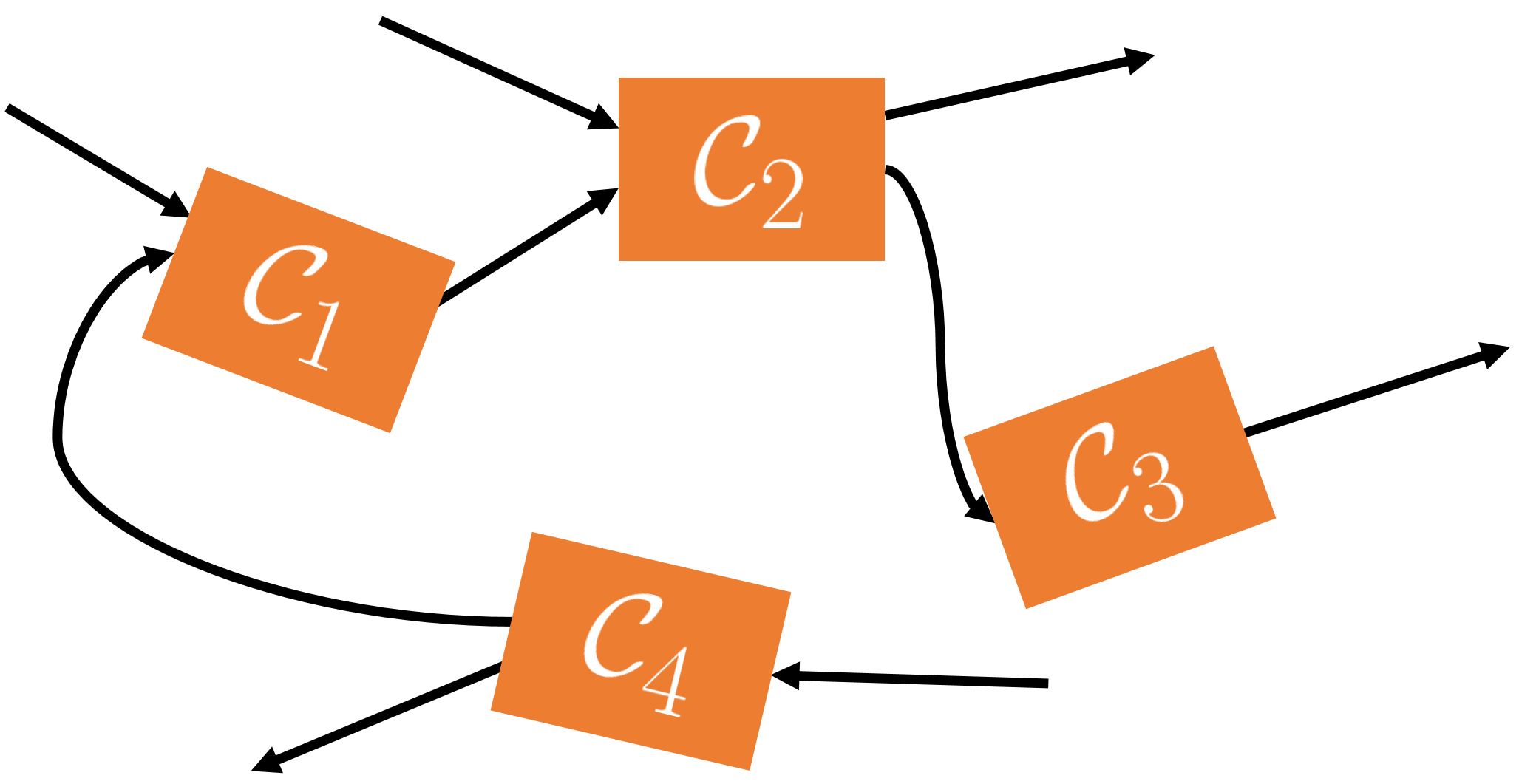}
\end{center}
\caption{A quantum causal network is a directed acyclic graph, whose nodes (orange boxes in the picture) represent  quantum devices and whose directed edges indicate the input/output direction. }
\label{DAG}
\end{figure}

We say that a network is \emph{deterministic} if all devices in the network are deterministic,~i.~e.~if they are described by quantum channels.  Using the link product, we associate a Choi operator to the network:   specifically, if the individual channels in the network have Choi operators $C_1,\  C_2 \, ,  \dots \, ,  C_N$, then the network has Choi operator 
\begin{align}
C=C_1 * C_2 * C_3 * \cdots * C_N \ .
\end{align}

The Choi operator of a deterministic  network is called a \emph{quantum comb}  \cite{chiribella-dariano-2008-prl,chiribella-dariano-2009-pra}, or also  a \emph{quantum strategy} \cite{gutoski2007toward}.    The quantum comb $C$ is a positive operator on $\bigotimes_{ j=1}^{N}  \,   \left(  \spc H_j^{\rm out} \otimes \spc H_j^{\rm in}\right) $, where $\spc H_j^{\rm in}$   ($\spc H_j^{\rm out}$)  is the Hilbert space of system $A_j^{\rm in}$  ($A_j^{\rm out}$).  
 Quantum combs can be characterized as follows:  
 \begin{prop}[\cite{chiribella-dariano-2008-prl,chiribella-dariano-2009-pra,gutoski2007toward},]    A positive operator  $C$   is a  quantum comb  if and only if it  satisfies  the linear constraints
  \begin{align}
\Tr_{A_n^{\rm out}}[C^{(n)}]=I_{A_n^{\rm in}} \otimes C^{(n-1)}  \qquad  & \forall n\in \{1,\ldots ,N\}   
  \label{tr}  \, ,
\end{align}
where      $\Tr_A$ is the partial trace over the Hilbert space  $\spc H_{A}$,  $C^{(n)}$  is a suitable operator on $  \spc H_n   :  = \bigotimes_{ j=1}^{n}  \,   \left(  \spc H_j^{\rm out} \otimes \spc H_j^{\rm in}\right) $,  $C^{(N)}   : =   C$, and $C^{(0)}:  = 1$. 
\end{prop}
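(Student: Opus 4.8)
The plan is to prove the two implications separately: the ``only if'' part is a direct computation with link products, while the ``if'' part is an induction on $N$ whose inductive step carries all the content. For the forward direction, suppose $C$ arises from a deterministic network with channels $\map C_1,\dots,\map C_N$, where $\map C_n$ has input system $A_n^{\rm in}$ together with a memory system $M_{n-1}$ and output system $A_n^{\rm out}$ together with a memory system $M_n$ (with $M_0$ and $M_N$ trivial). I would set $D^{(n)} := C_1 * \cdots * C_n$, an operator on $\spc H_n\otimes\spc H_{M_n}$, and $C^{(n)} := \Tr_{M_n}[D^{(n)}]$, with $C^{(0)}:=1$, so that $C^{(N)}=C$. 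By associativity of the link product, $D^{(n)} = D^{(n-1)} * C_n$, the link being over $M_{n-1}$; since $A_n^{\rm out}$ and $M_n$ are outputs of $\map C_n$ alone, the partial trace over them commutes through this link product, so that, using the channel condition (\ref{choichannel}) for $\map C_n$ in the form $\Tr_{A_n^{\rm out}M_n}[C_n]=I_{A_n^{\rm in}M_{n-1}}$, one obtains $\Tr_{A_n^{\rm out}}[C^{(n)}] = D^{(n-1)} * (I_{A_n^{\rm in}}\otimes I_{M_{n-1}}) = I_{A_n^{\rm in}}\otimes C^{(n-1)}$, the last equality because linking with the identity on $M_{n-1}$ just traces it out (and for $n=1$ this reads $\Tr_{A_1^{\rm out}}[C^{(1)}]=I_{A_1^{\rm in}}$). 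Positivity of all the $C^{(n)}$ is inherited from positivity of link products and of partial traces. This establishes (\ref{tr}).

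For the converse I would induct on $N$. The base case $N=1$ is immediate: (\ref{tr}) then reads $\Tr_{A_1^{\rm out}}[C^{(1)}]=I_{A_1^{\rm in}}$, which by (\ref{choichannel}) is exactly the statement that $C$ is a channel Choi operator, i.e.\ a one-tooth network. For the inductive step, observe that $C^{(N-1)}$ together with $C^{(0)},\dots,C^{(N-1)}$ satisfies the first $N-1$ constraints verbatim, so by the inductive hypothesis $C^{(N-1)}$ is an $(N-1)$-comb; realize it by channels and replace each of them by a Stinespring dilation, carrying the dilation environments forward. This yields a realization of $C^{(N-1)}$ in which the first $N-1$ steps compose to a single \emph{isometry} channel $\map V$, with Choi operator $\Gamma=|V\kk\bb V|$ obeying $\Tr_E[\Gamma]=C^{(N-1)}$, where $\spc H_E$ is the memory handed to step $N$. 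The proof then reduces to exhibiting a channel $\map C_N$ from $A_N^{\rm in}\otimes E$ to $A_N^{\rm out}$, with Choi operator $C_N$, such that $\Gamma * C_N = C$ (link over $E$); unfolding ``$\map V$ then $\map C_N$'' into its constituent teeth gives the desired $N$-comb.

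To build $\map C_N$ I would write $|V\kk=\sum_e|V_e\kk\otimes|e\rangle_E$ over an orthonormal basis of $\spc H_E$, note that $\sum_e|V_e\kk\bb V_e|=C^{(N-1)}$ because $\map V$ dilates the channel with Choi $C^{(N-1)}$, introduce the ``dual'' vectors $|g_e\rangle:=(C^{(N-1)})^{+}|V_e\kk$ (Moore--Penrose inverse), and set
\[
C_N \ :=\ \sum_{e,e'}\big(\langle g_e|\otimes I_{A_N^{\rm out}A_N^{\rm in}}\big)\,C\,\big(|g_{e'}\rangle\otimes I_{A_N^{\rm out}A_N^{\rm in}}\big)\otimes|e\rangle\langle e'|_E ,
\]
the bra/ket acting on the systems of the first $N-1$ steps. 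Then $C_N\ge 0$ because $C\ge 0$; constraint (\ref{tr}) for $n=N$ forces $\Supp(C)\subseteq\Supp(C^{(N-1)})\otimes\spc H_{A_N^{\rm out}A_N^{\rm in}}$, so that, since $\sum_e|V_e\kk\langle g_e|$ is the projector onto $\Supp(C^{(N-1)})$, a short link-product manipulation gives $\Gamma * C_N = C$; and $\Tr_{A_N^{\rm out}}[C_N]$ reduces, via $\Tr_{A_N^{\rm out}}[C]=I_{A_N^{\rm in}}\otimes C^{(N-1)}$, to $I_{A_N^{\rm in}}\otimes G_E$ with $G_E$ the Gram matrix of the Parseval frame $\{(C^{(N-1)})^{1/2}|g_e\rangle\}$, hence a projector on $\spc H_E$. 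Restricting $\map C_N$ to the range of $G_E$ — the only part of the memory that the preceding isometry $\map V$ ever populates — produces a genuine trace-preserving map, which one extends arbitrarily to the rest of $\spc H_E$ without affecting $\Gamma * C_N = C$. This closes the induction.

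The delicate point, and the place where (\ref{tr}) for $n=N$ is genuinely consumed, is this last step: checking that the ``inverse link'' operator $C_N$ is the Choi operator of a bona fide trace-preserving map, which is what forces one to pass to the minimal part of the memory and to extend $\map C_N$ off it; everything else is either routine (the forward direction) or a matter of carefully tracking Hilbert-space labels and partial transposes inside $\Gamma * C_N = C$. As an alternative to the explicit construction, the inductive step may be quoted: it is the statement that a channel which does not signal from the last input $A_N^{\rm in}$ to the earlier outputs factorizes through a memory, established for quantum combs in \cite{chiribella-dariano-2008-prl,chiribella-dariano-2009-pra,gutoski2007toward} (and, in the two-party ``semicausal implies semilocalizable'' form, by Eggeling, Schlingemann and Werner).
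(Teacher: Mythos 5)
The paper does not actually prove this proposition: it is imported by citation from Refs.~\cite{chiribella-dariano-2008-prl,chiribella-dariano-2009-pra,gutoski2007toward}, with only the remark that the constraints follow from the channel normalization condition. Your argument is a correct reconstruction of the standard realization theorem from those references: the forward direction is the routine link-product bookkeeping, and the converse is the usual induction, dilating the first $N-1$ teeth to an isometry and building the last tooth by a pseudo-inverse construction. You correctly identify and handle the one genuinely delicate point, namely that $\Tr_{A_N^{\rm out}}[C_N]$ comes out as $I_{A_N^{\rm in}}\otimes G_E$ with $G_E$ a projector rather than the identity, so that $\map C_N$ is trace-preserving only on the part of the memory actually populated by the dilation and must be extended off it; this is equivalent to the original construction in the cited works, which instead defines the teeth directly as $\bigl(I\otimes(C^{(n-1)})^{-1/2}\bigr)C^{(n)}\bigl(I\otimes(C^{(n-1)})^{-1/2}\bigr)$ on the support of $C^{(n-1)}$.
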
 The constraints in Eq.~(\ref{tr}) are a direct consequence of the normalization condition of quantum channels, expressed by Eq. (\ref{choichannel}). Physically, the positive operator $C^{(n)}$ represents the subnetwork transforming the first  $n$ inputs to the first $n$ outputs.  

 We   denote by 
 \[\set{Comb}   \left(    A^{\rm in}_1  \to      A^{\rm out}_1 \, ,  A^{\rm in}_2  \to       A^{\rm out}_2 \,  , \dots  \, ,A^{\rm in}_N  \to    A^{\rm out}_N    \right) \] the set of positive semidefinite operators satisfying the constraint (\ref{tr}).  
  When there is no ambiguity,  we will simply write $\set {Comb}$.


\subsection{Quantum testers and the generalized Born rule}

   So far we considered deterministic networks, resulting from the connection of quantum channels.  
However, it is also useful to consider   networks containing measurement devices, which may generate random outcomes.   We call such networks \emph{non-deterministic}. Non-deterministic  quantum networks can be thought as the quantum version of classical electric networks containing measurement devices, such as voltmeters and ammeters.  Like these classical relatives are useful for testing the behaviour of electrical circuits, quantum non-deterministic networks  are useful for testing the behaviour of  quantum circuits, or, slightly more broadly,  physical processes consisting of multiple time steps.  

    An example of non-deterministic network is  the following
        \begin{align}\label{tester} 
   \Qcircuit @C=1em @R=0.2em @!R
{    & \pureghost{\rho}  &   \qw  &\qw &\qw &\qw & \ghost{\map D_1}  &  \qw &\qw &  \dots   &&  \qw  & \ghost{\map D_{N-1}}  &\qw   & \qw   &  \qw  &    \qw & \ghost{\{P_x\}_{x\in\set X}}\\
 & \pureghost{\rho}  &     & & & & \pureghost{\map D_1}  &   & &    &&    & \pureghost{\map D_{N-1}}   & &    &   &     & \pureghost{\{P_x\}_{x\in\set X}}\\
      &  \multiprepareC{-2}{\rho} &   \qw \poloFantasmaCn{A_{1}^{\rm in}}   & \qw     &     & \qw \poloFantasmaCn{A_{1}^{\rm out}}  &   \multigate{-2}{\map D_1}   &  \qw \poloFantasmaCn{A_{2}^{\rm in}}     &    \qw  &\dots  &  &  \qw \poloFantasmaCn{A_{N-1}^{\rm out}}   &  \multigate{-2}{\map D_{N-1}}  &    \qw \poloFantasmaCn{A_{N}^{\rm in}}     & \qw &        & \qw \poloFantasmaCn{A_{N}^{\rm  out}}    &     \multimeasureD{-2}{\{P_x\}_{x\in\set X}}   } 
\quad ,
 \end{align}
  where $\rho$ is a quantum state, $(\map D_1,  \dots,  \map D_{N-1})$ is a sequence of quantum channels, and $\{P_x\}_{x\in\set X}$ is a positive operator-valued measure (POVM), describing a quantum measurement  on the last output system.  Networks of the type  (\ref{tester})  can be used to probe quantum networks of the type (\ref{comb}), as follows
 \begin{align}\label{connected}
\Qcircuit @C=1em @R=0.2em @!R
{ &    \multiprepareC{2}{\rho}  & \qw  & \qw & \qw & \multigate{2}{\map D_1}  &\qw &  \qw &  \dots    &   & &  \qw & \multigate{2}{\map D_{N-1}}  & \qw &   \qw  & \qw &  \multimeasureD{2}{\{P_x\}_{x\in\set X}}\\    
&    \pureghost{\rho}  &   &  &  & \pureghost{\map D_1}  & &  &  \dots    &   & &   & \pureghost{\map D_{N-1}}  &  &     &  &  \pureghost{\{P_x\}_{x\in\set X}}\\        
   &       \pureghost{\rho}    &     \qw \poloFantasmaCn{A_{1}^{\rm in}}      &   \multigate{1}{\map C_1}   &   \qw \poloFantasmaCn{A_{1}^{\rm out}}    &    \ghost{\map D_1}    &    \qw \poloFantasmaCn{A_{2}^{\rm in}}     &\qw  & \dots  & &&     \qw \poloFantasmaCn{A_{N-1}^{\rm out}}      &  \ghost{\map D_{N-1}}  &    \qw \poloFantasmaCn{A_{N}^{\rm in}}       &     \multigate{1}{\map C_{N}}  &    \qw \poloFantasmaCn{A_{N}^{\rm out}}    &   \ghost{\{ P_x\}_{x\in\set X}}  \\
   & &  &   \pureghost{\map C_1}  &   \qw &  \qw &   \qw &  \qw &\dots    && &\qw & \qw & \qw & \ghost{\map C_N}    &   &}
    \quad .  \qquad \qquad 
\end{align}  
When the two networks are wired together, the final measurement    produces one of the outcomes in the set $\set X$.  
 Using Proposition \ref{prop:link}, the probability of the outcome $x $ is can be computed as
\begin{align}
\nonumber p_x  &   =   \rho   *   C_1   *  D_1 *C_2  *   D_2  * \cdots  *    D_{N-1}   *C_N*   P^T_x \\
\nonumber  &  =  \left(   ~ \rho  *  D_1  *  D_2  *  \cdots  *  D_{N-1}  *  P_x^{T}  \, \right) \,  *\,  \left(~  C_1*C_2 *  \cdots  *  C_N^{\phantom{\dag}} \right)  \\
\nonumber  & =   T_x  *  C  \\
\label{genborn}&  =  \Tr\left[ \,  T_x  \,  C^T \, \right]  \, ,
\end{align}
where  $C$ is the Choi operator of the tested network, $C^T$ is the transpose of $C$, and $ \{  T_x\}_{x\in\set X}$  is the collection of operators defined by  
\begin{align}\label{tix}   T_x  :  =  \rho *   D_1   *  D_2  *   \cdots  *  D_{N-1}  *  P^T_x  \end{align}
(here the transpose  of $P_x$ is needed because, according to Definition \ref{choi},  the Choi operator of  the quantum operation  $\map Q_x   (\cdot) =   \Tr  [ P_x  \cdot ]$ is $P_x^T$ instead of $P_x$).  

We call the set of operators $\st T  =   \{  T_x\}_{x\in\set X}$   a     \emph{quantum tester} and Eq.  (\ref{genborn})  the \emph{generalized Born rule} \cite{chiribella2008memory,chiribella-dariano-2009-pra,ziman2008process}.   The quantum tester  $\st T$  describes  the response of the non-deterministic network  (\ref{tester}) when connected to external devices.  Quantum testers are a useful abstraction  in many applications, such as quantum games \cite{gutoski2007toward} and cryptographic protocols \cite{chiribella2013short,portmann2015causal},    quantum interactive proof systems \cite{kitaev-watrous-2000},    quantum learning of gates \cite{bisio2010}, \cite{giovannetti2006quantum,giovannetti2011advances},  quantum channel discrimination   \cite{chiribella2008memory,gutoski2012,jenvcova2016conditions},   incompatibility of multitime quantum measurements \cite{sedlak2016incompatible},  tomography of quantum channels \cite{ziman2008process,bisio2009optimal},  non-Markovian processes \cite{pollock2015complete,modi2012unification}, and causal models \cite{ried2015quantum}.

  Quantum testers can be characterized as follows:  
\begin{prop}[\cite{chiribella2008memory}] 
Let ${\st T}  $ be a collection of  positive operators on   $  \bigotimes_{ j=1}^{N}  \,   \left(  \spc H_j^{\rm out} \otimes \spc H_j^{\rm in}\right)  $.       $\st T$ is a quantum tester if and only if 
\begin{align}
\sum_{x\in \set X}  \,  T_x  =& I_{A_N^{\rm out}}\otimes \Gamma^{(N)} \nonumber \\
\Tr_{A_n^{\rm in}} \left [\, \Gamma^{(n)}\, \right]=&I_{A_{n-1}^{\rm out}}  \otimes \Gamma^{(n-1)} \ , \quad n=2,\ldots,N  \nonumber \\
\Tr_{A_1^{\rm in}}   \left[  \,  \Gamma^{(1)} \,  \right]  &  =1  \, ,
 \label{trtester}
\end{align} 
where each  $\Gamma^{(n)}$, $n=  1  ,\dots,   N$ is a positive operator on  $\spc H_{n}^{\rm in}  \otimes \left[     \bigotimes_{ j=1}^{n-1}  \,   \left(  \spc H_j^{\rm out} \otimes \spc H_j^{\rm in}\right)\right]$. 
\end{prop}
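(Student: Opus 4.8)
The plan is to derive both directions from the characterization of quantum combs recalled above, by turning the tester into an ordinary comb through the device of a ``classical register'' that records the measurement outcome. First I would introduce an auxiliary Hilbert space $\spc H_C$ with orthonormal basis $\{\ket x\}_{x\in\set X}$ and set $\widetilde T:=\sum_{x\in\set X} T_x\otimes\ket x\bra x_C$, which is a positive operator precisely because the $T_x$ are. The first step is then pure bookkeeping: unpacking the comb constraints of Eq.~(\ref{tr}) for a comb of type $\bigl(\mathrm{triv}\to A_1^{\rm in},\,A_1^{\rm out}\to A_2^{\rm in},\,\dots,\,A_{N-1}^{\rm out}\to A_N^{\rm in},\,A_N^{\rm out}\to C\bigr)$ --- with running operators taken to be $\Gamma^{(1)},\dots,\Gamma^{(N)}$ followed by $\widetilde T$ itself --- one sees that $\widetilde T$ is a quantum comb of this type if and only if the operators $\{\Gamma^{(n)}\}$ obey Eq.~(\ref{trtester}): the top-level constraint $\Tr_C[\widetilde T]=I_{A_N^{\rm out}}\otimes\Gamma^{(N)}$ reads $\sum_x T_x=I_{A_N^{\rm out}}\otimes\Gamma^{(N)}$ since $\Tr_C[\widetilde T]=\sum_x T_x$, and the remaining constraints reproduce the recursion on the $\Gamma^{(n)}$'s.

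For the ``only if'' direction, if $\st T$ is a tester then $T_x=\rho*D_1*\cdots*D_{N-1}*P_x^T$ as in Eq.~(\ref{tix}), and hence $\widetilde T=\rho*D_1*\cdots*D_{N-1}*D_N$, where $D_N:=\sum_x\ket x\bra x_C\otimes P_x^T$ is the Choi operator of the measure-and-record channel $\sigma\mapsto\sum_x\Tr[P_x\sigma]\,\ket x\bra x_C$ (a legitimate channel because $\sum_x P_x=I$). Thus $\widetilde T$ is the Choi operator of a deterministic network, hence a quantum comb of the stated type, hence it satisfies Eq.~(\ref{tr}); by the bookkeeping equivalence, $\st T$ obeys Eq.~(\ref{trtester}).

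For the ``if'' direction I would run this in reverse. Assuming Eq.~(\ref{trtester}), the equivalence makes $\widetilde T$ a quantum comb of the stated type, so the characterization of quantum combs provides a realization $\widetilde T=\rho*D_1*\cdots*D_{N-1}*D_N$ as a state $\rho$, channels $\map D_1,\dots,\map D_{N-1}$, and a last channel $\map D_N$ with output system $C$. The last channel need not a priori be a measurement, so I would repair it: letting $\map Z$ be the full dephasing of $C$ in the basis $\{\ket x\}$, the operator $\widetilde T$ is invariant under $\map Z$ acting on its output (it is already diagonal in $C$), so by Proposition~\ref{prop:link} the sequence $\rho,\map D_1,\dots,\map D_{N-1},\map Z\circ\map D_N$ still realizes $\widetilde T$; and $\map Z\circ\map D_N$, whose range lies in the diagonal algebra of $C$, must have the form $\sigma\mapsto\sum_x\Tr[P_x\sigma]\,\ket x\bra x_C$ for some positive operators $P_x$ with $\sum_x P_x=I$, i.e.\ a POVM, with Choi operator $\sum_x\ket x\bra x_C\otimes P_x^T$. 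Expanding the link product then gives $\widetilde T=\sum_x\bigl(\rho*D_1*\cdots*D_{N-1}*P_x^T\bigr)\otimes\ket x\bra x_C$, and matching coefficients of the linearly independent $\ket x\bra x_C$ identifies $T_x=\rho*D_1*\cdots*D_{N-1}*P_x^T$ for every $x$, exhibiting $\st T$ as a quantum tester of the form in Eq.~(\ref{tix}).

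I expect the only genuinely non-routine step to be this last one --- converting an arbitrary comb realization of $\widetilde T$ into one whose final channel is literally a measurement. It works because the classical-register encoding forces $\widetilde T$ to be block-diagonal in $C$, so dephasing the output changes nothing; the single point to check is that post-composing the last channel with the dephasing map is compatible with the link-product decomposition of the comb, which is immediate from Proposition~\ref{prop:link}. Everything else is the translation between Eq.~(\ref{tr}) and Eq.~(\ref{trtester}) and the elementary fact that channels into a classical system are exactly POVM measurements followed by writing the label.
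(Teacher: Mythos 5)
Your proof is correct, and it follows the standard route for this result (which the paper itself only cites from Ref.~\cite{chiribella2008memory} without reproducing a proof): encode the outcome in a classical register, observe that $\widetilde T=\sum_x T_x\otimes|x\>\<x|_C$ satisfies the comb constraints of Eq.~(\ref{tr}) for the type $(\mathrm{triv}\to A_1^{\rm in},\dots,A_N^{\rm out}\to C)$ exactly when Eq.~(\ref{trtester}) holds, and invoke the realizability half of the comb characterization plus the dephasing/diagonal-range argument to recover a genuine POVM in the last step. All the steps you flag as needing care (invariance of $\widetilde T$ under output dephasing, the identification of diagonal-output channels with measure-and-record channels, and the automatic positivity of the $\Gamma^{(n)}$'s as partial traces of $\sum_x T_x$) do go through as you describe.
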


\subsection{Assessing the performance of a  quantum network}\label{subsec:test}  

Suppose that we are given black box access to a quantum network, whose  internal functioning is unknown to us.   
  Our goal is to assess how well the network fares in a desired task, such as  solving a desired  computational problem \cite{Deutsch73},   estimating an unknown parameter \cite{giovannetti2006quantum,giovannetti2011advances},  emulating a sequence of gates  \cite{bisio2010,marvian2016}, or replicating the action of a desired gate \cite{chiribella2008optimal,dur2015deterministic,chiribella2015universal,mivcuda2016experimental,chiribella2016quantum}.  
  
For example,  suppose that a manufacturer provides us with a special-purpose computer, designed to implement  a  quantum  search algorithm.  How can we test the performance of our device?    Since the computer is claimed to find the location of an item in a list, a natural approach  is to place the item in a set of random positions and then to check whether the answer provided by the computer is correct.  A simple measure of performance is given  the number of inputs on which  the computer gives the right answer.   More generally, one can assign different scores  depending on the distance between the correct answer and  the output of the computer.     Let us consider this example in more detail, as a concrete illustration of what it means to test a quantum network.    Suppose that the computer attempts at reproducing Grover's algorithm  \cite{grover-1996}, by interacting with unitary gates $U_i  =  2| i\>\<i|  -  I$ that encode the position of an item  $i$ in a list of $K$ items.   
    A possible test, illustrated in Figure \ref{grover},  
 is as follows:  
   \begin{enumerate}
   \item Prepare a ``position register" in the  maximally mixed state $\rho  =  I/K$.
   \item  Upon receiving an input from the computer, apply the control-unitary gate $  W   =  \sum_{i=1}^K \,  |i\>\<i|  \otimes U_i$ to the position register and the input.
   \item Repeat the previous operation until the computer returns an output. In this way, the input provided by the computer is processed by a gate $  U_i$, with $i$ chosen at random.
   \item Compare the output with the actual position, by performing a joint measurement to the position register and the output register.  The measurement is described by the POVM $\{  P_x\}_{x=  -K}^K$  with operators given by  
   \[P_x=\sum_{i   =  \max \{0, -x\}}^{\min \{K,  K-x \}} |i\rangle \langle i| \otimes |i+x \rangle \langle i+x | \, . \]   In this way,  the measurement outcome returns the deviation $x$ from the correct position
   \item  If the deviation is $x$,  assign  score  $\omega_x   =  1-   |x|/K$.    
    \end{enumerate}

\begin{figure}
\begin{center}
\hspace*{0cm}\includegraphics[scale=0.6]{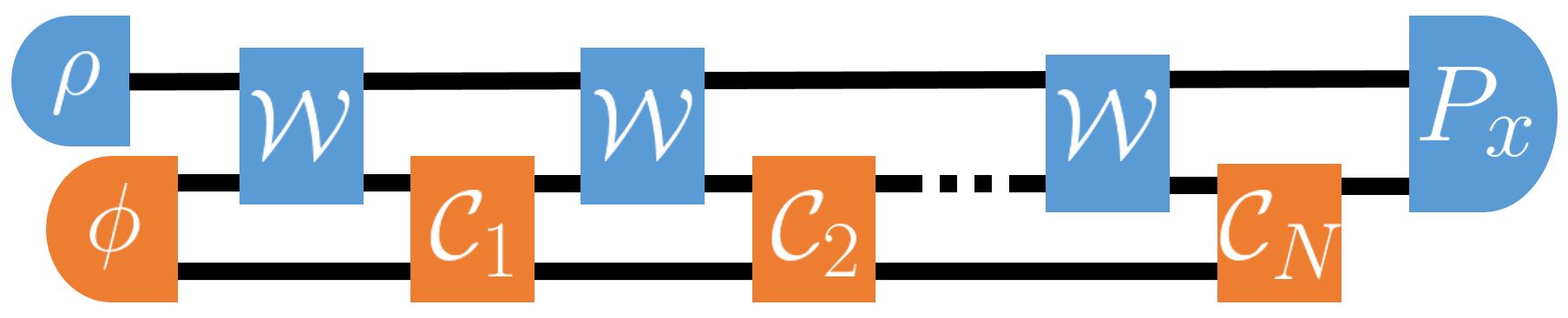}
\end{center}
\caption{A computer is designed to implement Grover's algorithm. The action of the computer  (in orange) is tested  by a   testing circuit (in blue), consisting in the preparation of randomly chosen input (encoded in the state  $\rho  =   I/K$   of a ``position register"), followed by the application of the control-unitary gate $W$, which, depending on the input, performs one of the unitaries $  U_i$.   In the end, the computer outputs an outcome $j$ (encoded in the output of the channel $\map C_N$), which is  compared with the position register  through a suitable quantum measurement (POVM $\{ P_x\}$), which  outputs the deviation $x =  i-j$.   } 
\label{grover}
\end{figure}
      
 Mathematically, the test is represented by the quantum tester  $\{  T_x\}_{x=  - K}^K$ with  
   \[   T_x  =   \rho    *       |W\kk\bb  W|   *  \cdots  * |W\kk\bb W|       *    P^T_x \, .\]
   The sequence of operations performed by the computer is  represented by the quantum comb 
   \[  C   =    |\phi\>\<\phi|  *   C_1  * \cdots  *C_{N}  \, ,\]   
and the probability of finding a deviation $x$ is given by
\[ p_x  =   T_x  *   C \,. \]   
The average score obtained by the computer can be expressed as 
   \begin{align*}
\omega    & =   \sum_{i,j=0}^K \,   \left(  1  -      \frac { |i-j|} K  \right)  \,     p_{i-j} \,\\
   &   =   \sum_{x= - K}^K   \,      \left(  1  -      \frac { |x|} K  \right)     \,    T_x*  C   \\
   &  =     \Omega  *  C   \, ,          
\end{align*}
where $\Omega$ is the operator   $\Omega :  =  \sum_x  \,    \left(  1  -      \frac { |x|} K  \right)   \,  T_x$.

Generalizing the above example, we   assess the performance of an unknown quantum network by referring to experiments where the unknown network is connected to  a ``testing network",  containing measuring devices.    The testing network will return an outcome  $x$, to which one can assign a ``score" $\omega_x$. In this way, the expected score serves as an operational measured of performance.   Specifically,  let $C$ be the quantum comb describing the tested network and let $\st T   =  \{  T_x \,  ,  x\in\set X\}$ be the quantum tester describing the testing network.  
Then, the average score is given by
\begin{align}
\nonumber \omega   &=    \sum_x  \,  \omega_x  \,    ( \,  T_x  *  C   \,) \\
&  =     \Omega  *  C   \qquad \Omega  :  =   \sum_x  \,      \omega_x\,  T_x    \, .
\end{align} 
Note that the performance  of the network $C$  is completely  determined  by  the operator $\Omega$, which we  call the \emph{performance operator}.  


For a given performance operator $\Omega$,  the maximum expected  score is given by  
 \begin{align} 
 \nonumber  \omega_{\max}     &:  =  \max_{C  \in  \Comb   }  \,          \Omega  *C  \\  
 &=    \max_{C  \in  \Comb  }  \,             \Tr  [\Omega \, C^T] 
\\
\nonumber  &=    \max_{C  \in  \Comb  }  \,             \Tr  [\Omega \, C] 
  \, .   
 \end{align}
 The third equality comes from the fact that the set of quantum combs is closed under transposition and, therefore, we can omit the transpose in Eq.  (\ref{genborn}).   Using the notation 
 \begin{align}\<A  ,  B\>  : =  \Tr [A  B] \, ,
 \end{align}
  we express the maximum score as 
 \begin{align}\label{omegamax}  \omega_{\max}     &:  =  \max_{C  \in  \Comb }  \,         \<   \Omega, C\> \, .
 \end{align}

The above equation shows that the search for the maximum score is a semidefinite program.   The basic tools needed to address it will be reviewed in the next section.

\section{Semidefinite programming and the max relative entropy}\label{sec:semidefinite}

\subsection{Basic facts about semidefinite programming}
Here we review the background  about semidefinite programming.   For further details, we refer the reader to  Watrous' lecture notes \cite{watrous-2011-notes}.  

 Let $\spc X$  and $\spc Y$ be two a Hilbert spaces and let $\Herm (\spc X)$ be the space of Hermitian operators on $\spc X$ and $\spc Y$, respectively.  
 
\begin{defi}A \emph{semidefinite program  (SDP)} is a triple $(\phi,A,B)$, where $A$ and $B$ are operators in  $ \Herm(\mathcal{X})$ and $  \Herm(\mathcal{Y})$, respectively, and  $\phi$ is a linear map from   $ \Herm(\mathcal{X})$ to $  \Herm(\mathcal{Y})$. 
\end{defi}

A semidefinite program is associated to an optimization problem in the  standard form
\begin{align}
\nonumber & {\rm maximize}  \  \langle A, X\rangle  \\
\nonumber &  {\rm subject~to }   \ \phi (X)=  B  \\
& X  \ge 0   \, . \label{primal1}
\end{align}
This problem is known as the \emph{primal}. The  \emph{dual} problem is 
\begin{align}
\nonumber & {\rm minimize}  \  \langle B, Y\rangle  \\
\nonumber &  {\rm subject~to }   \ \phi^\dag (Y)\ge   A    \\
&  Y  \in  \Herm   (\spc Y) \, , \label{dual1}
\end{align}
 where $\phi^\dag$ is the adjoint of $\phi$, namely the linear map defined by the relation 
 \[\< X  ,  \phi^\dag  (Y )  \>  =  \<  \phi (X),  Y \> \, ,  \qquad \forall  X\in  \Herm(\spc X) \, , \forall Y  \in \Herm(\spc Y) \, . \] 

The optimal values of the primal and dual problems, denoted as 
\[\omega_{\rm primal}   := \sup \ \langle A,X \rangle  \qquad {\rm and}  \qquad \omega_{\rm dual} := \inf \ \langle B,Y \rangle   \, ,
\]  
are related by duality:  For every semidefinite program, one has the weak duality  $\omega_{\rm primal} \leq \omega_{\rm dual}$.     The strong duality  $\omega_{\rm primal} = \omega_{\rm dual}$  holds under suitable conditions, provided by Slater's Theorem \cite{boyd2004convex}.  
 In this paper  we will use the following
\begin{prop}\label{Slater} Let $(\phi,A,B)$ be a semidefinite program.   If there exists a positive operator $X$  satisfying $\phi  (X)  =  B$  and an Hermitian operator $Y$ satisfying $\phi^\dag(Y)> A$, then $\omega_{\rm primal}=\omega_{\rm dual}$.
\end{prop}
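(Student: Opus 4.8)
The plan is to prove the non-trivial inequality $\omega_{\rm dual}\le\omega_{\rm primal}$; together with the weak duality $\omega_{\rm primal}\le\omega_{\rm dual}$ (already recorded above) this gives the claim. First observe that $\omega_{\rm primal}$ is finite: it is $>-\infty$ because the primal is feasible, and $<+\infty$ because $\phi^\dag(Y)>A$ in particular gives $\phi^\dag(Y)\ge A$, so a dual-feasible point exists and bounds $\omega_{\rm primal}$ from above by weak duality. Assume for contradiction that $\omega_{\rm dual}>\omega_{\rm primal}$, and fix a real number $\beta$ with $\omega_{\rm primal}<\beta<\omega_{\rm dual}$.

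Work in the finite-dimensional real vector space $\Herm(\spc X)\oplus\R$, and consider the affine set $\mathcal C=\{(A-\phi^\dag(Y),\langle B,Y\rangle):Y\in\Herm(\spc Y)\}$ together with the open convex set $\mathcal O=\{M\in\Herm(\spc X):M<0\}\times(-\infty,\beta)$ --- open because the negative-definite operators form an open cone, which is the one place where the order structure of $\Herm(\spc X)$ is used essentially. These sets are disjoint: a point of $\mathcal C\cap\mathcal O$ would be a pair $(A-\phi^\dag(Y),\langle B,Y\rangle)$ with $\phi^\dag(Y)>A$ and $\langle B,Y\rangle<\beta<\omega_{\rm dual}$, i.e.\ a dual-feasible $Y$ with objective value below the dual optimum, which is impossible. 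Hence the separating-hyperplane theorem produces a non-zero functional $(M,s)\mapsto\langle Z,M\rangle+\zeta s$ and a constant $c$ with $\langle Z,M\rangle+\zeta s\ge c$ on $\mathcal C$ and $\langle Z,M\rangle+\zeta s<c$ on $\mathcal O$. Since $\mathcal C$ is affine and the functional is bounded below on it, the functional is in fact constant on $\mathcal C$; expanding $\langle Z,A-\phi^\dag(Y)\rangle+\zeta\langle B,Y\rangle$ and using the adjoint relation $\langle Z,\phi^\dag(Y)\rangle=\langle\phi(Z),Y\rangle$, the $Y$-dependent part must vanish identically, so $\phi(Z)=\zeta B$ and the constant value on $\mathcal C$ equals $\langle Z,A\rangle$ (take $Y=0$). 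Evaluating $\langle Z,M\rangle+\zeta s<\langle Z,A\rangle$ over $\mathcal O$ and passing to limits: $s\to-\infty$ forces $\zeta\ge0$; letting $M$ range over all negative-definite operators forces $Z\ge0$ (otherwise $\langle Z,M\rangle$ is unbounded above), whence $\sup_{M<0}\langle Z,M\rangle=0$; and $s\to\beta^-$ gives $\zeta\beta\le\langle Z,A\rangle$.

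Now split according to $\zeta$. If $\zeta>0$, set $X=Z/\zeta$; then $X\ge0$ and $\phi(X)=B$, so $X$ is primal-feasible and $\langle A,X\rangle\le\omega_{\rm primal}$; but $\zeta\beta\le\langle Z,A\rangle=\zeta\langle A,X\rangle$ gives $\beta\le\langle A,X\rangle\le\omega_{\rm primal}$, contradicting $\beta>\omega_{\rm primal}$. If $\zeta=0$, then $Z\ne0$, $Z\ge0$, $\phi(Z)=0$ and $\langle Z,A\rangle\ge0$; here strict dual feasibility enters: pairing the non-zero positive operator $Z$ with the positive-definite operator $\phi^\dag(Y_0)-A$, where $Y_0$ satisfies $\phi^\dag(Y_0)>A$, gives $\langle Z,\phi^\dag(Y_0)-A\rangle>0$, whereas $\langle Z,\phi^\dag(Y_0)\rangle=\langle\phi(Z),Y_0\rangle=0$, so $0>\langle Z,A\rangle\ge0$, a contradiction. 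In either case we reach a contradiction, so $\omega_{\rm dual}\le\omega_{\rm primal}$, and with weak duality $\omega_{\rm primal}=\omega_{\rm dual}$.

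The step I expect to be the main obstacle is the degenerate case $\zeta=0$: this corresponds to a ``vertical'' separating hyperplane carrying no primal-feasible point, the phenomenon that destroys strong duality in general. Ruling it out is precisely what the strict operator inequality $\phi^\dag(Y)>A$ in the hypothesis is for, and it is used through the elementary fact that the trace pairing of a non-zero positive operator with a positive-definite operator is strictly positive. A minor technical point is to invoke the separating-hyperplane theorem in the right form for an affine set versus an open convex set in finite dimension, which is standard.
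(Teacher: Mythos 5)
Your proof is correct. The paper does not actually prove this proposition --- it defers to Watrous' lecture notes --- and your argument is precisely the standard separating-hyperplane proof of Slater-type strong duality found there: separate the affine image $\{(A-\phi^\dag(Y),\langle B,Y\rangle)\}$ from an open convex set built from the negative-definite cone, show the normal $(Z,\zeta)$ satisfies $Z\ge 0$, $\phi(Z)=\zeta B$, and then use strict dual feasibility to kill the degenerate case $\zeta=0$ via $\Tr[Z(\phi^\dag(Y_0)-A)]>0$. All the steps check out, including the two places where care is needed (constancy of a bounded-below affine functional on an affine set, and strict positivity of the pairing of a nonzero positive semidefinite operator with a positive definite one).
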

For the proof, see  e.~g.~\cite{watrous-2011-notes}.

\subsection{The max relative entropy}

An important quantity in one-shot  quantum information theory is the \emph{max relative entropy}, introduced by Datta in Ref. \cite{datta-2009-ieee}:  
\begin{defi}  
Let  $A$ and $B$ be two positive operators on $\spc X$.  The max entropy of $A$ relative to $B$ is given by
 \begin{align}\label{dmax} D_{\max}  (A\,  \|   \,  B )  :   =   
   - \log   \max\{  w~|~  w  \,A \le B  \} \, ,
   \end{align}
   with the convention $\log  0  :  = -\infty$. 
   \end{defi}
The max relative entropy provides one way to quantify  the deviation of $A$ from $B$.  More generally, it is useful to  consider the deviation between $A$ and a \emph{set} of operators: 
\begin{defi}  
Let $A$ be a positive  operator on $\spc X$ and let $  \set S \subset \Herm  (\spc X)$ be a set of positive operators.   The  \emph{max  entropy} of   $A$   relative  the set $\set S$, denoted as $D_{\max}   (  A \,  \|   \,  \set S )$,    is the quantity defined by
\begin{align}
D_{\max}       (  A \,  \|    \,  \set S )    & : =     \inf_{  B  \in  \set S}  \,   D_{\max} (  A  \,  \|  \,    B   )   \, .       
\end{align}
\end{defi}
The max relative entropy between a quantum state and a set of quantum states plays a central role in entanglement theory \cite{brandao2011one}, where  relative entropies are  used to quantify the deviation   from the set of separable states, and in quantum thermodynamics \cite{horodecki2013fundamental,brandao2015second}, where  relative entropies are used to quantify the deviation from the set of Gibbs states.   In this paper we will extend the application of the max relative entropy  to dynamical scenarios, where  $\set S$ represents a set of quantum networks.   This extension is promising,~e.~g.~for applications to hypothesis testing. Indeed, it is natural to consider scenarios where one has a null hypothesis on the input-output behaviour of a quantum network and one wants to test the null hypothesis against an alternative hypothesis.   In the case of quantum states, the minimum probability of a type II error (failing to accept the alternative hypothesis) can be estimated in terms of the max relative entropy \cite{datta2013smooth}.   In the case of quantum networks, it is natural to expect that the max relative entropy defined here will yield  similar bounds---a result in this direction will be provided in Sections \ref{sec:maxcausal} and \ref{sec:maxnoncausal}.

\subsection{From semidefinite programs to the max relative entropy}  

In this section we provide a general bound on  the primal value of  an arbitrary semidefinite program. The bound can always be attained  and its value can be expressed in terms of a max relative entropy whenever the operator $A$ in the SDP $(\phi, A, B)$  is positive.    
To state the result, we need  some basic notation, provided in the following: 

    For a vector space $\spc V$, we denote by $\spc V^*$ the dual  space,~i.~e.~the space of linear functionals on $\spc V$.    Given  a subset $\set S\subseteq  \spc V$,  we  define the \emph{dual affine space}  $\overline {\set S}$
as  
\[    \overline {\set S}   :   =  \{  \Gamma \in  \spc V^*  ~|~    \<  \Gamma,  X\>   =  1  \, ,  \forall  X\in\set S     \}  \, .  \] 
Regarding $\spc V$ as a subspace of $\spc V^{**}$, one has the inclusion $\set S  \subseteq  \overline {\overline {\set S}}$.     When $\spc V$ is finite dimensional and $\set S$ is an affine set, one has the equality $\set S  =  \overline {\overline {\set S}}$.

Given a semidefinite program  $(\phi,A, B)$,  we define the \emph{primal affine space} as 
\begin{align}\label{primalaffine}    \set S
   :  =    \{  X  \in  \Herm (\spc X) ~|~    \phi  (X)   =  B  \} \, . 
\end{align}  
Simply,  $\set S$ is  the set of operators that satisfy the equality constraint of the primal problem. 
The \emph{dual affine space} is given by 
\begin{align}\label{dualaffine}  
 \overline  {  \set S}
  =     \{   \Gamma  \in  \Herm  (\spc X) ~|~   \<  \Gamma , X  \>   =  1\, , \forall X \in\set S
      \}  \, ,
\end{align}
having  used the identification of $\Herm (\spc X)$ with its dual space. With this notation, we have
\begin{theo}\label{lem:geomdual}
Let  $ (\phi, A, B)$ be a semidefinite program.   
The optimal solution of the primal problem is upper bounded as  
\begin{align}\label{geomdual} \omega_{\rm primal} \le   \inf_{\Gamma \in \overline  {\set S}}  \,  \min \{ \lambda \in  \R  ~|~  \lambda \Gamma  \ge A \}  \, ,  
\end{align}
where $\overline {\set S}$ is the dual affine space defined in Eq. (\ref{dualaffine}). 
If $\set S$ contains a positive operator and $\overline {\set S}$ contains a   strictly positive operator, then Eq.  (\ref{geomdual}) holds with the equality sign.   If, in addition, the operator $A$ is positive,  then one has the expression 
\begin{align}\label{maxentdual}
\omega_{\rm primal}    =   2^{  D_{\max}   ( A \,  \|\,   \overline S_+ \, )} \, ,
\end{align}
where $\overline {\set S}_+$ is the \emph{dual convex set}   $ \overline {\set S}_+   :=   \{\Gamma  \in \overline{\set S}   ~|~  \Gamma  \ge  0  \} $.  
\end{theo}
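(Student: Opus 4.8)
The plan is to attack the theorem in three stages, mirroring the three assertions in its statement: first the upper bound (\ref{geomdual}), then the equality under the Slater-type conditions, then the rewriting as a max relative entropy when $A \ge 0$. For the upper bound, I would start from the dual problem (\ref{dual1}) and weak duality $\omega_{\rm primal} \le \omega_{\rm dual} = \inf\{\langle B, Y\rangle \mid \phi^\dagger(Y) \ge A\}$. The key observation is that any $\Gamma \in \overline{\set S}$ together with a scalar $\lambda$ with $\lambda\Gamma \ge A$ should be convertible into a feasible dual point, or more directly, that $\langle B, Y\rangle$ for such $Y$ can be read off as $\lambda$ when $Y$ is chosen so that $\phi^\dagger(Y) = \lambda\Gamma$. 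Concretely, for $X \in \set S$ we have $\langle \Gamma, X\rangle = 1$ and $\phi(X) = B$; if $Y$ satisfies $\phi^\dagger(Y) = \lambda \Gamma$ then $\langle B, Y\rangle = \langle \phi(X), Y\rangle = \langle X, \phi^\dagger(Y)\rangle = \lambda \langle X, \Gamma\rangle = \lambda$. So the real content of the upper bound is that the map $Y \mapsto \phi^\dagger(Y)$ has range containing all the relevant $\lambda\Gamma$'s, or — cleaner — that one can bound $\langle A, X\rangle \le \lambda \langle \Gamma, X\rangle = \lambda$ directly for $X \in \set S$, $X \ge 0$: indeed $\langle A, X\rangle \le \langle \lambda\Gamma, X\rangle = \lambda$ since $X \ge 0$ and $\lambda\Gamma - A \ge 0$. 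Taking the infimum over $\Gamma$ and $\lambda$ gives (\ref{geomdual}). This direction is essentially a one-line argument once the pairing identities are unwound.

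For the equality, I would invoke Proposition \ref{Slater} (Slater's condition). The hypothesis ``$\set S$ contains a positive operator'' gives primal feasibility with a positive $X$; the hypothesis ``$\overline{\set S}$ contains a strictly positive operator'' needs to be translated into the strict dual feasibility condition $\phi^\dagger(Y) > A$ for some $Y$. Here the point is that a strictly positive $\Gamma_0 \in \overline{\set S}$ can be scaled: $\mu \Gamma_0 > A$ for $\mu$ large enough (since $\Gamma_0 > 0$), and one needs a $Y_0$ with $\phi^\dagger(Y_0) = \mu\Gamma_0$. The existence of such $Y_0$ is the subtle part — it requires that $\mu\Gamma_0$ lie in the range of $\phi^\dagger$, equivalently (by the standard range/kernel duality) that $\mu\Gamma_0$ annihilate $\ker\phi$. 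But elements of $\overline{\set S}$ are exactly the functionals equal to $1$ on $\set S$, and since $\set S$ is an affine translate of $\ker\phi$, such functionals differ from a fixed one by something vanishing on $\ker\phi$, i.e. lying in $\mathrm{ran}\,\phi^\dagger$ — so this works out, modulo a careful bookkeeping of the affine vs. linear distinction (the ``$\langle\Gamma, X\rangle = 1$'' normalization versus ``$=0$''). Once strict dual feasibility is in hand, Slater gives $\omega_{\rm primal} = \omega_{\rm dual}$, and one checks that the dual optimum coincides with $\inf_{\Gamma \in \overline{\set S}} \min\{\lambda : \lambda\Gamma \ge A\}$ by the same change of variables $Y \leftrightarrow (\lambda, \Gamma)$ used for the upper bound, now run in reverse: every dual-feasible $Y$ gives $\Gamma := \phi^\dagger(Y)/\langle B, Y\rangle$ (when $\langle B,Y\rangle \ne 0$; the degenerate case needs a small separate argument) with $\langle\Gamma, X\rangle = 1$ and $\langle B,Y\rangle \cdot \Gamma \ge A$. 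I expect this step — pinning down that the reparametrization is a genuine bijection between the two optimization domains, including handling the boundary/degenerate cases — to be the main obstacle; everything else is formal.

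Finally, for (\ref{maxentdual}), I would simply unwind definitions. When $A \ge 0$, for a fixed $\Gamma \in \overline{\set S}_+$ the quantity $\min\{\lambda : \lambda\Gamma \ge A\}$ is, by Definition (\ref{dmax}) applied to the pair $(A, \Gamma)$, exactly $2^{D_{\max}(A \| \Gamma)}$: indeed $\lambda\Gamma \ge A \iff (1/\lambda) A \le \Gamma$, so the smallest such $\lambda$ is the reciprocal of the largest $w$ with $wA \le \Gamma$, and $D_{\max}(A\|\Gamma) = -\log\max\{w : wA \le \Gamma\} = \log(\min\{\lambda : \lambda\Gamma \ge A\})$. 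Taking the infimum over $\Gamma \in \overline{\set S}_+$ and noting that $2^{(\cdot)}$ is monotone continuous, $\inf_\Gamma 2^{D_{\max}(A\|\Gamma)} = 2^{\inf_\Gamma D_{\max}(A\|\Gamma)} = 2^{D_{\max}(A \| \overline{\set S}_+)}$. The only thing to verify is that restricting from $\overline{\set S}$ to $\overline{\set S}_+$ in (\ref{geomdual}) does not change the infimum: if $\Gamma \in \overline{\set S}$ and $\lambda\Gamma \ge A \ge 0$ with $\lambda > 0$, then $\Gamma \ge A/\lambda \ge 0$, so any $\Gamma$ achieving a finite value is automatically in $\overline{\set S}_+$ — which is where positivity of $A$ is used. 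This closes the chain of equalities.
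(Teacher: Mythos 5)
Your proof is correct, and its overall architecture (a weak-duality-type bound, Slater's condition for equality, definition-unwinding for the entropy formula) matches the paper's; the technical route through the middle differs. For the upper bound you argue directly from the positivity pairing, $\<A,X\>\le\lambda\<\Gamma,X\>=\lambda$ whenever $X\ge 0$ and $\lambda\Gamma\ge A$, which is more economical than the paper's detour: the paper first invokes $\set S=\overline{\overline{\set S}}$, picks an affine basis $(\Gamma_i)_{i=1}^K$ of $\overline{\set S}$, rewrites the primal with the finitely many constraints $\<\Gamma_i,X\>=1$, applies weak duality to that reformulated program, and only then recovers the single-$\Gamma$ form via $\lambda=\sum_i\lambda_i$, $\Gamma=\sum_i\lambda_i\Gamma_i/\lambda$. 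For the equality, the paper applies Slater to the reformulated program (strict dual feasibility is immediate because the strictly positive $\Gamma_0$ can be placed in the basis and scaled), whereas you apply Slater to the original triple $(\phi,A,B)$ and then must translate between the original dual and the $(\lambda,\Gamma)$ parametrization. Your translation is sound: every $\Gamma\in\overline{\set S}$ individually annihilates $\ker\phi$ (not merely up to a fixed reference element, as your phrasing suggests) and hence lies in the range of $\phi^\dag$ with $\<B,Y\>=1$ for any preimage $Y$; this yields both the strictly feasible dual point and the two-way correspondence between dual-feasible $Y$ and pairs $(\lambda,\Gamma)$, with the degenerate case $\<B,Y\>=0$ handled by the same $\epsilon$-perturbation along $\Gamma_0$ that the paper uses for $\sum_i\lambda_i=0$. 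Your final step, including the observation that $\lambda\Gamma\ge A\ge 0$ with $\lambda>0$ forces $\Gamma\ge 0$ so that the infimum restricts to $\overline{\set S}_+$, coincides with the paper's. In short, the affine-basis reformulation buys the paper a self-contained appeal to Slater; your version buys a one-line upper bound at the cost of the range-of-$\phi^\dag$ bookkeeping, which you correctly identify as the delicate point and which does go through in finite dimensions.
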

The proof can be found in appendix \ref{app:theo1}.

\medskip

We call the quantity $D  ( A  \,  \|  \,    \overline {\set S}_+)$ the \emph{max divergence  from normalization}. This quantity measures how much the operator $A$ deviates from the set of  positive functionals  that are normalized on every element of the primal set.      

The connection between semidefinite programming and the max relative entropy has previously appeared in the  special case where the task is to optimize quantum channels   \cite{konig-renner-2009-ieee,chiribella2013optimal}.     A related result was obtained by Jen\v cov\'a  in the framework of base norms \cite{jencova2014base}.   
In the next sections  we will elaborate on the physical meaning of Theorem \ref{lem:geomdual}, which will be  applied  to the optimization of quantum networks, both with definite and indefinite causal structure.   Before specializing ourselves to quantum networks, however, it is worth emphasizing a simple connection between the max relative entropy arising in generic SDPs   and the max relative entropy of quantum states.         

\begin{prop}\label{prop:maxoutput}
Let $C_0$ and $C_1$ be two elements of the convex set 
\[\set S_+   =   \{  X  \in \Herm  (\spc X)   \, | \, \phi  (X)   =  B \, ,   X\ge 0   \} \, .\] 
Then, one has the bound
\[       D_{\max}  \left (  \sqrt \Gamma   C_0   \sqrt \Gamma  \, \| \,     \sqrt \Gamma   C_1   \sqrt \Gamma  \right  )   \le   D_{\max}(  C_0 \,  \|     \,C_1)       \,  ,   \qquad \forall  \Gamma  \in \overline{\set S}_+    \, .  \]
The bound holds with the equality if the dual convex set $\overline{\set S}_+$ contains a full-rank operator. 
\end{prop}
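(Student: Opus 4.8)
The plan is to argue directly from the variational form of the max relative entropy. Rewriting Eq.~(\ref{dmax}) as $D_{\max}(A\,\|\,B) = \log\min\{\lambda\ge 0 \mid A\le\lambda B\}$ (with value $+\infty$ when no finite $\lambda$ works), it suffices to compare, for the two pairs $(C_0,C_1)$ and $(\sqrt{\Gamma}C_0\sqrt{\Gamma},\ \sqrt{\Gamma}C_1\sqrt{\Gamma})$, the feasible sets $\Lambda(A,B):=\{\lambda\ge 0 \mid A\le\lambda B\}$: the proposition is exactly the statement that $\Lambda(C_0,C_1)\subseteq\Lambda(\sqrt{\Gamma}C_0\sqrt{\Gamma},\sqrt{\Gamma}C_1\sqrt{\Gamma})$ in general, with equality of the two sets when $\Gamma$ is full-rank.

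For the inequality I would invoke only the elementary fact that conjugation by a positive operator is monotone with respect to the operator order: if $M\ge 0$ then $\sqrt{\Gamma}M\sqrt{\Gamma}\ge 0$ for every $\Gamma\ge 0$. Applying this to $M=\lambda C_1-C_0$ gives the claimed inclusion of feasible sets, hence $D_{\max}(\sqrt{\Gamma}C_0\sqrt{\Gamma}\,\|\,\sqrt{\Gamma}C_1\sqrt{\Gamma})\le D_{\max}(C_0\,\|\,C_1)$. I would point out that this step uses nothing about the dual convex set beyond $\Gamma\ge 0$; the hypothesis $\Gamma\in\overline{\set S}_+$ only matters for the \emph{interpretation}, since for $C_j\in\set S_+$ one has $\Tr[\sqrt{\Gamma}C_j\sqrt{\Gamma}]=\langle\Gamma,C_j\rangle=1$, so the left-hand side is genuinely a relative entropy between density operators — which is the point the proposition is making in relation to Theorem~\ref{lem:geomdual}.

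For the equality case the extra input is that some $\Gamma\in\overline{\set S}_+$ is full-rank; then $\sqrt{\Gamma}$ is invertible, so $M\mapsto\sqrt{\Gamma}M\sqrt{\Gamma}$ is a bijection of $\Herm(\spc X)$ that preserves positivity in both directions, i.e.\ an order-isomorphism. Hence $\lambda C_1-C_0\ge 0\iff\sqrt{\Gamma}(\lambda C_1-C_0)\sqrt{\Gamma}\ge 0$, the two $\Lambda$-sets coincide, and the inequality becomes an equality. The only real care needed is the bookkeeping of the degenerate cases where $D_{\max}=+\infty$: invertibility of $\sqrt{\Gamma}$ also gives $\Supp(\sqrt{\Gamma}C_0\sqrt{\Gamma})\subseteq\Supp(\sqrt{\Gamma}C_1\sqrt{\Gamma})\iff\Supp C_0\subseteq\Supp C_1$, so the two quantities are simultaneously finite or simultaneously infinite, and on the finite range their equality follows from the coincidence of the feasible sets. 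So there is no substantial obstacle here — the statement is essentially a one-line consequence of monotonicity of conjugation — and the write-up's burden is just to make the order-isomorphism step and the support/infinity edge cases explicit.
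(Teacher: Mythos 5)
Your proposal is correct and follows essentially the same route as the paper's proof in \ref{app:maxoutput}: both reduce the claim to an inclusion of the feasible sets in the variational definition of $D_{\max}$ (the paper phrases it with weights $w$, you with $\lambda=1/w$), obtain the inclusion from the fact that conjugation by $\sqrt{\Gamma}$ preserves positivity, and get equality from the order-isomorphism property when $\Gamma$ is full-rank. Your explicit treatment of the $+\infty$ edge cases is a minor tidying of a point the paper leaves implicit.
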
 
The proof can be found in  \ref{app:maxoutput}.  Note that, by construction  the operators $\sqrt \Gamma  C_i  \sqrt \Gamma\,, $  $i=0,1$ are density matrices: indeed, they are positive and  $\Tr[\sqrt \Gamma C_i \sqrt \Gamma]  =  \Tr [  \Gamma   C_i] =1$, since, by definition $\Gamma$ is a positive function normalized on the primal set $\set S$.   Proposition  will be used to show that the relative entropy of two  quantum networks is equal to the maximum relative entropy between  the output states generated by the networks.

\section{Optimizing quantum causal networks \label{convexopt}}
Here we consider the scenario where a network of quantum devices, arranged in a definite causal order,  is required to perform a desired task, such as implementing a distributed algorithm.   What is the maximum performance that the network can attain?   In this section we answer the question, measuring the performance through the score obtained in  a suitable test (depending on the task at hand) and providing a close form expression for the maximum score.

\subsection{The dual networks }

Following  Subsection \ref{subsec:test}, the mathematical description of the test is provided by a performance operator $\Omega$, acting on the Hilbert spaces of the input and output systems of the tested network.   The maximum performance achieved by an arbitrary causal network  is determined by  the following

 \begin{theo} \label{dual2}
Let $\Omega$ be an operator on $   \bigotimes_{j=1}^{N}   \left(  \,  \spc H^{\rm out}_j  \otimes \spc H_{j}^{\rm in} \right)$ and let $\omega_{\max}$  
  be the maximum of $\<\Omega,  C\> $ over all operators $C$ representing quantum  networks  of the form 
  \begin{align}\label{primalcircuit}
\Qcircuit @C=1em @R=0.2em @!R
{     & \qw \poloFantasmaCn{A_{1}^{\rm in}}  &  \multigate{1}{\map C_1}   &  \qw \poloFantasmaCn{A^{\rm out}_1}    &\qw &    & \qw\poloFantasmaCn{A_2^{\rm in}}  &  \multigate{1}{\map C_2}   & \qw \poloFantasmaCn{A_2^{\rm out}}   &  \qw  &\dots  &&\qw   \poloFantasmaCn{A_N^{\rm in}} &  \multigate{1}{\map C_N}  &  \qw  \poloFantasmaCn{A_N^{\rm out}}   &\qw \\
     & & \pureghost{\map C_1}  &   \qw &  \qw &\qw &\qw & \ghost{\map C_2}  &\qw  &  \qw &  \dots   &&  \qw & \ghost{\map C_N}    & &    }
 \quad .
\end{align}   
Then, $\omega_{\max}$ is given by 
  \begin{align}\label{deterministicprobability}
  \omega_{\max}    =    \min_{  \Gamma  \in  \Dual\Comb 
  }  \, \min  \{  \lambda  \in  \R ~|~   \lambda \Gamma  \ge   \Omega\}   \, ,
\end{align}
where    $\Dual\Comb$ denotes the set of \emph{dual combs}, that is,  positive operators $\Gamma$    representing networks of the form 
\begin{align}\label{dualcircuit} 
\Qcircuit @C=1em @R=0.2em @!R
{    & \pureghost{\sigma}  &   \qw  &\qw &\qw &\qw & \ghost{\map E_1}  &  \qw &\qw &  \dots   &&  \qw  & \ghost{\map E_{N-1}}  &   &   &    &   &  &\\
 & \pureghost{\sigma}  &     & & & & \pureghost{\map E_1}  &   & &    &&    & \pureghost{\map E_{N-1}}   & &    &   &     & & \\
      &  \multiprepareC{-2}{\sigma} &   \qw \poloFantasmaCn{A_{1}^{\rm in}}   & \qw     &     & \qw \poloFantasmaCn{A_{1}^{\rm out}}  &   \multigate{-2}{\map E_1}   &  \qw \poloFantasmaCn{A_{2}^{\rm in}}     &    \qw  &\dots  &  &  \qw \poloFantasmaCn{A_{N-1}^{\rm out}}   &  \multigate{-2}{\map E_{N-1}}  &    \qw \poloFantasmaCn{A_{N}^{\rm in}}     & \qw &        & \qw \poloFantasmaCn{A_{N}^{\rm  out}}    &     \measureD{\Tr   }    & \quad , } 
 \end{align} 
 where   $\sigma$ is a quantum state,    $(\map E_1,  \map E_2,\dots, \map E_{N-1})$ is  a sequence of quantum channels, and $\Tr_{A_N^{\rm out}}$ represents the trace over the last system.   Explicitly,  $\Dual\Comb$ is  the set of all positive operators $\Gamma$ satisfying the linear constraint 
\begin{align}
\Gamma  =& I_{A_N^{\rm out} }\otimes \Gamma^{(N)} \nonumber \\
\Tr_{A_n^{\rm in}}[\Gamma^{(N)}]=&I_{A^{\rm out}_{n-1}}\otimes \Gamma^{(N-1)} \ , \quad n=2,\ldots,N \nonumber \\
\label{trgamma1} \Tr_{A_1^{\rm in}}[\Gamma^{(1)}]=&1 \ , 
\end{align} 
for suitable  positive operators $\Gamma^{(n)}$ acting on  $ \spc H_n^{\rm in}  \otimes  \left[    \bigotimes_{j=1}^{n-1}   \left(  \,  \spc H^{\rm out}_j  \otimes \spc H_{j}^{\rm in} \right)\right]$.   
   When  $\Omega$ is positive, the  maximum performance can be expressed as 
\begin{align}\label{networkentropy}  \omega_{\max}    =  2^{  D_{\max}  \left  (\,   \Omega \, \|\,  \Dual\Comb 
\right)   }
\end{align} \, .
\end{theo}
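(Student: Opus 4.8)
The plan is to obtain Theorem~\ref{dual2} as a direct specialization of Theorem~\ref{lem:geomdual}. By Eq.~(\ref{omegamax}) and the comb characterization~(\ref{tr}), the number $\omega_{\max}$ is the primal value of the semidefinite program $(\phi,\Omega,B)$ whose equality constraint $\phi(C)=B$ collects the linear conditions~(\ref{tr}) (here the auxiliary operators $C^{(n)}$ are themselves fixed linear functions of $C=C^{(N)}$ obtained by iterated partial traces, so~(\ref{tr}) is genuinely a linear system in $C$). Thus the primal affine space is $\set S=\{C\in\Herm(\spc X)\,|\,C\text{ obeys }(\ref{tr})\}$ with $\spc X=\bigotimes_{j=1}^N(\spc H_j^{\rm out}\otimes\spc H_j^{\rm in})$, and everything reduces to identifying the dual convex set $\overline{\set S}_+$ of Theorem~\ref{lem:geomdual} with $\Dual\Comb$.

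One inclusion is physical. A dual comb $\Gamma$ is exactly a single-outcome quantum tester: comparing~(\ref{trgamma1}) with the tester constraints~(\ref{trtester}) for $|\set X|=1$ shows that the two systems of conditions coincide. Hence, by the generalized Born rule~(\ref{genborn}), wiring $\Gamma$ to an arbitrary comb $C$ produces the scalar $\langle\Gamma,C^T\rangle$, and since the resulting closed circuit is a state followed by channels followed by the trace, this scalar equals $1$. Using that $\Comb$ is closed under transposition and that its affine hull is all of $\set S$ (there is a full-rank comb, e.g.\ a string of completely depolarizing channels), we get $\langle\Gamma,C\rangle=1$ for every $C\in\set S$, i.e.\ $\Gamma\in\overline{\set S}$; together with $\Gamma\ge0$ this gives $\Dual\Comb\subseteq\overline{\set S}_+$. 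For the reverse inclusion I would compute the adjoint $\phi^\dagger$: writing $\set S=C_0+\ker\phi$ for a fixed comb $C_0$, membership $\Gamma\in\overline{\set S}$ means $\Gamma\perp\ker\phi$ together with $\langle\Gamma,C_0\rangle=1$, i.e.\ $\Gamma\in\mathrm{range}(\phi^\dagger)$ with the normalization; unwinding $\phi^\dagger$ explicitly---or, equivalently, an induction on $N$ that peels off one tooth at a time---shows that $\mathrm{range}(\phi^\dagger)$ normalized on $\set S$ is precisely the set of Hermitian operators satisfying~(\ref{trgamma1}) \emph{without} the positivity requirement on the $\Gamma^{(n)}$. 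Intersecting with $\{\Gamma\ge0\}$ then yields $\Dual\Comb$, because each $\Gamma^{(n)}$ is a partial trace of $\Gamma$ and is automatically positive once $\Gamma$ is; hence $\overline{\set S}_+=\Dual\Comb$.

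With this identification the two displayed formulas follow. To use the equality case of Theorem~\ref{lem:geomdual} one checks Slater's hypotheses: $\set S$ contains the positive operator of any genuine channel sequence, and $\overline{\set S}$ contains a strictly positive operator, namely the Choi operator of a dual comb built from a full-rank state $\sigma$ and full-rank channels $\map E_i$ (e.g.\ maximally depolarizing), which is full rank on $\spc X$. Theorem~\ref{lem:geomdual} then gives $\omega_{\max}=\inf_{\Gamma\in\overline{\set S}}\min\{\lambda\,|\,\lambda\Gamma\ge\Omega\}$. The infimum is in fact attained inside $\overline{\set S}_+=\Dual\Comb$: the optimal dual SDP variable $Y^\star$ converts (for $\omega_{\max}\neq0$) to $\Gamma^\star=\phi^\dagger(Y^\star)/\omega_{\max}\in\overline{\set S}$ with $\omega_{\max}\Gamma^\star\ge\Omega$, and when $\Omega\ge0$ this forces $\Gamma^\star\ge0$ (the general case needs only a short extra observation); compactness of $\Dual\Comb$ turns the infimum into a minimum, which is Eq.~(\ref{deterministicprobability}). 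Finally, when $\Omega\ge0$, Eq.~(\ref{maxentdual}) of Theorem~\ref{lem:geomdual} specialized to $\overline{\set S}_+=\Dual\Comb$ is exactly Eq.~(\ref{networkentropy}).

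I expect the main obstacle to be the linear-algebraic step $\overline{\set S}=\{\Gamma:\text{(\ref{trgamma1}) holds}\}$: computing $\mathrm{range}(\phi^\dagger)$, or running the induction on the number of teeth, requires careful bookkeeping of which identity factors and partial traces appear at each level, and one must verify that no constraints beyond~(\ref{trgamma1}) are generated. Everything else is routine once Theorem~\ref{lem:geomdual} and the tester characterization are taken for granted.
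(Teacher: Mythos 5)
Your overall route is the same as the paper's: cast the problem as the SDP $(\phi,\Omega,B)$ with primal affine space $\set S$ cut out by the constraints~(\ref{tr}), apply Theorem~\ref{lem:geomdual}, verify Slater's hypotheses by exhibiting strictly positive elements of $\set S$ and $\overline{\set S}$ (the paper uses exactly the normalized-identity comb and dual comb for this), identify $\overline{\set S}_+$ with $\Dual\Comb$, and invoke compactness to turn the infimum into a minimum. The one place where you diverge is the identification $\overline{\set S}_+=\Dual\Comb$: the paper does not prove this at all but cites it as a known characterization from Ref.~\cite{chiribella-dariano-2009-pra}, whereas you attempt to derive it. Your forward inclusion $\Dual\Comb\subseteq\overline{\set S}_+$ via the generalized Born rule is fine, but the reverse inclusion --- computing $\mathrm{range}(\phi^\dagger)$ or running the induction that peels off one tooth --- is exactly the nontrivial content of the cited result, and you leave it as an acknowledged obstacle rather than carrying it out. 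So the proposal is sound modulo that known characterization, which is the same logical dependency the published proof has; you have simply made explicit (and partially re-derived) what the paper imports wholesale. Two smaller remarks. First, your claim that positivity of $\Gamma$ automatically propagates to the $\Gamma^{(n)}$ is correct, since each $\Gamma^{(n-1)}$ is obtained from $\Gamma^{(n)}$ by a partial trace and a division by a dimension. Second, for the general (not necessarily positive) $\Omega$ in Eq.~(\ref{deterministicprobability}) you wave at ``a short extra observation''; the paper's device is cleaner and worth adopting: replace $\Omega$ by $\Omega'=\Omega+c\,\Gamma_0$ with $\Gamma_0$ the strictly positive dual comb, which shifts both primal and dual values by the constant $c$ and reduces the general case to the positive one, so that the restriction of the minimization to $\overline{\set S}_+$ is legitimate throughout.
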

The proof can be found in \ref{app:theo2}.

Theorem \ref{dual2} has an intuitive interpretation.    The dual networks (\ref{dualcircuit}) and the primal networks (\ref{primalcircuit})  ``deterministically complement each other": when two such networks  are connected, one obtains  the closed circuit 
\begin{align}\label{connected}
 \Qcircuit @C=1em @R=0.2em @!R
{ &    \multiprepareC{2}{\sigma}  & \qw  & \qw & \qw & \multigate{2}{\map E_1}  &\qw &  \qw &  \dots    &   & &  \qw & \multigate{2}{\map E_{N-1}}  &  &   & & &&& \\    
&    \pureghost{\sigma}  &   &  &  & \pureghost{\map E_1}  & &  &  \dots    &   & &   & \pureghost{\map E_{N-1}}  &  &     &  &    \\        
   &       \pureghost{\sigma}    &     \qw \poloFantasmaCn{A_{1}^{\rm in}}      &   \multigate{1}{\map C_1}   &   \qw \poloFantasmaCn{A_{1}^{\rm out}}    &    \ghost{\map E_1}    &    \qw \poloFantasmaCn{A_{2}^{\rm in}}     &\qw  & \dots  & &&     \qw \poloFantasmaCn{A_{N-1}^{\rm out}}      &  \ghost{\map E_{N-1}}  &    \qw \poloFantasmaCn{A_{N}^{\rm in}}       &     \multigate{1}{\map C_{N}}  &    \qw \poloFantasmaCn{A_{N}^{\rm out}}    &\measureD {\Tr}       & \quad   & \, ,  \\
   & &  &   \pureghost{\map C_1}  &   \qw &  \qw &   \qw &  \qw &\dots    && &\qw & \qw & \qw & \ghost{\map C_N}    &  &&& &}
\end{align}   
which   yields  no   information about the primal network and  makes any such information  inaccessible to further tests.  Hence, the dual networks represent the   non-informative tests. 
  The  max  relative entropy  quantifies how much the test with  performance operator $\Omega$ deviates from the set of non-informative tests.

\subsection{The case of binary testers}  

Consider a binary test, described by the tester $\{T_{\rm yes},  T_{\rm no}\}$ and assume that the test is passed if and only if the testing network yields the outcome 
``$\rm yes$".    Binary testers have applications in the theory of quantum interactive proof systems \cite{kitaev-watrous-2000}, where they can be used to compute the probability that the verifier accepts the token provided by the prover through a sequence of operations.   In this scenario,  the performance operator is given by  $\Omega  =   T_{\rm yes}$ and the probability that the prover  passes the test, optimized over all possible quantum strategies,  is  
\begin{align}\label{probmax}  p_{\max}    =  \left(  \,  \max_{\Gamma  \in  \Dual\Comb}   \max  \{   w  \,  T_{\rm yes}  \le  \Gamma  \}  \, \right)^{-1}
 \, ,   
 \end{align}
 having used Eq. (\ref{deterministicprobability}) with  $\lambda$ replaced  by its inverse $w  =  1/\lambda$. 
In  words, the problem is to find the maximum weight for which one can squeeze the tester operator $T_{\rm yes}$   under some dual comb   $\Gamma$. 

This maximization  has an intuitive interpretation: 
\begin{cor}
The maximum probability that a quantum causal network passes the test defined by the operator $T_1$ is equal to the inverse of the maximum weight $w$  for which   there exists a two-outcome tester  $\{  T_{\rm yes}',  T_{\rm no}'\}$  satisfying $T_{\rm yes}'   =  w\,  T_1$. 
\end{cor}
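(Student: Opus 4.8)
The plan is to deduce the statement directly from Eq.~(\ref{probmax}), so that everything reduces to the identity
\[
\max_{\Gamma\in\Dual\Comb}\ \max\{\,w\ |\ w\,T_1\le\Gamma\,\}\;=\;\max\{\,w\ge 0\ |\ \exists\text{ two-outcome tester }\{T'_{\rm yes},T'_{\rm no}\}\text{ with }T'_{\rm yes}=w\,T_1\,\}.
\]
The first step is to notice that the linear constraints (\ref{trgamma1}) defining $\Dual\Comb$ are \emph{verbatim} the tester normalization constraints (\ref{trtester}) written for the single operator $\sum_x T_x$: a collection $\{T_x\}$ satisfies (\ref{trtester}) precisely when the positive operator $\sum_x T_x$ equals $I_{A_N^{\rm out}}\otimes\Gamma^{(N)}$ for a $\Gamma^{(N)}$ obeying the same cascade of partial-trace identities that appears in (\ref{trgamma1}). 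Hence the sum of the operators of any multi-outcome tester is a dual comb, and conversely every dual comb arises this way.

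With that identification in hand, both inclusions are one-liners. If $\{T'_{\rm yes},T'_{\rm no}\}$ is a two-outcome tester with $T'_{\rm yes}=w\,T_1$, set $\Gamma:=T'_{\rm yes}+T'_{\rm no}$; then $\Gamma\in\Dual\Comb$ by the previous paragraph, and $T'_{\rm no}\ge 0$ gives $w\,T_1=T'_{\rm yes}\le\Gamma$, so $w$ is feasible for the left-hand maximization. Conversely, given $\Gamma\in\Dual\Comb$ and $w\ge 0$ with $w\,T_1\le\Gamma$, put $T'_{\rm yes}:=w\,T_1\ge 0$ and $T'_{\rm no}:=\Gamma-w\,T_1\ge 0$; their sum is $\Gamma$, which obeys (\ref{trgamma1}) $=$ (\ref{trtester}), so $\{T'_{\rm yes},T'_{\rm no}\}$ is a legitimate two-outcome tester with $T'_{\rm yes}=w\,T_1$. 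Taking suprema on both sides yields the displayed identity, and substituting into (\ref{probmax}) gives $p_{\max}$ as the reciprocal of the common optimal $w$, which is exactly the assertion of the corollary.

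The only place that needs genuine care — and the closest thing to an obstacle — is the bookkeeping in the first step: one must check that the operator $\Gamma^{(N)}$ in (\ref{trgamma1}) occupies the same slot as the $\Gamma^{(N)}$ in (\ref{trtester}), that the index ranges $n=2,\dots,N$ and the base case $\Tr_{A_1^{\rm in}}[\Gamma^{(1)}]=1$ match, and that the tensor factor $I_{A_N^{\rm out}}$ is attached in the same way. Once this syntactic matching is verified, the rest is the trivial convex-feasibility manipulation of splitting $\Gamma=w\,T_1+(\Gamma-w\,T_1)$ and recombining, with no optimization beyond what Theorem~\ref{dual2} and Eq.~(\ref{probmax}) already provide.
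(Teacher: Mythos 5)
Your proof is correct and follows essentially the same route as the paper: split $\Gamma = w\,T_1 + (\Gamma - w\,T_1)$ to build the tester from a feasible dual comb, and observe that the tester normalization (\ref{trtester}) applied to $T'_{\rm yes}+T'_{\rm no}$ is exactly the dual-comb condition (\ref{trgamma1}). You spell out the converse inclusion (tester $\Rightarrow$ dual comb) explicitly, which the paper leaves implicit, but the underlying argument is identical.
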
  

\Proof    Suppose that the relation  $w  T_1 \le \Gamma$ holds for some weight  $w$ and some dual comb $\Gamma$.  Then, define $  T_{\rm yes}'  :=  w T_{\rm yes}$ and $T_{\rm no}'  : =      \Gamma  -  T_{\rm no}'$.    By construction, the operators $\{  T_{\rm yes}' ,  T_{\rm no}'\}$ form a tester:  they are positive and their sum satisfies Eq. (\ref{trtester}). \qed 

\medskip

  In other words,   the dual problem amounts to finding the binary tester $\{  T_{\rm yes}^*,T_{\rm no}^*\}$ that assigns the maximum possible probability to the  outcome  $1$, subject to the condition that $T_{\rm yes}^*$  is proportional to $T_{\rm yes}$.      The content of the duality is that the maximum  is attained when there exists a primal network that triggers deterministically the outcome $1$:  
      \begin{cor}
      Let $\{ T_{\rm yes}^*, T_{\rm no}^*\}$ be the optimal tester for the dual problem and let $ C^*$ be the  optimal quantum comb   for the primal problem.  Then, one has  
      \[     \<        T_{\rm yes}^*,  C^*\>   = 1  \, .  \]
      \end{cor}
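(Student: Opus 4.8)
The plan is to read off the claimed identity directly from strong duality for the semidefinite program underlying $\omega_{\max}$, together with the complementary-slackness conditions that hold at optimality. Recall from Theorem \ref{dual2} that, with $\Omega = T_{\rm yes}$, the primal problem is $\max_{C\in\Comb}\<T_{\rm yes},C\>$ with optimal value $p_{\max}$, and the dual problem is $\min_{\Gamma\in\Dual\Comb}\min\{\lambda\in\R\,|\,\lambda\Gamma\ge T_{\rm yes}\}$, which also equals $p_{\max}$. Writing the dual optimizer in the form $\Gamma_* = \lambda_* \widetilde\Gamma$ for $\widetilde\Gamma$ a dual comb and $\lambda_*$ the optimal $\lambda$, one has $\lambda_* = p_{\max}$, and the tester $\{T_{\rm yes}^*, T_{\rm no}^*\}$ of the preceding corollary is obtained by rescaling: $T_{\rm yes}^* = w_*\, T_{\rm yes}$ with $w_* = 1/\lambda_* = 1/p_{\max}$, and $T_{\rm no}^* = \widetilde\Gamma - T_{\rm yes}^*\ge 0$.

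First I would verify that the pair $(C^*, T_{\rm yes}^*)$ are jointly feasible for, respectively, the primal SDP (maximize $\<T_{\rm yes},C\>$ over combs) and the SDP whose value is $\min_\Gamma\min\{\lambda:\lambda\Gamma\ge T_{\rm yes}\}$; this is immediate from the definitions. Next I would invoke strong duality (Proposition \ref{Slater}; the Slater conditions for the comb SDP are standard — e.g. the maximally depolarizing comb is strictly feasible on the primal side and a suitable multiple of the identity is strictly feasible on the dual side), giving $\<T_{\rm yes},C^*\> = p_{\max} = \lambda_*$. Then the key computation: since $C^*\in\Comb$ and $\widetilde\Gamma\in\Dual\Comb$, the pairing $\<\widetilde\Gamma,C^*\>$ equals $1$. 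This is exactly the "deterministic complementarity" built into the constraint systems (\ref{tr}) and (\ref{trgamma1}): connecting a comb to a dual comb closes the circuit as in (\ref{connected}), and tracing out everything gives the number $1$ — algebraically one peels off the constraints $\Tr_{A_N^{\rm out}}[C^{(N)}] = I_{A_N^{\rm in}}\otimes C^{(N-1)}$ against $\Gamma = I_{A_N^{\rm out}}\otimes\Gamma^{(N)}$, then $\Tr_{A_n^{\rm in}}[\Gamma^{(n)}] = I_{A_{n-1}^{\rm out}}\otimes\Gamma^{(n-1)}$ against the comb constraint at level $n-1$, inductively down to $C^{(0)}=1$ and $\Tr_{A_1^{\rm in}}[\Gamma^{(1)}]=1$.

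Putting the pieces together, $\<T_{\rm yes}^*, C^*\> = w_*\,\<T_{\rm yes},C^*\> = w_*\,\lambda_* = (1/\lambda_*)\lambda_* = 1$, which is the assertion. Equivalently, one can argue via complementary slackness for the dual SDP: at the optimum the operator inequality $\lambda_*\widetilde\Gamma\ge T_{\rm yes}$ is "tight along $C^*$" in the sense that $\<\lambda_*\widetilde\Gamma - T_{\rm yes},\,C^*\> = 0$, and since $\<\lambda_*\widetilde\Gamma,C^*\> = \lambda_*$ this forces $\<T_{\rm yes},C^*\> = \lambda_*$ and hence $\<T_{\rm yes}^*,C^*\> = 1$ after rescaling — this also shows $\<T_{\rm no}^*,C^*\> = \<\widetilde\Gamma,C^*\> - \<T_{\rm yes}^*,C^*\> = 0$, i.e. $C^*$ triggers the outcome "$\mathrm{yes}$" deterministically. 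The only point that requires a line of care is the identity $\<\widetilde\Gamma,C^*\> = 1$ for an arbitrary comb/dual-comb pair; I expect this telescoping argument over the normalization constraints to be the main (though routine) obstacle, and it is presumably already established in the proof of Theorem \ref{dual2}, so I would simply cite it.
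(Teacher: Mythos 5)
Your proposal is correct and its core is exactly the paper's argument: since $T_{\rm yes}^* = w^*\, T_{\rm yes}$ by construction and strong duality gives $\<T_{\rm yes},C^*\> = p_{\max} = 1/w^*$, the product is $1$. The additional material (verifying Slater's condition, the telescoping identity $\<\widetilde\Gamma, C^*\>=1$, and the complementary-slackness reformulation showing $\<T^*_{\rm no},C^*\>=0$) is correct but not needed beyond what Theorem \ref{dual2} already supplies.
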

\Proof    Let  $w^*$ be the optimal weight in the dual problem, Then, one  has  $T_{\rm yes}^*  =  w^*  \,  T_{\rm yes}$ and  $\<   T_{\rm yes}   ,     C^*\>   =  1/w^*$.  Combining these two equations, one gets       
    $   \<      T_{\rm yes}^* , C^*   \>    =      w^*  \,  \<   T_{\rm yes} ,  C^*  \>  =1$. \qed

  \medskip

\section{The conditional min-entropy of  quantum causal networks}\label{sec:conditional}

Theorem \ref{dual2} allows us to extend the notion of conditional min-entropy      \cite{datta-renner-2009} from quantum states to quantum causal networks.   Let us first review the basic properties of the conditional min-entropy of  quantum states: For a quantum state $\rho  \in  \St (AB)$,  the conditional min-entropy of system $A$, conditional on system $B$, is defined as     \cite{datta-renner-2009} 
\begin{align}\label{minentropy}  H_{\rm min}    (A|B)_\rho    :=    -  \log  \left[    \min_{\gamma  \in   \St (B)}  \min  \{  \lambda    \in \R ~|~   \lambda  ( I_A  \otimes  \gamma_B   )   \ge \rho_{AB} \}      \right]  \, .\end{align} 
K\"onig, Renner, and Schaffner \cite{konig-renner-2009-ieee}  clarified the operational meaning of $  H_{\rm min}    (A|B)_\rho$  in terms of the following task:  given the state $\rho_{AB}$, find the quantum channel $\map C$ that produces the best approximation of the maximally entangled state $|\Phi\>_{A \, A'}  : =   \sum_{n=1}^{d_A}  \,  |n\>|n\>/\sqrt d_A$, by acting locally on  system $B$.    Here the quality of the approximation is measured by the fidelity, namely the probability that the output state passes a binary test with POVM $\{  P_{\rm yes},  P_{\rm no}\}$,  defined by $P_{\rm yes}  :=  |\Phi\>\<\Phi|$.    
  Overall, we can jointly regard the preparation of the state $\rho$ and the measurement of the binary POVM $\{  P_{\rm yes},  P_{\rm no}\}$ as a test performed on the channel  $\map C$.  Diagrammatically, the successful instance of the test  is represented by the network  
\begin{align}\label{rhotest}
 \Qcircuit @C=1em @R=0.2em @!R
{   &\multiprepareC{1}{\rho}  &  \qw \poloFantasmaCn{A}    & \qw &   \qw  &    \qw  &   \multimeasureD{1}{   P_{\rm yes}  } \\
&\pureghost{\rho}   &    \qw \poloFantasmaCn{B}   &\qw &  &   \qw \poloFantasmaCn{A'}   &   \ghost{    P_{\rm yes}  } }
 \quad ,
 \end{align}
 whose Choi operator is given by 
 \begin{align*}  T_{\rm yes}  &:=   \rho  *  P^T_{\rm yes}  \\
   &  =   \rho/d_A \, ,
   \end{align*} 
   (with a slight abuse of notation, in the second equality we regard $\rho$ as an operator on $A' B$, instead of $AB$).
 Hence, the probability that the channel passes the test is 
 \begin{align*}
 p  &   =      T_{\rm yes} *    C    \\
 &   =  \frac{ \Tr\left[  \rho \, C^T\right]}{d_A} \, ,     
 \end{align*}
 where $C$ is the Choi operator of $\map C$. 
 K\"onig, Renner, and Schaffner showed that the maximum probability over all possible channels is 
 \begin{align}p_{\max}   =    \frac{2^{-  H_{\min}  (A|B)_\rho} }{d_A}  \, .\end{align} \label{entropystates}
We now extend the  notion of conditional min-entropy   from states to  networks with a definite causal structure.   This can be done in two slightly different ways, illustrated in the following subsections.

\subsection{The conditional min-entropy of a quantum causal network}

 The first way to generalize the conditional min-entropy from states is to regard $H_{\min} (A|B)_\rho$ as a measure of the correlations that can be extracted from the state $\rho_{AB}$ by acting on system $B$ alone.     A natural generalization to the network scenario arises if we  consider a quantum network of the form
 \begin{align}\label{minentropynetwork}
\Qcircuit @C=1em @R=0.2em @!R
{   
     & & \pureghost{\map D_1}  &   \qw &  \qw &\qw &\qw & \ghost{\map D_2}  &\qw  &  \qw &  \dots   &&  \qw & \ghost{\map D_N}    & &  \\
      & & \pureghost{\map D_1}  &    &  & & & \pureghost{\map D_2}  & &  &    &&  &    & &  \\
       & \qw \poloFantasmaCn{B_{1}^{\rm in}}  &  \multigate{-2}{\map D_1}   &  \qw \poloFantasmaCn{B^{\rm out}_1}    &\qw &    & \qw\poloFantasmaCn{B_2^{\rm in}}  &  \multigate{-2}{\map D_2}   & \qw \poloFantasmaCn{B_2^{\rm out}}   &  \qw  &\dots  &&\qw   \poloFantasmaCn{B_N^{\rm in}} &  \multigate{-2}{\map D_N}  &  \qw  \poloFantasmaCn{B_N^{\rm out}}   &\qw    }
 \quad , 
\end{align}
and ask how much correlation can be generated by interacting with the network in  the first $N-1$ time steps.  To generate the correlations, we can connect the network (\ref{minentropynetwork}) with a second network that processes  all input/output systems  before $  B_N^{\rm out}$. Graphically, the second network can be described as
\begin{align}\label{scorer}
\Qcircuit @C=1em @R=0.2em @!R
{     &  \multiprepareC{2}{\sigma}   &  \qw \poloFantasmaCn{B^{\rm in}_1}    &\qw &    & \qw\poloFantasmaCn{B_1^{\rm out}}  &  \multigate{2}{\map E_1}   & \qw \poloFantasmaCn{B_2^{\rm in}}   &  \qw  &\dots  &&\qw   \poloFantasmaCn{B_{N-1}^{\rm in}} &  \multigate{2}{\map E_{N-1}}  &  \qw  \poloFantasmaCn{B_N^{\rm in}}   &\qw \\
   &  \pureghost{\sigma}  &    &   & & & \pureghost{\map E_1}  &  &   &    &&   & \pureghost{\map C_{E-1}}    &    & \\ 
     &  \pureghost{\sigma}  &   \qw &  \qw &\qw &\qw & \ghost{\map E_1}  &\qw  &  \qw &  \dots   &&  \qw & \ghost{\map C_{E-1}}    &  \qw  \poloFantasmaCn{B_N^{\rm out'}}   &\qw    } \, ,
\end{align} 
where $B_N^{\rm out'}$ is a quantum system of the same dimension as $B_N^{\rm out}$. 
When the two networks are connected, they generate  the bipartite state
\begin{align}\label{state}
 \Qcircuit @C=1em @R=0.2em @!R
{   
     & & \pureghost{\map D_1}   &  \qw &\qw &\qw & \ghost{\map D_2}  &\qw  &  \qw &  \dots  & & \qw &\qw&  \qw  & \ghost{\map D_N}    & &  \\
      & & \pureghost{\map D_1}      &  & & & \pureghost{\map D_2}  & &  &   & &&  &  &  & &  \\
        \multiprepareC{2}{\sigma}  & \qw \poloFantasmaCn{B_{1}^{\rm in}}  &  \multigate{-2}{\map D_1}   &  \qw \poloFantasmaCn{B^{\rm out}_1}     & \multigate{2}{\map E_1}   & \qw\poloFantasmaCn{B_2^{\rm in}}  &  \multigate{-2}{\map D_2}   & \qw \poloFantasmaCn{B_2^{\rm out}}   &  \qw  &\dots   & &  \qw  \poloFantasmaCn{B_{N-1}^{\rm out}}    &  \multigate{2}{\map E_{N-1}}&\qw   \poloFantasmaCn{B_N^{\rm in}} &  \multigate{-2}{\map D_N}  &  \qw  \poloFantasmaCn{B_N^{\rm out}}   &\qw    \\
          \pureghost{\sigma}  & &  &     & \pureghost{\map E_1}   &     &   &    &  &   & &    &\pureghost{\map E_{N-1}}&  &   &   &   \\
               \pureghost{\sigma}  & \qw   & \qw    &  \qw     & \ghost{\map E_1}   & \qw   & \qw  & \qw    &  \qw  &\dots   & &  \qw     &  \ghost{\map E_{N-1}}&\qw     \poloFantasmaCn{B_N^{\rm out'}} & \qw  &  \qw    &\qw    \\ }
 \quad .
\end{align}
A measure of the correlations generated by the interaction of the two networks  is then provided by the fidelity between the  state  (\ref{state}) and the maximally entangled state. 
Explicitly, the fidelity is given by
\begin{align} 
F  :  &  =    \<   \Phi|  \,   \left(   \sigma     *  D_1   *  E_1* \cdots   *  D_{N-1}  * E_{N-1}   *  D_N  \right)  \,  |\Phi\>  \nonumber \\
  &  =   \frac{ \Tr  \left[  D \,  E^T  \right] }{d_{B^{\rm out}_N}}\, , \label{Fidelity}
\end{align}
with 
\[D:  =  D_1*  \cdots * D_N \qquad {\rm  and } \qquad E :  =   \sigma *  E_1 *\cdots *  E_{N-1}    \]  
 (with a little abuse of notation, in the second equality we regard $E$ as an operator on $\spc H_{B_1^{\rm in}}  \otimes \spc H_{B_1^{\rm out}}  \otimes \cdots  \otimes \spc H_{B_N^{\rm in}} \otimes \spc H_{B_N^{\rm out }} $ instead of $\spc H_{B_1^{\rm in}}  \otimes \spc H_{B_1^{\rm out}}  \otimes \cdots  \otimes \spc H_{B_N^{\rm in}} \otimes \spc H_{B_N^{\rm out \, \prime }} $).

The maximum of the fidelity  over all networks of the form (\ref{scorer})  can be computed via Theorem \ref{dual2}, which yields the expression 
  \begin{align} \label{Fmax} F_{\max}   =   \frac{ \min_{\Gamma_{t_1\dots  t_{N-1}}    }  \min    \left\{  \lambda    \in \R ~\left|~   \lambda  \left( I_{t_N}  \otimes  \Gamma_{t_1\dots t_{N-1}}   \right)   \ge R  \right\}  \right.   }{ d_{B^{\rm out}_{N}}}  \, ,
   \end{align}
where    $\Gamma_{t_1  \dots  t_{N-1}}$ is a generic element of $ \Comb\left(  B_1^{\rm in} \to  B_1^{\rm out} , \, \dots \, ,  \,    B_{N-1}^{\rm in} \to  B_{N-1}^{\rm out} \right)$  and  $I_{t_N}   :  =  I_{B_N^{\rm out}}\otimes I_{B_N^{\rm in}}$.   

Eq. (\ref{Fmax}) motivates the following 
\begin{defi}\label{def:entromino}
Let $D  \in  \Comb (  B_1^{\rm in}  \to  B_1^{\rm out} \, , \dots \, , B_N^{\rm in}  \to B_N^{\rm out}  )$ be a quantum comb and let $t_j   :=     B_j^{\rm in}  \to  B_j^{\rm out} $ be the type  corresponding to the $j$-th time step.  
The \emph{network min-entropy}  of the  $N$-th time step, conditionally on the first $N-1$ time steps  is the quantity
\begin{align}
\nonumber   & H_{\rm min}  (  t_N  \, | \,    t_1\cdots  t_{N-1})_D \\
 \label{networkminentropy}& \quad :    =  -  \log  \left[    \min_{\Gamma_{t_1\dots  t_{N-1}}    }  \min    \left\{  \lambda    \in \R ~\left|~   \lambda  \left( I_{t_N}  \otimes  \Gamma_{t_1\dots t_{N-1}}   \right)   \ge D  \right\}  \right.      \right]  \, ,  \qquad \qquad \qquad 
\end{align}
where the first minimum is over the elements of $ \Comb\left(  B_1^{\rm in} \to  B_1^{\rm out} , \, \dots \, ,  \,    B_{N-1}^{\rm in} \to  B_{N-1}^{\rm out} \right)$  and  $I_{t_N}   :  =  I_{B_N^{\rm out}}\otimes I_{B_N^{\rm in}}$. 
\end{defi}



The above definition is a compelling generalization of the conditional min-entropy for states.   First of all, it comes with a natural operational interpretation, as the maximum  amount of correlations between the last output of the network and all the system involved in the previous history.    Moreover, the conditional min-entropy of quantum networks is consistent with the conditional min-entropy of quantum states: Concretely, one can interpret the conditional min-entropy   (\ref{networkminentropy}) as the maximum conditional min-entropy of the output  state of the network, conditionally on an external reference system generated through the intermediate time steps.    This interpretation is based on the following
\begin{prop}\label{entropymotivation}
For a causal network with Choi operator $D  \in  \Comb (  B_1^{\rm in}  \to  B_1^{\rm out} \, , \dots \, , B_N^{\rm in}  \to B_N^{\rm out}  )$,  the min-entropy   $H_{\rm min}  (  t_N  \, | \,    t_1\cdots  t_{N-1})_D$  is equal to the min-entropy of the output state $\rho \in   \St \left(  \mathcal{H}_{B_N^{\rm out}}  \otimes  \mathcal{H}_{B_N^{out  \, \prime}} \right)$ in  Eq. (\ref{state}) maximized over the input state $\sigma$ and over  the sequence of intermediate operations $\map E_1,\dots,  \map E_{N-1}$.
\end{prop}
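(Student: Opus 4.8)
The plan is to unfold both sides of the claimed identity into semidefinite-programming/max-relative-entropy form and show they coincide, using the definition of the conditional min-entropy of states (Eq.~(\ref{minentropy})) together with the operational characterization of $H_{\min}(A|B)_\rho$ via the fidelity (the result of K\"onig--Renner--Schaffner, $p_{\max} = 2^{-H_{\min}(A|B)_\rho}/d_A$). First I would fix a network $D \in \Comb(B_1^{\rm in}\to B_1^{\rm out},\dots,B_N^{\rm in}\to B_N^{\rm out})$ and, for a given choice of input state $\sigma$ and intermediate channels $\map E_1,\dots,\map E_{N-1}$, write the output state $\rho$ of Eq.~(\ref{state}) explicitly as the link product $\rho = \sigma * D_1 * E_1 * \cdots * D_N$ living on $\spc H_{B_N^{\rm out}}\otimes \spc H_{B_N^{\rm out\,\prime}}$, where $\spc H_{B_N^{\rm out\,\prime}}$ plays the role of the reference system $A'$ and $\spc H_{B_N^{\rm out}}$ the role of $A$. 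The key bookkeeping observation is that $\rho = D * E$ with $E := \sigma * E_1 * \cdots * E_{N-1} \in \Comb(B_1^{\rm in}\to B_1^{\rm out},\dots,B_{N-1}^{\rm in}\to B_{N-1}^{\rm out})$, so that $E$ ranges precisely over all $(N-1)$-combs of the required type as $\sigma$ and the $\map E_j$ vary, by the characterization of combs in Eq.~(\ref{tr}).

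Next I would apply the KRS operational identity to the output state $\rho$: the maximum fidelity with the maximally entangled state $|\Phi\rangle_{B_N^{\rm out} B_N^{\rm out\,\prime}}$, over all channels acting on the reference factor, equals $2^{-H_{\min}(B_N^{\rm out}|B_N^{\rm out\,\prime})_\rho}/d_{B_N^{\rm out}}$. But in the present setup the ``channel acting on the reference'' is already being optimized as part of the last tooth context; more directly, I would observe that by Eq.~(\ref{Fidelity}) the fidelity $F$ for fixed $\sigma,\map E_j$ is exactly $\Tr[D\,E^T]/d_{B_N^{\rm out}} = (D * E)\,\text{-type pairing}$, and that $F_{\max}$ as computed in Eq.~(\ref{Fmax}) via Theorem~\ref{dual2} is the SDP value $\tfrac{1}{d_{B_N^{\rm out}}}\min_{\Gamma}\min\{\lambda : \lambda(I_{t_N}\otimes\Gamma)\ge D\}$, i.e.\ $2^{H_{\min}(t_N|t_1\cdots t_{N-1})_D}/d_{B_N^{\rm out}}$ by Definition~\ref{def:entromino}. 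On the other side, maximizing $F$ by first optimizing the last channel inside the tester and then the rest exhibits $F_{\max}$ as $\max_{\sigma,\map E_j} \big(2^{-H_{\min}(B_N^{\rm out}|B_N^{\rm out\,\prime})_\rho}/d_{B_N^{\rm out}}\big)$. Cancelling the common factor $1/d_{B_N^{\rm out}}$ and taking $-\log$ of both expressions yields exactly
\[
H_{\min}(t_N \,|\, t_1\cdots t_{N-1})_D \;=\; \max_{\sigma,\,\map E_1,\dots,\map E_{N-1}} H_{\min}\!\left(B_N^{\rm out}\,\big|\,B_N^{\rm out\,\prime}\right)_\rho \, .
\]

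The step I expect to be the main obstacle is the precise matching of the two optimization domains: on the network side, Theorem~\ref{dual2} gives a minimization over dual combs $\Gamma_{t_1\cdots t_{N-1}}$ that are themselves full $(N-1)$-combs, whereas on the state side one optimizes separately over the input $\sigma$, the channels $\map E_j$, \emph{and} the final decoding channel feeding into the POVM $P_{\rm yes}=|\Phi\rangle\langle\Phi|$. One must check that the minimax value does not change when the final decoding channel is absorbed either into the state-side reference processing or into the network-side dual comb — in other words, that both sides see the same feasible set once one accounts for the last tooth of the comb $E$ playing the role of the decoding channel. This amounts to verifying that $\max_{\sigma,\map E_1,\dots,\map E_{N-1}}\max_{\map C}\langle|\Phi\rangle\langle\Phi| * \text{(state)}\rangle$ can be reorganized, via associativity and commutativity of the link product, as a single optimization over $(N-1)$-combs $E$, so that the KRS formula applied pointwise and then maximized coincides with the single SDP~(\ref{Fmax}). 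Once this reorganization is in place, the equality of the entropies is immediate; the remainder is the routine link-product algebra already used to derive Eq.~(\ref{Fidelity}).
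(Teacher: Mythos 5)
Your proposal follows essentially the same route as the paper's own proof: both apply the K\"onig--Renner--Schaffner fidelity characterization to the output state, absorb the recovery channel into the last tooth of the testing comb via associativity of the link product (with the identity channel supplying the converse inclusion, so the feasible set of $(N-1)$-step testing networks is unchanged), and then identify the resulting optimization with the SDP value of Eq.~(\ref{Fmax}) provided by Theorem~\ref{dual2}. The only blemish is a sign slip --- that SDP value equals $2^{-H_{\rm min}(t_N|t_1\cdots t_{N-1})_D}/d_{B_N^{\rm out}}$, not $2^{+H_{\rm min}}/d_{B_N^{\rm out}}$ --- which, once corrected, merely reproduces the same $\max$-versus-$\min$ bookkeeping under the final $-\log$ that is already present in the paper's own statement and appendix.
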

The proof is given in \ref{app:entropyproof}.  We expect that the network min-entropy defined in Eq. (\ref{networkminentropy})    will  play a  role in the study non-Markovian quantum evolutions,  along the lines of the  entropic  characterization of Markovianity  provided in Refs. \cite{buscemi2016equivalence,bae2016operational}. 
  Intuitively, the idea is that one can evaluate how the correlations build up from one step to the next and use this information to infer properties of the internal memory used by the network.

   \medskip 
 
\subsection{The conditional min-entropy of a test}  
An alternative  way to extend the notion of conditional min-entropy is to regard $H_{\min}  (A|B)_\rho$ as a quantity associated to a test---specifically, the test depicted    in Eq. (\ref{rhotest}).   From this point of view, it is natural to extend the definition to tests consisting of multiple time steps, as follows
  \begin{defi}\label{def:minentrop}
Let $T_{\rm yes}$ be a positive operator associated to a test of the form
  \begin{align}\label{testerfinal}
   \Qcircuit @C=1em @R=0.2em @!R
{    & \pureghost{\rho}  &   \qw  &\qw &\qw &\qw & \ghost{\map D_1}  &  \qw &\qw &  \dots   &&  \qw  & \ghost{\map D_{N-1}}  &\qw   & \qw   &  \qw  &    \qw & \ghost{P_{\rm yes}}\\
 & \pureghost{\rho}  &     & & & & \pureghost{\map D_1}  &   & &    &&    & \pureghost{\map D_{N-1}}   & &    &   &     & \pureghost{  P_{\rm yes}}\\
      &  \multiprepareC{-2}{\rho} &   \qw \poloFantasmaCn{A_{1}^{\rm in}}   & \qw     &     & \qw \poloFantasmaCn{A_{1}^{\rm out}}  &   \multigate{-2}{\map D_1}   &  \qw \poloFantasmaCn{A_{2}^{\rm in}}     &    \qw  &\dots  &  &  \qw \poloFantasmaCn{A_{N-1}^{\rm out}}   &  \multigate{-2}{\map D_{N-1}}  &    \qw \poloFantasmaCn{A_{N}^{\rm in}}     & \qw &        & \qw \poloFantasmaCn{A_{N}^{\rm  out}}    &     \multimeasureD{-2}{  P_{\rm yes  } } } 
\quad . 
 \end{align}  
 The conditional min-entropy of the output system $A^{\rm out}_N$, conditionally on all the previous systems is 
\begin{align}\label{abcd}
H_{\rm min}  (  A^{\rm out}_N  \, | \,    A^{\rm in}_1  A^{\rm out}_1 A^{\rm in}_2  \dots    A^{\rm in}_N  )_{T_{\rm yes}}     :    =  -  \log  \left[    \min_{  \Gamma^{(N)}      }  \min    \left\{  \lambda    \in \R ~\left|~   \lambda  \left( I_{A^{\rm out}_N}  \otimes  \Gamma^{(N)}\right)   \ge T_{\rm yes}  \right\}  \right.      \right]  \, ,
\end{align}
where $  \Gamma^{(N)}$ is a generic element of $ \Comb  \left(  I \to  A_1^{\rm in} , \,    A^{\rm out}_1  \to A^{\rm in}_2   \dots \, ,  \,    A_{N-1}^{\rm out} \to  A_{N}^{\rm in} \right)$, corresponding to a network of the form 
\begin{align} 
\Qcircuit @C=1em @R=0.2em @!R
{    & \pureghost{\sigma}  &   \qw  &\qw &\qw &\qw & \ghost{\map E_1}  &  \qw &\qw &  \dots   &&  \qw  & \ghost{\map E_{N-1}}  &  & \\
 & \pureghost{\sigma}  &     & & & & \pureghost{\map E_1}  &   & &    &&    & \pureghost{\map E_{N-1}}   &  &\\
      &  \multiprepareC{-2}{\sigma} &   \qw \poloFantasmaCn{A_{1}^{\rm in}}   & \qw     &     & \qw \poloFantasmaCn{A_{1}^{\rm out}}  &   \multigate{-2}{\map E_1}   &  \qw \poloFantasmaCn{A_{2}^{\rm in}}     &    \qw  &\dots  &  &  \qw \poloFantasmaCn{A_{N-1}^{\rm out}}   &  \multigate{-2}{\map E_{N-1}}  &    \qw \poloFantasmaCn{A_{N}^{\rm in}}     & \qw  } 
 \quad . 
 \end{align}  
\end{defi}
The conditional min-entropy for ``states"  [or, more precisely, for tests of the form (\ref{rhotest})] can be retrieved as a special case of this definition, by setting $N=1$,   $A^{\rm in}_1  =  B$, $A^{\rm out}_1  =  A  $, and $T_{\rm yes}  =  \rho/d_A$. The appeal of the above  definition is that  it extends the definition of min-entropy to a class of  probabilistic operations.

The conditional min-entropy can be interpreted operationally as the (negative logarithm of the) maximum probability that a quantum causal network passes  the test $T_{\rm yes}$. This interpretation follows from  Theorem \ref{dual2}, which yields  the following 
\begin{cor} The maximum probability that a quantum causal network  of the form 
  \begin{align*}
  \Qcircuit @C=1em @R=0.2em @!R
{     & \qw \poloFantasmaCn{A_{1}^{\rm in}}  &  \multigate{1}{\map C_1}   &  \qw \poloFantasmaCn{A^{\rm out}_1}    &\qw &    & \qw\poloFantasmaCn{A_2^{\rm in}}  &  \multigate{1}{\map C_2}   & \qw \poloFantasmaCn{A_2^{\rm out}}   &  \qw  &\dots  &&\qw   \poloFantasmaCn{A_N^{\rm in}} &  \multigate{1}{\map C_N}  &  \qw  \poloFantasmaCn{A_N^{\rm out}}   &\qw \\
     & & \pureghost{\map C_1}  &   \qw &  \qw &\qw &\qw & \ghost{\map C_2}  &\qw  &  \qw &  \dots   &&  \qw & \ghost{\map C_N}    & &    }
 \quad .
\end{align*}     passes the test with operator $T_{\rm yes}$ is 
 \begin{align*}
 p_{\max}   
&  =    2^{- H_{\rm min}  (  A^{\rm out}_N  \, | \,    A^{\rm in}_1  A^{\rm out}_1 A^{\rm in}_2  \dots    A^{\rm in}_N  )_{T_{\rm yes}}     } \, .
\end{align*} 
  \end{cor}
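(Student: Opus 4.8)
The plan is to deduce the corollary directly from Theorem \ref{dual2}, specialized to the performance operator $\Omega = T_{\rm yes}$. First I would note that, since the test is passed exactly when the outcome ``$\rm yes$'' occurs, the probability of success on a network with Choi operator $C$ is $p = \langle T_{\rm yes}, C\rangle$, and therefore the optimal success probability over all causal networks of the form (\ref{primalcircuit}) is $p_{\max} = \max_{C\in\Comb}\langle T_{\rm yes}, C\rangle$. This is precisely $\omega_{\max}$ of Theorem \ref{dual2} with $\Omega = T_{\rm yes}$, so that
\begin{align*}
p_{\max} = \min_{\Gamma\in\Dual\Comb}\min\{\lambda\in\R \mid \lambda\Gamma\ge T_{\rm yes}\} \, .
\end{align*}

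Next I would unpack the set $\Dual\Comb$ by means of the constraints in Eq. (\ref{trgamma1}). The first constraint forces every dual comb to factorize as $\Gamma = I_{A_N^{\rm out}}\otimes\Gamma^{(N)}$, and the remaining constraints say exactly that $\Gamma^{(N)}$ is a quantum comb of type $(I\to A_1^{\rm in},\, A_1^{\rm out}\to A_2^{\rm in},\,\dots,\,A_{N-1}^{\rm out}\to A_N^{\rm in})$, i.e. the Choi operator of a testing network $(\sigma,\map E_1,\dots,\map E_{N-1})$ with the final trace removed. Conversely, tensoring any such $\Gamma^{(N)}$ with $I_{A_N^{\rm out}}$ produces a dual comb. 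Hence the minimization over $\Gamma\in\Dual\Comb$ is the same as the minimization over $\Gamma^{(N)}\in\Comb(I\to A_1^{\rm in},\dots,A_{N-1}^{\rm out}\to A_N^{\rm in})$ that appears in Definition \ref{def:minentrop}.

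Substituting $\Gamma = I_{A_N^{\rm out}}\otimes\Gamma^{(N)}$ into the expression for $p_{\max}$ and comparing with Eq. (\ref{abcd}), I would conclude
\begin{align*}
p_{\max} = \min_{\Gamma^{(N)}}\min\{\lambda\in\R\mid \lambda(I_{A_N^{\rm out}}\otimes\Gamma^{(N)})\ge T_{\rm yes}\} = 2^{-H_{\rm min}(A_N^{\rm out}\mid A_1^{\rm in}A_1^{\rm out}A_2^{\rm in}\cdots A_N^{\rm in})_{T_{\rm yes}}} \, ,
\end{align*}
which is the claimed identity. The only point requiring a little care is the equivalence of the dual-comb constraints (\ref{trgamma1}) with the comb constraints (\ref{tr}) for the ``shifted'' sequence of types, but this is a matter of matching the two systems of linear equations after relabelling and carries no real difficulty. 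In fact the whole argument is essentially bookkeeping: the substantive content is already contained in Theorem \ref{dual2}, so I expect no genuine obstacle — the main task is simply to state the identification of $\Dual\Comb$ with the optimization domain of the network conditional min-entropy cleanly.
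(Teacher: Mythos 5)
Your proposal is correct and follows essentially the same route as the paper, which likewise obtains this corollary by specializing Theorem \ref{dual2} to $\Omega = T_{\rm yes}$ and matching the dual-comb constraints (\ref{trgamma1}) with the optimization domain in Definition \ref{def:minentrop}. The only cosmetic point is that the Born rule gives $p = \Tr[T_{\rm yes}C^T]$ rather than $\langle T_{\rm yes},C\rangle$, but since the set of combs is closed under transposition this does not affect the maximum, as the paper itself notes.
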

This result secures an operational interpretation for the conditional min-entropy defined in Eq. (\ref{abcd}).  Quite intuitively, the conditional min-entropy of the test  is a measure of how of the first $N-1$ time steps can be used to predict the outcome of the measurement performed in the last step.

\section{The max relative entropy of  causal networks}\label{sec:maxcausal}  
We conclude our study of causal networks with a result relating  the max relative entropy of quantum networks to the max relative entropy of quantum states: 
\begin{prop}\label{prop:maxrelativecausal}
Let $C^{(0)}$ and $C^{(1)}$ be the Choi operators of two  networks  
 \begin{align}
 \Qcircuit @C=1em @R=0.2em @!R
{     & \qw \poloFantasmaCn{A_{1}^{\rm in}}  &  \multigate{1}{\map C^{(x)}_1}   &  \qw \poloFantasmaCn{A^{\rm out}_1}    &\qw &    & \qw\poloFantasmaCn{A_2^{\rm in}}  &  \multigate{1}{\map C^{(x)}_2}   & \qw \poloFantasmaCn{A_2^{\rm out}}   &  \qw  &\dots  &&\qw   \poloFantasmaCn{A_N^{\rm in}} &  \multigate{1}{\map C^{(x)}_N}  &  \qw  \poloFantasmaCn{A_N^{\rm out}}   &\qw \\
     & & \pureghost{\map C^{(x)}_1}  &   \qw &  \qw &\qw &\qw & \ghost{\map C^{(x)}_2}  &\qw  &  \qw &  \dots   &&  \qw & \ghost{\map C^{(x)}_N}    & &    }\, ,\qquad x  =  0,1 \, ,
\end{align} 
and let $E$ be the Choi operator of a network of the form
\begin{align}\label{scorer1}
\Qcircuit @C=1em @R=0.2em @!R
{    &  \pureghost{\sigma}  &   \qw &  \qw &\qw &\qw & \ghost{\map E_1}  &\qw  &  \qw &  \dots   &&  \qw & \ghost{\map C_{E-1}}    &  \qw  \poloFantasmaCn{  S}   &\qw \\
&  \pureghost{\sigma}  &    &   & & & \pureghost{\map E_1}  &  &   &    &&   & \pureghost{\map C_{E-1}}    &    & \\ 
       &  \multiprepareC{-2}{\sigma}   &  \qw \poloFantasmaCn{A^{\rm in}_1}    &\qw &    & \qw\poloFantasmaCn{A_1^{\rm out}}  &  \multigate{-2}{\map E_1}   & \qw \poloFantasmaCn{A_2^{\rm in}}   &  \qw  &\dots  &&\qw   \poloFantasmaCn{A_{N-1}^{\rm in}} &  \multigate{-2}{\map E_{N-1}}  &  \qw  \poloFantasmaCn{A_N^{\rm in}}   &\qw   } \, ,
\end{align} 
where $S$ is a generic quantum system.    Then, one has  
\begin{align}\label{dmaxstates}
D_{\max}   (  C^{(0)} \, \| \,  C^{(1)})   =  \max_E   \, D_{\max}   (     C^{(0)}  *E \, \| \,  C^{(1)}  * E) \, . 
\end{align}
where the maximum runs over all networks of the form (\ref{scorer1}), with arbitrary system $S$. 
\end{prop}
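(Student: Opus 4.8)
The plan is to prove the two inequalities in~(\ref{dmaxstates}) separately. The easy one, in which the maximum lower bounds $D_{\max}(C^{(0)}\,\|\,C^{(1)})$, follows from the monotonicity of the max relative entropy under the link product with a fixed positive operator: for a network $E$ of the form~(\ref{scorer1}), the map $X \mapsto X * E$ is linear and sends positive operators to positive operators (the defining property of the link product recalled above), so if $w\, C^{(0)} \le C^{(1)}$ then $(C^{(1)} - w\, C^{(0)}) * E \ge 0$, that is, $w\,(C^{(0)} * E) \le C^{(1)} * E$. Choosing $w = 2^{-D_{\max}(C^{(0)}\,\|\,C^{(1)})}$ gives $D_{\max}(C^{(0)} * E \,\|\, C^{(1)} * E) \le D_{\max}(C^{(0)}\,\|\,C^{(1)})$ for every such $E$, and hence $\max_E D_{\max}(C^{(0)} * E \,\|\, C^{(1)} * E) \le D_{\max}(C^{(0)}\,\|\,C^{(1)})$.

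For the non-trivial inequality, the strategy is to exhibit a single network $E_{*}$ of the form~(\ref{scorer1}) retaining the full distinguishability of $C^{(0)}$ and $C^{(1)}$, namely a maximally entangling pre-processing. I would take $\sigma$ to be a maximally entangled state on $\spc H_{A_1^{\rm in}}\otimes\spc H_{R_1}$, keeping $R_1$ in the memory, and let each intermediate channel $\map E_j$ forward the previously collected memory untouched, relabel the incoming system $A_j^{\rm out}$ as a new register $R_j'$, and prepare a fresh maximally entangled pair on $\spc H_{A_{j+1}^{\rm in}}\otimes\spc H_{R_{j+1}}$, sending $A_{j+1}^{\rm in}$ out and keeping $R_{j+1}$; the final reference register is then $S = R_1 R_1' \cdots R_{N-1} R_{N-1}' R_N$. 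Each $\map E_j$ is an isometric channel and the memory flows strictly from left to right, so $E_{*}$ is a legitimate network of the required form. With this choice one verifies that $C^{(x)} * E_{*} = \kappa\, C^{(x)}$ for $x=0,1$, where $\kappa>0$ does not depend on $x$ and where $\spc H_S \otimes \spc H_{A_N^{\rm out}}$ is identified with $\bigotimes_{j=1}^N \bigl(\spc H_j^{\rm out}\otimes\spc H_j^{\rm in}\bigr)$ via $R_j \leftrightarrow \spc H_{A_j^{\rm in}}$ and $R_j' \leftrightarrow \spc H_{A_j^{\rm out}}$: linking $C^{(x)}$ with a half of a maximally entangled state at an input slot teleports that input onto the corresponding reference, linking with the identity channel $A_j^{\rm out}\to R_j'$ merely relabels the output, and doing this at every slot leaves $C^{(x)}$ unchanged up to the global normalization $\kappa$ coming from the entangled states. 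Since $D_{\max}$ is invariant under a common positive rescaling of its two arguments, $D_{\max}(C^{(0)} * E_{*} \,\|\, C^{(1)} * E_{*}) = D_{\max}(C^{(0)}\,\|\,C^{(1)})$, so $\max_E D_{\max}(C^{(0)} * E \,\|\, C^{(1)} * E) \ge D_{\max}(C^{(0)}\,\|\,C^{(1)})$; together with the easy inequality this proves~(\ref{dmaxstates}), the maximum being attained at $E_{*}$.

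The step I expect to require the most care is the identity $C^{(x)} * E_{*} = \kappa\, C^{(x)}$: one must track which Hilbert space each partial transpose inside the nested link products acts on, check the consistency of the relabelings, and verify the dimension count that makes the identification of $\spc H_S\otimes\spc H_{A_N^{\rm out}}$ with the comb space legitimate. A more structural alternative, which sidesteps the explicit construction, is to note that the primal feasible set of the relevant semidefinite program is $\Comb$ and its dual convex set is $\Dual\Comb$, which contains a full-rank operator, so that Proposition~\ref{prop:maxoutput} gives $D_{\max}(C^{(0)}\,\|\,C^{(1)}) = \max_{\Gamma\in\Dual\Comb} D_{\max}\bigl(\sqrt{\Gamma}\, C^{(0)}\sqrt{\Gamma}\,\|\,\sqrt{\Gamma}\, C^{(1)}\sqrt{\Gamma}\bigr)$; one then identifies $\sqrt{\Gamma}\, C^{(x)}\sqrt{\Gamma}$ with $C^{(x)} * E$ for the network $E$ obtained by dilating the dual comb $\Gamma$ into a circuit whose dilating ancilla is kept as the output $S$, and conversely every $E$ of the form~(\ref{scorer1}) yields such a $\Gamma$. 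Either way, the content of the statement is that the worst-case distinguishability of the two networks is already witnessed by feeding them maximally entangled inputs.
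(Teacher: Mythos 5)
Your proof is correct, and it reaches the result by a route that is genuinely, if subtly, different from the paper's. The paper's proof works entirely through Proposition \ref{prop:maxoutput}: it takes an arbitrary dual comb $\Gamma=\sum_i g_i|\phi_i\>\<\phi_i|$, purifies it into the rank-one network $E=|\Psi\>\<\Psi|$ with $|\Psi\>=\sum_i\sqrt{g_i}\,|\phi_i\>|\overline{\phi}_i\>$, verifies $C^{(x)}*E=\sqrt{\Gamma}\,C^{(x)}\sqrt{\Gamma}$, and then uses the existence of a full-rank element of $\Dual\Comb$ to turn the inequality of Proposition \ref{prop:maxoutput} into an equality. You split the two inequalities: the upper bound comes from the linearity and positivity of $X\mapsto X*E$, which is a clean, self-contained argument and in fact makes explicit a step that the paper's appeal to Proposition \ref{prop:maxoutput} covers only for the $E$'s that arise as purifications of dual combs (for a general $E$ of the form (\ref{scorer1}) one really does need either your monotonicity observation or the identification of $C^{(x)}*E$ with $\sqrt{\Gamma}\,C^{(x)}\sqrt{\Gamma}$ for the reduced dual comb); the lower bound comes from one concrete network $E_*$. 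It is worth noticing that your $E_*$ is precisely the paper's construction specialized to the full-rank uniform dual comb $\Gamma_0=I/(d_{A_1^{\rm in}}\cdots d_{A_N^{\rm in}})$ of Eq.~(\ref{gamma0}): its purification is a tensor product of maximally entangled states, $\sqrt{\Gamma_0}\,C^{(x)}\sqrt{\Gamma_0}=\kappa\, C^{(x)}$ with $\kappa=\prod_j d_{A_j^{\rm in}}^{-1}$, and the scale invariance of $D_{\max}$ finishes the job, exactly as you say. So the two arguments coincide at the optimum; yours buys an explicit optimal probing strategy (maximally entangled inputs, outputs coherently stored in the reference) and a transparent monotonicity step, while the paper's buys the general correspondence between networks $E$ and dual combs $\Gamma$, which it then recycles verbatim for the non-causal case in Proposition \ref{prop:maxrelativenoncausal}. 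Your one flagged worry — bookkeeping of partial transposes in $C^{(x)}*E_*$ — is indeed the only delicate point, and it works out: the partial transpose of an unnormalized maximally entangled state is a swap, so linking transfers $C^{(x)}$ to the reference registers without any residual transposition.
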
  
The proof is provided in \ref{app:maxrelativecausal}. 
In words, Eq. (\ref{dmaxstates}) states that the max relative entropy between two quantum networks is equal to the max relative entropy between the output states one can generate from them.  Diagrammatically, the output states are 
\begin{align}\label{connected}
 \Qcircuit @C=1em @R=0.2em @!R
{ &    \multiprepareC{2}{\sigma}  & \qw  & \qw & \qw & \multigate{2}{\map E_1}  &\qw &  \qw &  \dots    &   & &  \qw & \multigate{2}{\map E_{N-1}}  &   \qw \poloFantasmaCn{S} &\qw   &\qw & \qw&  & & \\    
&    \pureghost{\sigma}  &   &  &  & \pureghost{\map E_1}  & &  &  \dots    &   & &   & \pureghost{\map E_{N-1}}  &  &     &  &    \\        
   &       \pureghost{\sigma}    &     \qw \poloFantasmaCn{A_{1}^{\rm in}}      &   \multigate{1}{\map C^{(x)}_1}   &   \qw \poloFantasmaCn{A_{1}^{\rm out}}    &    \ghost{\map E_1}    &    \qw \poloFantasmaCn{A_{2}^{\rm in}}     &\qw  & \dots  & &&     \qw \poloFantasmaCn{A_{N-1}^{\rm out}}      &  \ghost{\map E_{N-1}}  &    \qw \poloFantasmaCn{A_{N}^{\rm in}}       &     \multigate{1}{\map C^{(x)}_{N}}  &    \qw \poloFantasmaCn{A_{N}^{\rm out}}    &\qw     & \qquad   &  \qquad   \,   x =  0,1 .  \\
   & &  &   \pureghost{\map C^{(x)}_1}  &   \qw &  \qw &   \qw &  \qw &\dots    && &\qw & \qw & \qw & \ghost{\map C^{(x)}_N}    &  &&& &}  
\end{align} 
Proposition \ref{prop:maxrelativecausal} has an application to problems of hypothesis testing where the task is to distinguish between two quantum networks. Here one has access to an quantum network, that is promised to have quantum comb $C^{(0)}$ or $C^{(1)}$.     In order to determine which of these two hypotheses is correct, one has to interact with the network, by sending inputs to it and processing its outputs.  In the end, these operations will result in the preparation of a quantum state, as in Eq. (\ref{connected}).  At this point, the problem is to distinguish between two states $\rho^{(0)}$ and $\rho^{(1)}$ corresponding to the two hyoptheses.       One-shot    hypothesis testing of quantum states has been studied by Datta \emph{et al} in Ref. \cite{datta2013smooth}, where they provided bounds on the type II error probability in terms of the max relative entropy.  Proposition  \ref{prop:maxrelativecausal} then allows to relate the max relative entropy of the output states to the max relative entropy of the networks, opening a route to adapting the results of  \cite{datta2013smooth} to the study of hypothesis testing to the more general  scenario.

\section{Non-causal networks \label{non-causal}}
In the previous sections we restricted our focus to causal networks. We will  address  the  general scenario, concerning networks that are not compatible with any pre-defined causal order   \cite{hardy-2009-book,chiribella-2009-arxiv,oreshkov-2012-nature,chiribella-2012-pra,chiribella-dariano-2013-pra,colnagh-2012-pla,baumeler-2013-arxiv,araujo-2014-prl,baumeler-wolf-2016}.    Some of these networks arise when multiple quantum devices are connected in a way that is controlled by the state of a quantum system \cite{chiribella-2009-arxiv,colnagh-2012-pla}. Some other networks are not built from individual devices  \cite{oreshkov-2012-nature,chiribella-dariano-2013-pra} but may  possibly arise in exotic quantum gravity scenarios.  These generalized quantum networks are characterized by the way in which they interact with external quantum devices.  

\subsection{A bipartite example}
The characterization of the non-causal networks is not as simple as in the case of causally ordered networks.  We first  illustrate the idea   in a simple example, inspired by the  work  of Oreshkov, Costa, and Brukner \cite{oreshkov-2012-nature}.    
  Imagine two laboratories, $A$ and $B$, where two parties, Alice and Bob perform local experiments.    In each laboratory, ordinary quantum theory holds and, in particular, one can describe the time evolution by quantum channels. Specifically, let $\map A$ and $\map B$ be the quantum channels describing the evolution of the systems in laboratories $A$ and $B$, respectively.    Now, one can model the interactions between one laboratory and the other  by a generalized quantum network, which describes the background structure of spacetime.  
  
  Concretely, suppose that, at some earlier time, system $A_1^{\rm in}$ in the first laboratory has been prepared jointly with system $B_1^{\rm in}$ in the second laboratory, and that, at a later time, system $A_1^{\rm out}$ and system $B_1^{\rm out}$ are discarded.  Indulging into a  bit of science fiction, one could  imagine a scenario where systems $A_1^{\rm in}$ and $B_1^{\rm in}$ emerge from a wormhole at time $t_0$ and system $A_1^{\rm out}$ and $B_1^{\rm out}$  enter the same wormhole at time $t_1$.  Between times $t_0$ and $t_1$ the systems $A_1^{\rm in}$ and $B_1^{\rm in}$ can interact with the other systems in Alice's and Bob's laboratories, here denoted as $A^{\rm in}_2, A_2^{\rm out}$ and $B^{\rm in}_2, B^{\rm out}_2$, respectively. The interaction is controlled locally by Alice and Bob, who implement the channels $\map A$ and $\map B$, as illustrated in Figure \ref{brukner+}.  The connection of Alice's and Bob's laboratories through the background spacetime structure  can be described as a map  
 \begin{align}
 \map S  :     \map A \otimes \map B    \mapsto   \map S (\map A\otimes \map B) \, , 
 \end{align} 
 which transforms the quantum channels $\map A$ and $\map B$ into a new quantum channel $\map S (\map A\otimes  \map B)$.    
   Maps that transform channels into channels are known as  \emph{quantum supermaps}  \cite{chiribella-dariano-2008-epl,chiribella-dariano-2009-pra}.  The basic requirements for quantum supermaps  are linearity, complete positivity,  and normalization.   In this setting, linearity means that one has 
    \begin{align}
   \map S  \left(   \sum_{i}  \,   p_i  \,  \map A_i \otimes   \map B_i\right)   =  \sum_{i}  \,  p_i   \,  \map S  ( \map A_i\otimes \map B_i) \, ,
   \end{align}  
   for every choice of coefficients $\{ p_i\}$.    The standard motivation for linearity comes from the requirement that convex combinations of input channels   (generated by Alice and Bob by sharing random bits) be mapped into convex combinations of the corresponding outputs.  
   
\begin{figure}
\begin{center}
\hspace*{0cm}\includegraphics[scale=.5]{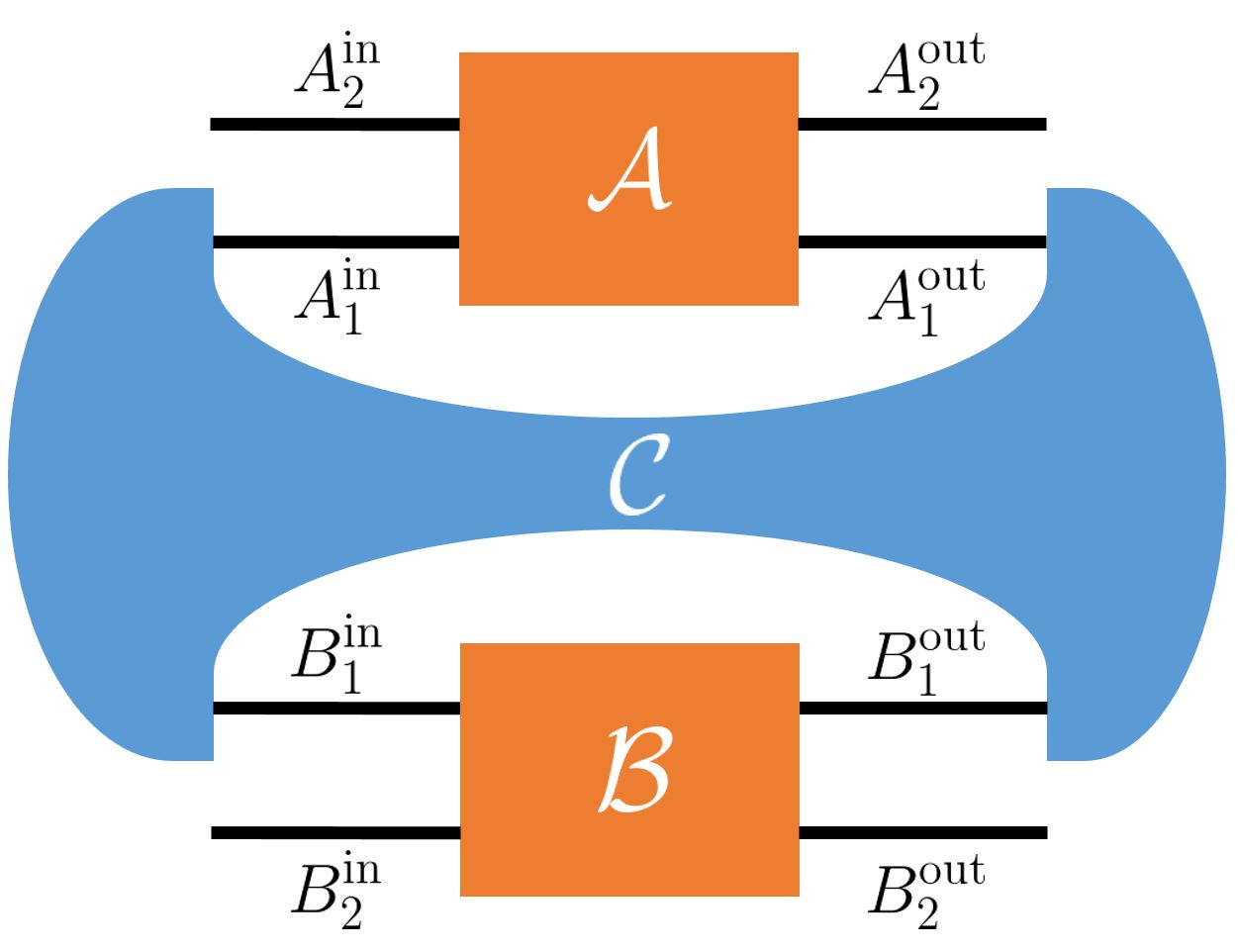}
\end{center}
\caption{The quantum channels $\map A$ and $\map B$ in Alice's and Bob's laboratory interact through a  quantum network $\map C$, describing the interactions mediated by the background spacetime. Here, $\map A$ ($\map B$) is a  bipartite channel transforming the input systems $A_1^{\rm in} A_2^{\rm in}$   ($B_1^{\rm in} B_2^{\rm in}$) into the output systems $A_1^{\rm out} A_2^{\rm out}$  ($B_1^{\rm out} B_2^{\rm out}$).  The connection between the channels take places only through the systems $A^{\rm in}_1,  A^{\rm out}_1,  B_1^{\rm in}$ and $B_1^{\rm out}$, while systems  $A^{\rm in}_2,  A^{\rm out}_2,  B_2^{\rm in}$ and $B_2^{\rm out}$ do not interact directly. }
\label{brukner+}
\end{figure}

Regarding complete positivity,  it can be motivated by the local form of the interactions.     Since the interaction between Alice's and Bob's laboratory takes place only through systems $A_1$ and $B_1$, it is natural to assume that the supermap $\map S$  acts non-trivially only on these systems, as  
 \begin{align}\label{localsupermap} 
 \map S  (  \map A\otimes \map B)    =     (\map I_{A_2\to  A_2}  \otimes \map C  \otimes \map I_{B_2\to B_2})  (  \map A  \otimes \map B) \, ,
 \end{align}
   where $\map I_{A_2\to  A_2}$   ($\map I_{B_2\to  B_2}$) is the \emph{identity supermap},  acting trivially on the channels with input $A_2$ ($B_2$) and output $A_2$  ($B_2$), and $\map C$ is a supermap that  annihilates   channels with input $A_1B_1$ and output $A_1B_1$.  Physically, the  map $\map C$    represents the   piece of spacetime connecting  $A^{\rm in}_1,  A^{\rm out}_1,  B_1^{\rm in}$ and $B_1^{\rm out}$,$A^{\rm in}_2,  A^{\rm out}_2,  B_2^{\rm in}$ and $B_2^{\rm out}$ with $A_1'$ and $B_1'$.   

\subsection{Choi operator formulation}    
Since all the maps  $\map A, \map B$, and $\map C$ are completely positive, one can represent them with Choi operators $A$,  $B$, and $C$,  respectively.  In terms of Choi operators, Eq. (\ref{localsupermap}) can be expressed as 
\begin{align}\label{localsupermapchoi}
C'   =\Tr_{  A_1^{\rm in},  A_1^{\rm out}, B_1^{\rm in}, B_1^{\rm out} }    \left [     (   I_{A_2^{\rm out}}  \otimes I_{A_2^{\rm in}}  \otimes  C  \otimes  I_{B_2^{\rm out}}  \otimes   I_{ B_2^{\rm in}})   (  A^T  \otimes B^T)  \right] \, ,
\end{align}
where  $C'$  the Choi operator of the channel $\map S (\map A\otimes \map B)$. 
 Oreshkov, Costa, and Brukner  refer to the Choi operator $C$ as a \emph{process matrix} \cite{oreshkov-2012-nature}.  For the supermaps that can be implemented  by connecting quantum devices in a fixed causal structure,   Choi operators $C$ are the same as the quantum combs considered in the previous sections.

Here the operator $C$ acts on the tensor product Hilbert space   $ \spc H_{A_1}^{\rm in} \otimes \spc H_{A_1}^{\rm out}  \otimes \spc H_{B_1}^{\rm in} \otimes \spc H_{B_1}^{\rm out}$.   In order to be the Choi operator of a valid quantum network, the operator $C$ must be positive semidefinite and satisfy a suitable normalization condition---specifically, $C$ should satisfy the condition 
\begin{align}\label{normalizationchoi}
\Tr_{   A_1^{\rm in},  A_1^{\rm out}, B_1^{\rm in}, B_1^{\rm out} }    \left [        C    ( \widetilde A  \otimes  \widetilde B)  \right]    =    1 
\end{align}
for every operators  $\widetilde A$ and $\widetilde B$ satisfying the conditions  
\begin{align}
\Tr_{A_1^{\rm out}}[\widetilde A]=   I_{A_1^{\rm in}}   \qquad {\rm and}    \qquad \Tr_{B_1^{\rm out}}[\widetilde B]=   I_{B_1^{\rm in}}  \, .
\end{align}
(See appendix  \ref{app:condition} for the derivation).  Physically, this means that  the non-causal network $\map C$ deterministically annihilates every pair of local channels $\widetilde A$ and $\widetilde B$, acting on systems $A_1^{\rm in}$, $A_1^{\rm out}$ and $B_1^{\rm in}$, $B_1^{\rm out}$, respectively. 

Equivalently, the valid networks  can be characterized as in the following: 
\begin{prop}\label{prop:N}
An operator $C$ is the Choi operator of a non-causal network as  in  Fig.  \ref{brukner+} if and only if $C$ is positive and  $\Tr  [  C  D]   =  1$
 for every  operator $D$ satisfying the conditions 
 \begin{align}\label{nosiga}
 \Tr_{A_1^{\rm out}}  [D]   =  I_{A_1^{\rm in}}  \otimes    \widetilde B \, ,   \qquad  \Tr_{B_1^{\rm out}}  [\widetilde B]   =   I_{B_1^{\rm in}}  
 \end{align}
 and 
 \begin{align}\label{nosigb}
 \Tr_{B_1^{\rm out}}  [D]   =  I_{B_1^{\rm in}}  \otimes    \widetilde A \, ,   \qquad  \Tr_{A_1^{\rm out}}  [\widetilde A]   =   I_{A_1^{\rm in}} \, ,  
 \end{align}
 with suitable operators $\widetilde A$ and $\widetilde B$.  
 \end{prop}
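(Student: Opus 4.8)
The plan is to reduce the claim to the characterisation already derived in Eq.~(\ref{normalizationchoi}): $C$ is a valid Choi operator if and only if $C\ge 0$ and $\Tr[C(\widetilde A\otimes\widetilde B)]=1$ for all Choi operators $\widetilde A$ of channels $A_1^{\rm in}\to A_1^{\rm out}$ and $\widetilde B$ of channels $B_1^{\rm in}\to B_1^{\rm out}$. It therefore suffices to show that, for a Hermitian $C$, the condition ``$\Tr[C(\widetilde A\otimes\widetilde B)]=1$ for all local channels'' is equivalent to ``$\Tr[CD]=1$ for every $D$ obeying (\ref{nosiga})--(\ref{nosigb})''. One direction is immediate: a product channel $\widetilde A\otimes\widetilde B$ satisfies (\ref{nosiga})--(\ref{nosigb}) with the operators there taken to be $\widetilde A$ and $\widetilde B$ themselves, so the no-signalling condition is at least as strong as the product one; all the work is in the converse. (Equivalently, in the language of the excerpt one must show that the dual affine space of the product channels coincides with that of the operators obeying (\ref{nosiga})--(\ref{nosigb}).)

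For the converse, I would first recast the product condition linearly. Write $\Herm(A_1)$ for the real space of Hermitian operators on $\spc H_{A_1^{\rm in}}\otimes\spc H_{A_1^{\rm out}}$, and $\Herm(B_1)$ analogously; let $\spc L_A\subseteq\Herm(A_1)$ be the real span of the channel Choi operators. Since this set contains the positive-definite operator $\widetilde A_0:=(I_{A_1^{\rm out}}\otimes I_{A_1^{\rm in}})/d_{A_1^{\rm out}}$ (the Choi operator of the completely depolarising channel) and has affine hull $\{X=X^\dagger:\Tr_{A_1^{\rm out}}X=I_{A_1^{\rm in}}\}$, one obtains $\spc L_A=\{X=X^\dagger:\Tr_{A_1^{\rm out}}X\in\R\,I_{A_1^{\rm in}}\}$, and similarly $\spc L_B$ with $\widetilde B_0:=(I_{B_1^{\rm out}}\otimes I_{B_1^{\rm in}})/d_{B_1^{\rm out}}$. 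Using that the span of a set of product vectors is the tensor product of the spans, $\mathrm{span}\{\widetilde A\otimes\widetilde B\}=\spc L_A\otimes\spc L_B$; and since every product channel has the same trace $d:=d_{A_1^{\rm in}}d_{B_1^{\rm in}}$, the span of their pairwise differences is exactly $\{T\in\spc L_A\otimes\spc L_B:\Tr T=0\}$. Hence the product condition on $C$ is equivalent to the single requirement that $\Tr[Cv]=1$ for \emph{every} $v\in\spc L_A\otimes\spc L_B$ with $\Tr v=d$.

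Next, given any $D$ obeying (\ref{nosiga})--(\ref{nosigb}), with $\widetilde A$ as in (\ref{nosigb}) (so $\Tr_{A_1^{\rm out}}\widetilde A=I_{A_1^{\rm in}}$, hence $\widetilde A\in\spc L_A$) and $\widetilde B$ as in (\ref{nosiga}) (so $\widetilde B\in\spc L_B$), I would set
\[ E:=D-\widetilde A\otimes\widetilde B_0-\widetilde A_0\otimes\widetilde B+\widetilde A_0\otimes\widetilde B_0 . \]
A direct computation using (\ref{nosiga})--(\ref{nosigb}) together with $\Tr_{A_1^{\rm out}}\widetilde A_0=I_{A_1^{\rm in}}$ and $\Tr_{B_1^{\rm out}}\widetilde B_0=I_{B_1^{\rm in}}$ shows $\Tr_{A_1^{\rm out}}E=0$ and $\Tr_{B_1^{\rm out}}E=0$ (ignoring, as in the paper, the reordering of tensor factors). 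The three operators subtracted from $D$ lie in $\spc L_A\otimes\spc L_B$, each with trace $d$, and their signed total carries coefficient $1+1-1=1$, so it again lies in $\spc L_A\otimes\spc L_B$ with trace $d$; by the previous paragraph $C$ assigns it value $1$. It remains to show $\Tr[CE]=0$. Setting $\spc K_A:=\{X=X^\dagger:\Tr_{A_1^{\rm out}}X=0\}\subseteq\spc L_A$ and likewise $\spc K_B\subseteq\spc L_B$, the identity $\ker(\Tr_{A_1^{\rm out}}\otimes\mathrm{id})=\spc K_A\otimes\Herm(B_1)$ (and its $B$-analogue), together with the elementary fact $(U_1\otimes W)\cap(U\otimes W_1)=U_1\otimes W_1$ for subspaces $U_1\subseteq U$, $W_1\subseteq W$, give $E\in\spc K_A\otimes\spc K_B$. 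Since $\spc K_A\otimes\spc K_B\subseteq\spc L_A\otimes\spc L_B$ and every element of $\spc K_A\otimes\spc K_B$ has vanishing trace, $E$ lies in the trace-zero slice of $\spc L_A\otimes\spc L_B$; hence $\Tr[CE]=0$, so $\Tr[CD]=1$. Combining with Eq.~(\ref{normalizationchoi}) proves the proposition.

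The main obstacle I anticipate is entirely the linear-algebra bookkeeping in the middle two paragraphs: the affine-hull identification for channel Choi operators, the identity $\mathrm{span}\{\widetilde A\otimes\widetilde B\}=\spc L_A\otimes\spc L_B$, the statement that vectors of equal trace spanning a space yield, after subtracting one of them, exactly the trace-zero slice, and the partial-trace kernel computation combined with the tensor-intersection lemma. None of these is deep, but each must be handled carefully, and over $\R$ rather than $\C$ since all operators in sight are Hermitian. A small point to clear up at the outset is whether (\ref{nosiga})--(\ref{nosigb}) are intended to include $D\ge 0$: the argument above never uses positivity of $D$ (or of $\widetilde A,\widetilde B$), and in any case the affine space cut out by (\ref{nosiga})--(\ref{nosigb}) contains the positive-definite $\widetilde A_0\otimes\widetilde B_0$, so the no-signalling channels are full-dimensional within it and the requirement ``$\Tr[CD]=1$ for all such $D$'' does not depend on the choice.
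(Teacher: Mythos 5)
Your proposal is correct, and it supplies something the paper itself does not: for Proposition \ref{prop:N} the authors give no argument at all, deferring entirely to Theorem 2 of the cited 2013 paper of Chiribella, D'Ariano, Perinotti and Valiron. What you have written is a self-contained reduction of the no-signalling normalization to the product-channel normalization already derived in Eq. (\ref{normalizationchoi}) and \ref{app:condition}, and every step checks out: the identification of $\spc L_A$ as $\{X=X^\dagger:\Tr_{A_1^{\rm out}}X\in\R\, I_{A_1^{\rm in}}\}$ (using that the channel set contains the positive-definite $\widetilde A_0$, so its affine hull is the full slice $\Tr_{A_1^{\rm out}}X=I$); the fact that the span of differences of equal-trace spanning vectors is exactly the trace-zero slice; the cancellation $\Tr_{A_1^{\rm out}}E=\Tr_{B_1^{\rm out}}E=0$ for $E=D-\widetilde A\otimes\widetilde B_0-\widetilde A_0\otimes\widetilde B+\widetilde A_0\otimes\widetilde B_0$; and the kernel-intersection step $\ker(\Tr_{A_1^{\rm out}}\otimes\mathrm{id})\cap\ker(\mathrm{id}\otimes\Tr_{B_1^{\rm out}})=\spc K_A\otimes\spc K_B\subseteq\spc L_A\otimes\spc L_B$, which places $E$ in the trace-zero slice so that $\Tr[CE]=0$ and hence $\Tr[CD]=1$. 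Your closing remark about positivity of $D$ is also apt: since the affine space cut out by (\ref{nosiga})--(\ref{nosigb}) contains the positive-definite point $\widetilde A_0\otimes\widetilde B_0$, imposing or dropping $D\ge 0$ does not change the resulting condition on $C$, so the statement is insensitive to that reading. The only cosmetic suggestion is to state explicitly at the outset that the set of product channels is contained in the set of operators satisfying (\ref{nosiga})--(\ref{nosigb}) with $\widetilde A,\widetilde B$ the given marginals — you do say this, but it is the one-line half of the equivalence and deserves to be displayed as such before the longer converse.
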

  For the proof,  see Theorem 2 of \cite{chiribella-dariano-2013-pra}.  The operator $D$ represents the Choi operator of a no-signalling channel \cite{beckman-2001-pra,piani-2006-pra,dariano-2011-prl}, that is, a channel that  prevents the transmission of information from Alice to Bob and from Bob to Alice.  The intuitive idea is that whenever a network can be connected with two local channels, it can also be connected with a no-signalling channel.    

In the following we will denote    by $\mathsf  { NoSig }  (A_1^{\rm in} \to A_1^{\rm out} \, |\,  B_1^{\rm in} \to  B_1^{\rm out})$ is the set of  positive operators satisfying the no-signalling conditions   (\ref{nosiga}) and (\ref{nosigb}).   
With this notation, Proposition \ref{prop:N} can be reformulated as 
\begin{cor}  
An operator $C$ is the Choi operator of a non-causal network as  in  Fig.  \ref{brukner+} if and only if 
\begin{align}
\label{pos}C&\ge0\\
\label{dualnosig}C  &\in  \overline  {\mathsf  { NoSig }  (A_1^{\rm in}\to A_1^{\rm out}| B_1^{\rm in}\to  B_1^{\rm out})} \, ,
\end{align}  where $   \overline  {\mathsf  { NoSig }  (A_1^{\rm in}\to A_1^{\rm out}| B_1^{\rm in}\to  B_1^{\rm out})}$ is the dual affine space of the set of no-sinalling channels.    
\end{cor}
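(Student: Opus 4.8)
The plan is to read the corollary as a direct translation of Proposition \ref{prop:N} into the dual affine space notation introduced in Section \ref{sec:semidefinite}. Unwinding the definitions, $\mathsf{NoSig}(A_1^{\rm in}\to A_1^{\rm out}\,|\,B_1^{\rm in}\to B_1^{\rm out})$ is the set of positive operators $D$ obeying the no-signalling constraints (\ref{nosiga})--(\ref{nosigb}), and its dual affine space $\overline{\mathsf{NoSig}(A_1^{\rm in}\to A_1^{\rm out}\,|\,B_1^{\rm in}\to B_1^{\rm out})}$ is the set of operators $C$ with $\Tr[CD]=1$ for every such $D$. Thus conditions (\ref{pos})--(\ref{dualnosig}) say exactly that $C\ge 0$ and that $\Tr[CD]=1$ for every \emph{positive} $D$ satisfying (\ref{nosiga})--(\ref{nosigb}). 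Comparing with Proposition \ref{prop:N}, the two statements differ only in that Proposition \ref{prop:N} imposes the normalization $\Tr[CD]=1$ for all operators $D$ solving the linear constraints (\ref{nosiga})--(\ref{nosigb}), not only the positive ones.

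The only point that needs an argument is therefore that these two families of test operators impose the same constraint on $C$. One direction is immediate, since the positive no-signalling operators form a subset. For the converse, I would note that $D\mapsto\Tr[CD]$ is affine, so if it equals $1$ on all of $\mathsf{NoSig}(A_1^{\rm in}\to A_1^{\rm out}\,|\,B_1^{\rm in}\to B_1^{\rm out})$ then it equals $1$ on the affine hull of that set; hence it suffices to check that this affine hull is the full affine subspace $\mathsf{L}$ of Hermitian operators solving (\ref{nosiga})--(\ref{nosigb}). To that end I would exhibit a full-rank element of $\mathsf{L}$, for instance the Choi operator $(I_{A_1^{\rm out}}\otimes I_{B_1^{\rm out}})/(d_{A_1^{\rm out}}d_{B_1^{\rm out}})\otimes I_{A_1^{\rm in}}\otimes I_{B_1^{\rm in}}$ of the no-signalling channel that discards both inputs and re-prepares the maximally mixed state, which is manifestly positive, of full rank, and satisfies (\ref{nosiga})--(\ref{nosigb}). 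Since $\mathsf{L}$ is an affine subspace meeting the interior of the cone of positive operators, its intersection with that cone affinely spans $\mathsf{L}$: moving from the full-rank point by a small multiple of any direction lying in $\mathsf{L}$ keeps the operator positive, hence in $\mathsf{NoSig}(A_1^{\rm in}\to A_1^{\rm out}\,|\,B_1^{\rm in}\to B_1^{\rm out})$. Therefore the affine hull of $\mathsf{NoSig}(A_1^{\rm in}\to A_1^{\rm out}\,|\,B_1^{\rm in}\to B_1^{\rm out})$ equals $\mathsf{L}$, and the corollary follows.

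I do not expect a genuine obstacle here; most of the argument is bookkeeping with the definition of the dual affine space. The one place calling for care is the last step, namely verifying that testing $C$ only against positive no-signalling operators is equivalent to testing it against all operators obeying the no-signalling linear constraints, and the full-rank channel displayed above reduces this to a routine convexity observation.
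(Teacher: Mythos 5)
Your proof is correct and takes essentially the same route as the paper, which offers no separate argument and simply presents the corollary as a restatement of Proposition \ref{prop:N} in the dual-affine-space notation of Section \ref{sec:semidefinite}. The one substantive point you add --- that it suffices to test $C$ against \emph{positive} solutions of the no-signalling constraints because the full-rank depolarizing Choi operator lies in the interior of the cone, so the affine hull of $\mathsf{NoSig}$ is the whole constraint subspace --- is left implicit in the paper, and your handling of it is sound.
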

We will denote by  $\Dual   {\mathsf  { NoSig }  (A_1^{\rm in}\to A_1^{\rm out}| B_1^{\rm in}\to  B_1^{\rm out})}$ the set of operators satisfying conditions (\ref{pos}) and (\ref{dualnosig}). The set  $\Dual   {\mathsf  { NoSig }}$ is the set containing all the Choi operators of the non-causal networks of actin on pairs of local operatinos. 



\subsection{The max relative entropy of signalling}   
  
  In some situations, such as the study of non-causal games \cite{oreshkov-2012-nature},  it is natural to search for the non-causal networks that maximize a certain figure of merit.     For example,   consider an experiment where Alice and Bob probe a non-causal network as in Fig.  (\ref{brukner+}).   In their local laboratories, Alice and Bob   measure  the  output systems of the network  with the  POVMs $\{   P_i\}_{i=1}^K$ and $\{  Q_j\}_{j=1}^L$, respectively, and  prepare inputs for the systems, say  $\rho$  and $\sigma$, respectively. The outcomes $i$ and $j$ are  assigned a score $\omega (i,j)$, which quantifies the performance of the non-causal network.   For example, Alice and Bob may want to quantify how much the network correlates their outcomes, corresponding to the score $\omega  (i,j)  =  \delta_{ij}$.    
  More generally, Alice and Bob can probe the network by preparing correlated   states, applying local interactions,   and performing local measurements.  

Describing the test with a performance operator $\Omega$, the  maximum score is achievable by quantum non-causal networks is
\begin{align}\label{omegamax1}
\omega_{\max}   &  =  \max_{C  \in  \Dual  {\mathsf  { NoSig }  (A_1^{\rm in}\to A_1^{\rm out}| B_1^{\rm in}\to  B_1^{\rm out})}  }\,   \<  \Omega,  C\>  \, .
\end{align}
Finding the network that achieves maximum score is similar to finding the entangled state that maximizes the violation of a Bell inequality.    The optimization task can be tackled with our Theorem \ref{lem:geomdual}, which  provides a dual expression for the maximum score:  
\begin{prop}
Let   $\Omega \in  \Herm  \left(   \spc H_{A_1^{\rm out}} \otimes \spc H_{A_1^{\rm in}}\otimes \spc H_{B_1^{\rm out}}  \otimes \spc H_{B_1^{\rm in}}\right)$ be a generic performance operator a  $  \omega_{\max}$ be the maximum score defined in Eq. (\ref{omegamax1}).  Then, one has
\[  \omega_{\max}    =     \,   \min_{\Gamma \in    {\mathsf  { NoSig }  (A_1^{\rm in}\to A_1^{\rm out}| B_1^{\rm in}\to  B_1^{\rm out})} }  \min \{  \lambda  \in   \R  ~|~     \, \lambda \Gamma  \ge \Omega  \}\, . \]
When $\Omega$ is positive, the maximum score is given by 
\begin{align}
\omega_{\max}    =      2^{  D_{\max}    \left ( \,        \Omega  \,   \| \,   \mathsf{ NoSig}    (A_1^{\rm in}\to A_1^{\rm out}| B_1^{\rm in}\to  B_1^{\rm out})   \,        \right )}  \, . 
\end{align}
\end{prop}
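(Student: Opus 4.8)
The plan is to recognise the optimisation in Eq.~(\ref{omegamax1}) as a semidefinite program and to apply Theorem~\ref{lem:geomdual}, in complete analogy with the causal case treated in Theorem~\ref{dual2}. By the corollary to Proposition~\ref{prop:N}, the feasible set $\Dual\mathsf{NoSig}$ is the set of positive operators $C$ lying in the dual affine space $\overline{\mathsf{NoSig}(A_1^{\rm in}\to A_1^{\rm out}\,|\,B_1^{\rm in}\to B_1^{\rm out})}$. The requirement ``$C\in\overline{\mathsf{NoSig}}$'', being the family of linear equations $\langle C,D\rangle=1$ for $D\in\mathsf{NoSig}$ (equivalently, for $D$ in an affine basis of $\mathsf{NoSig}$), can be written as $\phi(C)=B$ for a suitable Hermiticity-preserving map $\phi$ and Hermitian $B$. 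Hence Eq.~(\ref{omegamax1}) is the primal value of the SDP $(\phi,\Omega,B)$, whose primal affine space in the sense of Eq.~(\ref{primalaffine}) is $\set S=\overline{\mathsf{NoSig}}$, so that its dual affine space is $\overline{\set S}=\overline{\overline{\mathsf{NoSig}}}$.

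The next step is to identify $\overline{\set S}$. Since we are in finite dimension and no no-signalling channel has vanishing trace (each has trace $d_{A_1^{\rm in}}d_{B_1^{\rm in}}$), the origin is not in the affine hull of $\mathsf{NoSig}$, so $\overline{\overline{\mathsf{NoSig}}}$ equals that affine hull. The affine hull in turn coincides with the affine space cut out by the no-signalling conditions (\ref{nosiga})--(\ref{nosigb}) without the positivity requirement, because this affine space contains a positive-definite element: an explicit one is the Choi operator $\Gamma_{\rm fr}=I/(d_{A_1^{\rm out}}d_{B_1^{\rm out}})$ of the product channel that discards each local input and outputs the maximally mixed state, which one checks directly to obey (\ref{nosiga})--(\ref{nosigb}). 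This $\Gamma_{\rm fr}$ is exactly the strictly positive element of $\overline{\set S}$ required by Theorem~\ref{lem:geomdual}, while any valid non-causal network (for instance the Choi operator of a causal comb of the appropriate type) supplies the required positive element of $\set S$. Theorem~\ref{lem:geomdual} then applies with equality and gives $\omega_{\max}=\min_{\Gamma\in\overline{\overline{\mathsf{NoSig}}}}\min\{\lambda\in\R:\lambda\Gamma\ge\Omega\}$; as in the proof of Theorem~\ref{dual2}, one checks that this minimisation can be restricted to the positive no-signalling operators, which yields the first displayed equation.

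For the entropic form I would invoke the last clause of Theorem~\ref{lem:geomdual}: when $\Omega\ge0$ one has $\omega_{\max}=2^{D_{\max}(\Omega\,\|\,\overline{\set S}_+)}$ with $\overline{\set S}_+=\overline{\overline{\mathsf{NoSig}}}\cap\{\Gamma\ge0\}$. By the identification above this is precisely the set $\mathsf{NoSig}(A_1^{\rm in}\to A_1^{\rm out}\,|\,B_1^{\rm in}\to B_1^{\rm out})$ of positive no-signalling operators, so $D_{\max}(\Omega\,\|\,\overline{\set S}_+)=D_{\max}(\Omega\,\|\,\mathsf{NoSig})$, which is the second displayed equation; unwinding the definition of $D_{\max}(\cdot\,\|\,\cdot)$ (writing $\lambda=1/w$) shows this is consistent with the min-expression, since $\inf_\Gamma(-\log\max\{w:w\Omega\le\Gamma\})=\log\inf_\Gamma\min\{\lambda:\lambda\Gamma\ge\Omega\}$.

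I expect the only non-formal steps to be (i) exhibiting the strictly positive no-signalling witness $\Gamma_{\rm fr}$ and verifying it meets the normalisation constraints --- routine, but the one place a computation is needed to discharge Slater's condition --- and (ii) the identification of $\overline{\overline{\mathsf{NoSig}}}$ with the affine span of the no-signalling constraints, together with the restriction of the dual minimisation to $\mathsf{NoSig}$; this bookkeeping is the delicate part, and it parallels exactly the corresponding point in the proof of Theorem~\ref{dual2}. Everything else is a direct specialisation of Theorem~\ref{lem:geomdual}.
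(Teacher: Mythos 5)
Your proposal is correct and follows essentially the same route the paper intends: the proposition is a direct specialisation of Theorem~\ref{lem:geomdual} with primal affine space $\overline{\mathsf{NoSig}}$, and the paper itself gives no further argument. You in fact supply the details the paper leaves implicit --- the strictly positive Slater witness $\Gamma_{\rm fr}=I/(d_{A_1^{\rm out}}d_{B_1^{\rm out}})$, the identification $\overline{\overline{\mathsf{NoSig}}}=\mathrm{aff}(\mathsf{NoSig})$ via the full-rank element, and the restriction of the dual minimisation to the positive cone --- all of which are handled correctly.
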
 

In words: the maximum score  achieved by quantum non-causal networks is determined by the deviation of the performance operator from set of (Choi operators of) no-signalling channels.  We call  $D_{\max}    \left ( \,        A  \,   \| \,   \mathsf{ NoSig}    (A_1^{\rm in},A_1^{\rm out}| B_1^{\rm in},  B_1^{\rm out})   \, \right)$  the \emph{max relative entropy of signalling}, in analogy with the   \emph{relative entropy of entanglement} of a state $\rho$  \cite{bennett-bernstein-1996-pra,popescu-1997-pra,vedral-1997-prl}.

\subsection{Optimizing multipartite non-causal networks}      

The results presented in the bipartite case  can be easily generalized to multipartite non-causal networks.    Consider a quantum network  that can interact with  $k$ local devices, by providing an input system to each device and annihilating its output system.       As in the bipartite case, the network  can be represented by its Choi operator $C$, which will have to satisfy the condition  
\[    \Tr  \left[  C    (   \widetilde A_1  \otimes     \widetilde A_2  \otimes   \cdots \otimes   \widetilde A_k    ) \right ]  =1  \, , \]  
for every set of Choi operators $( \widetilde A_1 \,,        \widetilde A_2 \, , \cdots,    \widetilde A_k)$ representing local quantum channels.   
Equivalently,  the normalization condition can be expressed as  
\[  \Tr  [ C  \,  D]   =  1  \, ,\]
for every Choi operator $D$ representing a  $k$-partite no-signalling channel.  Specifically, the set of Choi operators representing $k$-partite no-signalling channels is defined as follows: 
\begin{defi}
  An operator $D$, acting on  $\bigotimes_{i=1}^k   \left (  \spc H_{A_i^{\rm out}}\otimes \spc H_{A_i^{\rm in}}  \right)  $,  is the Choi operator of a no-signalling channel iff for every subset $  \set J   \subseteq    \{1,\dots, k\} $ one has 
\[    \Tr_{A_{\set J}^{\rm out}}   [    D ]   =  I_{A_{\set J}^{\rm in}}   \otimes    D^c_{  \set J}  \, ,\] 
where $ \Tr_{A_{\set J}^{\rm out}}$ is the partial trace over the Hilbert space $\spc H_{ A_{\set J}^{\rm out} } :  = \bigotimes_{i\in  \set J}  \spc H_{A_i^{\rm out}}$,   $I_{A_{\set J}^{\rm in}}$ is the identity operator on the Hilbert space $\spc H_{ A_{\set J}^{\rm in} } :  = \bigotimes_{i\in  \set J}  \spc H_{A_i^{\rm in} }$, and $D^c_{  \set J} $ is the Choi operator of a quantum channel transforming density matrices on  $\spc H_{ A^{c \, ,in}_{\set J} } :  = \bigotimes_{i  \not \in  \set J}  \spc H_{A_i ^{\rm in}}$ into density matrices on  $\spc H_{ A^{c  \, , out  }_{\set J} } :  = \bigotimes_{i  \not \in  \set J}  \spc H_{A_i^{\rm out} }$.
\end{defi}

We denote the set of $k$-partite no-signalling channels as  $\mathsf {NoSig}_k$, keeping   implicit the specification of the Hilbert spaces. 

Like in the bipartite case, it is natural to consider tasks where one has to find the non-causal network that maximizes a score  of the form $\omega   =  \Tr [    \Omega C   ] $ for some performance operator $\Omega$.     The maximum score is then given by
\begin{align}
\omega_{\max}   =   \max  \left\{  \,   \Tr[  \Omega C ]    ~|~      C  \in    \overline  {\mathsf {NoSig}_k  }   \,  \right\} \, .
\end{align}  

In general,  characterizing the dual affine space of the set of no signalling channels is a rather laborious task.   Using Theorem \ref{lem:geomdual} we can circumvent the problem and express the maximum score as 
\begin{align}
\omega_{\max}   =    \min_{D\in  \mathsf{NoSig}_k}   \min  \left\{  \,   \lambda  \in \R     ~|~    \lambda\, D  \ge \Omega            \,  \right\}  \,      \, ,
\end{align}  
 or, when $\Omega$ is positive 
 \begin{align}   \omega_{\max}    =   2^{    D_{\max}   \left (  \,    \Omega  \, \|  \,    \mathsf {NoSig}_k       \, \right)   }  \, .
 \end{align}   
Again, the performance is determined by the deviation of the performance operator from the set of (Choi operators of) no-signalling channels.

\section{The max relative entropy of non-causal networks}\label{sec:maxnoncausal}  
Like in the case of causal networks, the max relative entropy between two  quantum networks can be related to the max relative entropy of their output states: 
\begin{prop}\label{prop:maxrelativenoncausal}
Let $C^{(0)}$ and $C^{(1)}$ be the Choi operators of two non-causal networks in $\overline {\mathsf {NoSig}_k}$  
and let $E$ be the Choi operator of a network of the form
 \begin{align}\label{scorer2}
 \Qcircuit @C=1em @R=0.2em @!R
{     & \qw \poloFantasmaCn{A_{1}^{\rm in}}  &  \multigate{1}{\map E_1}   &  \qw \poloFantasmaCn{A^{\rm out}_1}    &\qw &    & \qw\poloFantasmaCn{A_2^{\rm in}}  &  \multigate{1}{\map E_2}   & \qw \poloFantasmaCn{A_2^{\rm out}}   &  \qw  &\dots  &&\qw   \poloFantasmaCn{A_N^{\rm in}} &  \multigate{1}{\map E_N}  &  \qw  \poloFantasmaCn{A_N^{\rm out}}   &\qw \\
     & & \pureghost{\map E_1}  &   \qw &  \qw &\qw &\qw & \ghost{\map E_2}  &\qw  &  \qw &  \dots   &&  \qw & \ghost{\map E_N}    & \qw \poloFantasmaCn{S}   &\qw    }\,  ,
\end{align} 
where $S$ is a generic quantum system and the reduced channel     
\begin{align}\label{scorer2}
 \Qcircuit @C=1em @R=0.2em @!R
{     & \qw \poloFantasmaCn{A_{1}^{\rm in}}  &  \multigate{1}{\map E_1}   &  \qw \poloFantasmaCn{A^{\rm out}_1}    &\qw &    & \qw\poloFantasmaCn{A_2^{\rm in}}  &  \multigate{1}{\map E_2}   & \qw \poloFantasmaCn{A_2^{\rm out}}   &  \qw  &\dots  &&\qw   \poloFantasmaCn{A_N^{\rm in}} &  \multigate{1}{\map E_N}  &  \qw  \poloFantasmaCn{A_N^{\rm out}}   &\qw \\
     & & \pureghost{\map E_1}  &   \qw &  \qw &\qw &\qw & \ghost{\map E_2}  &\qw  &  \qw &  \dots   &&  \qw & \ghost{\map E_N}    & \qw \poloFantasmaCn{S}   &\measureD{\Tr}   }\,  ,
\end{align}   is no-signalling.    
Then, one has  
\begin{align}\label{dmaxstates}
D_{\max}   (  C^{(0)} \, \| \,  C^{(1)})   =  \max_E   \, D_{\max}   (     C^{(0)}  *E \, \| \,  C^{(1)}  * E) \, . 
\end{align}
where the maximum runs over all networks of the form (\ref{scorer2}), with arbitrary system $S$. 
\end{prop}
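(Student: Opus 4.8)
The plan is to establish the two inequalities in (\ref{dmaxstates}) separately, mirroring the argument for causal networks in Proposition \ref{prop:maxrelativecausal}. Throughout it is convenient to write $D_{\max}(A\|B)=\log\min\{\lambda\in\R\mid A\le\lambda B\}$, with value $+\infty$ when $\mathrm{supp}(A)\not\subseteq\mathrm{supp}(B)$.

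For the inequality ``$\ge$'' I would invoke monotonicity. For any admissible scorer $E$, the map $X\mapsto X*E$ is linear, and it is positive since the link product of two positive operators is positive \cite{chiribella-dariano-2008-prl}; hence $C^{(0)}\le\lambda\,C^{(1)}$ implies $(\lambda C^{(1)}-C^{(0)})*E\ge 0$, i.e.\ $C^{(0)}*E\le\lambda\,C^{(1)}*E$. Therefore $D_{\max}(C^{(0)}*E\,\|\,C^{(1)}*E)\le D_{\max}(C^{(0)}\,\|\,C^{(1)})$ for every such $E$, and taking the supremum over $E$ gives one half of (\ref{dmaxstates}). This is nothing but monotonicity of $D_{\max}$ under the positive map induced by the link product, and it could equally be read off from Proposition \ref{prop:maxoutput}.

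For the inequality ``$\le$'' I would exhibit a single scorer $E_0$ that attains equality, namely the \emph{tomographically complete} probe: for each party $i$ the probe stores the incoming system $A_i^{\rm in}$ in a reference $R_i$ (an identity channel $A_i^{\rm in}\to R_i$) and emits into $A_i^{\rm out}$ one half of a maximally entangled state $|\Phi\rangle_{A_i^{\rm out}R_i'}$, retaining the other half $R_i'$; its output is $S=\bigotimes_i(R_i\otimes R_i')$. This $E_0$ is a comb of the form (\ref{scorer2}) with only trivial use of the memory (each tooth just stores its incoming system and emits a fresh entangled half), and discarding $S$ leaves the tensor product over $i$ of the local channels ``trace out $A_i^{\rm in}$, output the maximally mixed state on $A_i^{\rm out}$'', whose Choi operator is a multiple of the identity and therefore lies in $\mathsf{NoSig}_k$; hence the reduced channel is no-signalling and $E_0$ is admissible. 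A short computation with the partial-swap identity $\Tr_A[X_A\,F_{AR}]=X_R$ (with $F$ the swap operator) then shows $C^{(x)}*E_0=C^{(x)}/\Tr[C^{(x)}]$ up to the relabeling $A_i^{\rm in}\to R_i$, $A_i^{\rm out}\to R_i'$ of tensor factors, which is unitary and leaves $D_{\max}$ invariant. Finally, testing each of $C^{(0)},C^{(1)}\in\overline{\mathsf{NoSig}_k}$ against $I/\prod_i d_{A_i^{\rm out}}\in\mathsf{NoSig}_k$ gives $\Tr[C^{(0)}]=\Tr[C^{(1)}]=\prod_i d_{A_i^{\rm out}}$, so the two output states differ from $C^{(0)}$ and $C^{(1)}$ by the \emph{same} positive scalar, and scale invariance of $D_{\max}$ yields $D_{\max}(C^{(0)}*E_0\,\|\,C^{(1)}*E_0)=D_{\max}(C^{(0)}\,\|\,C^{(1)})$.

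The main obstacle is the second inequality: one must recognize that the probe implementing full process tomography on $C^{(x)}$ — which loses no information, and hence reproduces $D_{\max}$ exactly rather than merely bounding it — happens to have a \emph{no-signalling} reduced channel, so that it is an admissible scorer in (\ref{scorer2}). The monotonicity half is routine. A minor point to dispatch is the non-absolutely-continuous case $D_{\max}(C^{(0)}\|C^{(1)})=+\infty$: since $C^{(x)}*E_0\propto C^{(x)}$, one has $\mathrm{supp}(C^{(0)}*E_0)\not\subseteq\mathrm{supp}(C^{(1)}*E_0)$, so both sides of (\ref{dmaxstates}) equal $+\infty$.
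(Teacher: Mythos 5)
Your proof is correct and follows essentially the same route as the paper's: the ``$\ge$'' direction is the monotonicity half of Proposition~\ref{prop:maxoutput} (positivity of the link product with a fixed positive $E$), and the ``$\le$'' direction is attained by a purification-type scorer. Your tomographic probe $E_0$ is precisely the paper's construction $E=|\Psi\rangle\langle\Psi|$, $|\Psi\rangle=\sum_i\sqrt{g_i}\,|\phi_i\rangle|\overline{\phi}_i\rangle$, specialized to the full-rank dual element $\Gamma=I/\prod_i d_{A_i^{\rm out}}\in\mathsf{NoSig}_k$, for which $C^{(x)}*E=\sqrt{\Gamma}\,C^{(x)}\sqrt{\Gamma}\propto C^{(x)}$ and hence $D_{\max}$ is exactly preserved.
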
  
The proof is the same as the proof of Proposition \ref{prop:maxrelativecausal}.  
The above result shows that the max relative entropy between two non-causal networks is equal to the max relative entropy between the output states generated by connecting the networks to the ``no-signalling part" of a quantum channel, as in figure \ref{fig:noncausalhypothesis}.      

\begin{figure}\label{fig:noncausalhypothesis}
\begin{center}
\hspace*{0cm}\includegraphics[scale=0.5]{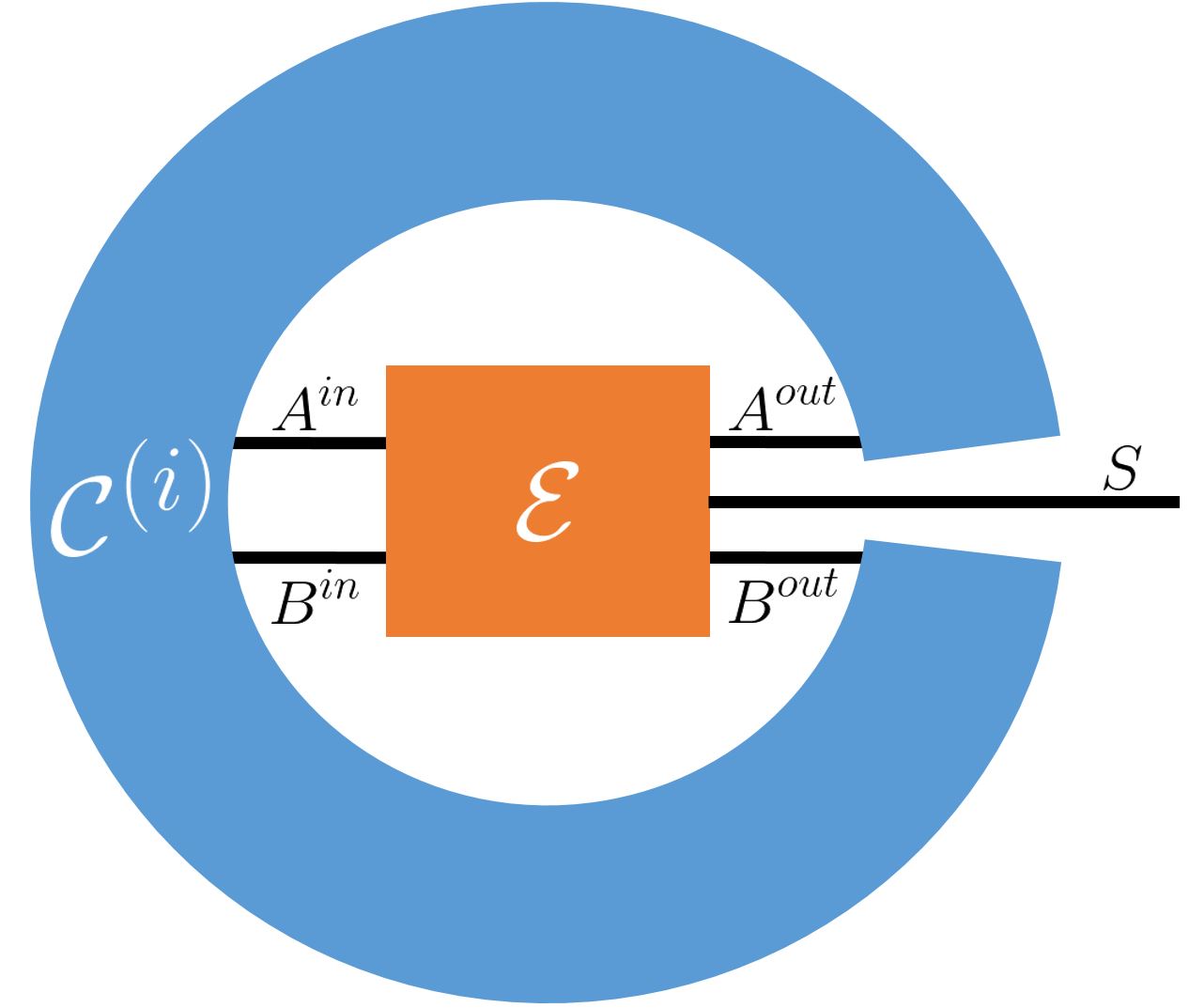}
\end{center}
\caption{Schematic of a test for probing two different hypotheses of quantum spacetime. The two hypotheses are  described by the (possibly non-causal) network (in blue) connecting  systems $A^{\rm in}$ and $ A_1^{\rm out}$ in Alice's laboratory with systems $B^{\rm in}$ and $B^{\rm out}$ in Bob's laboratory.     The test consists in applying a quantum channel $\map E$ (in orange),  acting on systems $A^{\rm in},  A_1^{\rm out} , B^{\rm in}$, and $B^{\rm out}$, plus an additional system $S$.   The channel $\map E$ has the property that, once system $S$ is discarded, the evolution of the remaining systems is no-signalling.   We stress that the above model represents the most general way---\emph{in principle}---to discriminate between two hypotheses of causal structure. However, depending on the situation, there may be constraints on the channel $\map E$, such as the ability to implement $\map E$ with local interactions, or the presence of conservation laws that further limit the set of available channels.   } 
\label{prop}
\end{figure}

Like in the causal case, there is a nice connection to one-shot hypothesis testing. Here one can consider  the problem of distinguishing between two alternative models of spacetime, resulting into different ways to connect the operations performed in $N$ local laboratories.   For example, $C^{(0)}$ could describe a null hypothesis of space time where  all the events  are causally ordered, while $C^{(1)}$ could describe an exotic, non-causal space time.  
 Proposition \ref{prop:maxrelativenoncausal} tells us that, in terms of max relative entropy, the distinguishability of two models of spacetime is   quantified by the max relative entropy of the corresponding non-causal networks.

\section{Applications \label{applications}}
In the following we apply our results to   four optimization problems  involving quantum networks.  
  We will start from the causal case, considering networks that  approximately transform a given set of input channels into a target set of output channels.  Then, we will move the case of non-causal networks.

\subsection{ Transforming quantum channels }

Consider the following scenario: A  black box  implements  a quantum channel in the set $\{ \map E_x\}_{x\in\set X}$, where $\set X$ is an arbitrary index  set. 
  The task is  to simulate another channel  $\map F_x$ using the channel $\map E_x$     as a subroutine.   For example, the black box could implement a unitary gate $U_x$ and the task could be to build the control-unitary gate \cite{friis2014implementing,araujo-2014-njp,nakayama2015quantum,bisio2015quantum}.  
   \[ {\tt ctrl-}U_x  = I  \otimes  |0\>\<  0| +   U_x \otimes  |1\>\<  1| \, .  \] 
To simulate the desired channel  $\map F_x$,  we insert the input channel $\map E_x$ into a quantum causal network, as in the following diagram 
\begin{align}\label{channeltransf}
\Qcircuit @C=1em @R=0.2em @!R
{    &\qw \poloFantasmaCn{A_0}   &    \multigate{1}{\map C_1}    &     \qw \poloFantasmaCn{A_1}    &\gate{    \map E_x } &\qw \poloFantasmaCn{A_2}  &  \multigate{1}{\map C_2}  &  \qw \poloFantasmaCn{A_3}  &\qw      &  \quad   &  =    &\qquad   &\qw \poloFantasmaCn{A_0}   &\gate{    \map E'_x } &\qw \poloFantasmaCn{A_3}  &\qw     \\
  &   &    \pureghost{\map C_1}    &     \qw  &\qw   &\qw  &  \ghost{\map C_2}  &   &   &&&&&& }  
      \quad ,
\end{align}
where $\map C_1$ and $\map C_2$ are suitable quantum channels.
The Choi operator of the output channel $\map E_x'$ is then given by 
\begin{align}  E_x'     =     C  *     E_x   \, ,
\end{align}
where $C$ is the Choi operator of the network and $*$ denotes the link product.  

Let us focus on the case where the target channel $\map F_x$ is an isometry, namely $\map F_x    =  V_x  \cdot V_x^\dag$, with $V_x^\dag V_x  = I$.    
 To measure how close the channel $\map E_x'$ is to the target, we use the  \emph{channel fidelity}  \cite{raginsky-2001-pla,nielsen-2002-pla,horodecki-1999-pra},  given by  
 \begin{align}
 F(\map  E_x',\map F_x):= \frac{1}{d_0^2}  \bb V_x |E_x' |V_x \kk   \, ,  
 \end{align}
 where $d_0$ is the dimension of the input system  $A_0$  and the notation  $|V\kk$ denotes the unnormalized state 
 \[    |V\kk   :=    (  V \otimes   I)  \,    |I\kk  \,  , \qquad |I\kk: =    \sum_{n=1}^d  \,   |n\>|n\>  \, .  \]
In this case, the fidelity can be interpreted as the probability  that the network  passes a test, where the channel $\map E_x'$ is applied locally on one part of an entangled state and  the output is tested  with a POVM containing the  projector on the entangled state $  |V \kk/\sqrt {d_0}$.   The fidelity can be expressed as 
\[     F(\map E_x',\map F_x):= \frac{1}{d_0^2}     \Tr \left[    C     \,  \left( \,   |V_x \kk  \bb V_x |    \otimes E_x^T   \,  \right)   \right]    \]

 Now, if the input channel $\map E_x$ is given with  prior probability $p (x) $, the average channel fidelity is given by  
\begin{align}  
\nonumber F    &  =  \sum_x  \, \ p(x)  \,   F (\map E_x', \map F_x) \\
\label{fidelitychannel}
&  =  \Tr  [ \Omega  \, C   ]   \, ,  \qquad   \Omega  :=  \sum_x \,  p(x)  \,      \left( \,   |V_x \kk  \bb V_x |    \otimes E_x^T   \,  \right)    \ .
 \end{align}

Thanks to Theorem (\ref{dual2}), the maximum fidelity can be expressed as   
\begin{align}\label{maximumfidelity}  F_{\max}   =     \min_{\Gamma \in  \Dual\Comb}     \left\{   \lambda \in\R ~|~    \lambda \, \Gamma \ge \Omega     \right\}   \, ,
\end{align}
where  $\set {DualComb}$ is the set of positive operators on    $\spc H_3 \otimes \spc H_2\otimes \spc H_1\otimes \spc H_0$ satisfying the conditions
\begin{align}  
\nonumber \Gamma   &=    I_3\otimes T_{210} \\
\nonumber    \Tr_2  [  T_{210}] & = I_1  \otimes T_0\\
   \label{setgamma}  \Tr[T_0]  &=  1       \, .
\end{align}

 In the following we illustrate the  use  of this expression in a few examples.  
\subsection{Optimal inversion of an unknown unitary dynamics  \label{gateinversion}}  
Unitary quantum dynamics is, by definition, invertible: given a classical description of a unitary gate $U$, in principle one can always engineer the gate $U^\dag$ implementing the inverse physical process.  However, the situation is different when the gate $U$ is unknown.  Can we devise a   \emph{physical} inversion mechanism, which transforms every  unknown unitary dynamics $U$, given as a black box, into its inverse?   Classically, the analogue of inverting a unitary dynamics is   inverting a permutation. Inverting a permutation with a single evaluation is clearly impossible, because  evaluating the permutation allows us to know its action on one input at most,  and this information is not sufficient to perform an inversion  on the  other inputs.  In the quantum domain, the situation is more interesting, because one use of a unitary gate is enough to store it  faithfully into a quantum memory, by applying $U$ on one side of  a maximally entangled state. A first question is whether the information stored in the memory can be extracted and used to implement the inverse gate $U^\dag$.   Interestingly, this possibility is barred by Nielsen's and Chuang's  no-programming theorem \cite{nielsen-chuang-1997-prl}, which states that only orthogonal states can be used to program unitary gates deterministically and without error.  As an alternative, one can try to think of protocols that simulate $U^\dag$ with one use of $U$, without storing $U$ in a quantum memory.  Protocols of this form are implemented by quantum networks as in Eq. (\ref{channeltransf}).    We now show that even such protocols cannot implement a perfect inversion.     More specifically, we  show that the best way to generate the inverse of an unknown dynamics is simply to  estimate it and to use the  estimate  to implement an  approximate inversion. Our result highlights an analogy between the optimal inversion of an unknown unitary dynamics and the optimal universal NOT (UNOT) gate \cite{buvzek1999optimal,rungta-2001-pra}, the quantum channel that attempts to transform every pure quantum state into its orthogonal complement. A known fact is that no quantum channel can approximate the ideal  UNOT    gate better than a channel that measures the input state and produces an orthogonal state based on the measurement outcome \cite{buvzek1999optimal,rungta-2001-pra}.     Considering this feature, one can think of the unitary inversion as the analogue of the UNOT: they are both involutions and they both are implemented optimally by measure-and-prepare strategies.

Let us assume  that the unknown unitary gate $U$ is drawn at random according to the normalized Haar measure  $\d U$. Then, the performance operator in Eq.   (\ref{fidelitychannel})   takes the form
\begin{align}
\Omega=&\frac{1}{d^2}\int \d U  \, |U^{\dagger}    \kk \bb  U^{\dagger}|_{30} \otimes | \overline{U} \kk\bb  \overline{U} |_{21} \ , 
\end{align}
with $d=   d_0  =  d_1  =  d_2  =d_3$. 
The evaluation of the fidelity is provided in appendix \ref{app:inversion}, where we obtain the value 
\begin{align}
F_{\max}  = \frac 2{d^2}  \,.
\end{align}  

Now, it turns out that the maximum fidelity can be achieved through the estimation of the gate $U$.  Indeed, the optimal strategy for gate estimation is to prepare a maximally entangled state, to apply the unknown gate $U$ on one side, and to perform the POVM  $P_{\widehat U}   =   d \,    | \widehat U  \kk\bb  \widehat U  | $  \cite{chiribella-dariano-2005-pra}. 
This strategy leads to the conditional probability distribution
\[  p(\widehat U  |  U)    =     \left|  \Tr  [U^\dag \widehat U]\right|^2  \, , \]
normalized with respect to the Haar measure.  Averaging the channel fidelity  $F(\hat U, U)  =    \left|\Tr  [U^\dag \widehat U]\right|^2/d^2 $,  we then obtain the value  
\begin{align}
\nonumber F_{\rm est}  (U) &  =      \int  \d \widehat U \,     \left|\Tr  [U^\dag \widehat U]\right|^4/d^2   \\
 \nonumber  &  = 2/d^2  \\
\label{estimation}  &\equiv F_{\max}  \, , \qquad \forall U\in  \set {SU} (d) \, .
  \end{align}
The continuous POVM  with $P_{\hat U}   =  d |\hat U\kk\bb \hat U|$ can also be replaced by a discrete Bell measurement, with $d^2$ outcomes, without affecting the fidelity in the worst case scenario, or equivalently, the average fidelity over all unitaries.   One way or another,  the above discussion proves that    no quantum network can invert a gate better than a classical network that generates the inverse by using gate estimation as  an intermediate step.

\subsection{Simulating the evolution of a charge conjugate particle}
In quantum mechanics,  complex conjugation implements the symmetry   between particles and antiparticles. If the evolution of a quantum particle is described by the  unitary transformation $U$, then the evolution of the corresponding antiparticle  will be described by the unitary transformation $\overline U$,  where each matrix element is replaced by its complex conjugate.  Consider the scenario where one is given a black box that performs a unitary transformation  on a certain particle.     Can we use this black box to simulate the  evolution of the corresponding antiparticle?  Physically,  the most general simulation  strategy is  described by a quantum network as in Eq. (\ref{channeltransf}). 
 
For the charge conjugation problem, the performance  operator $\Omega$ reads
\begin{align}
\nonumber \Omega=&\frac{1}{d^2}\int \d U \,  | \overline{U}  \kk \bb  \overline{U} |_{30} \otimes | \overline{U}  \kk \bb  \overline{U} |_{21}\\
\label{omegaconj} &\frac{1}{d^2} \left( \frac{P_{+,32}\otimes P_{+,10}}{d_+} + \frac{P_{-,32}\otimes P_{-,10}}{d_-} \right) \  \, ,\end{align}
where $P_+$ and $P_-$  ($d_+$ and $d_-$) are the projectors on  (the dimensions of) the symmetric and antisymmetric subspaces,  respectively.  In appendix \ref{app:conj} we evaluate the dual expression Eq. (\ref{maximumfidelity}),  obtaining  the maximum fidelity  
\[F_{\max}  = \frac{2}{d (d-1)} \, .\]   Note that the fidelity is equal to 1 in the case of two-dimensional quantum systems. This  is  consistent with the fact that, for $d=2$, the matrices $U$ and $\overline U$ are unitarily equivalent---specifically, $\overline U  =  Y U Y$, where $Y$ is the Pauli matrix $Y :  = \left(  \begin{array}{cc}   0 &  -i  \\  i  & 0\end{array} \right)$. 
Therefore, one can implement the complex conjugation by sandwiching the original unitary between two Pauli gates.  

For systems of large dimension, the fidelity converges to $2/d^2$, the value achieved by gate estimation [cf. Eq. (\ref{estimation}) in the previous paragraph].    This means that gate estimation is asymptotically the optimal strategy, but, remarkably,   it is not  the optimal strategy when $d$ is finite.  The optimal simulation of the charge conjugate dynamics is achieved by  the network with Choi operator 
\[  C  =   \frac{d\, P_{-,32}}{d_-}\otimes \frac{d\, P_{-,10}}{ d_-}   \, . \]
It is immediate to verify that, indeed, the operator $C$  satisfies the  normalization constraints  and that one has $\Tr [\Omega  C]  =   1/d_-=  F_{\max}$. 
Physically, $C$ represents a ``disconnected network" of the form
\begin{align*}
 \Qcircuit @C=1em @R=0.2em @!R{
 &\qw \poloFantasmaCn{A_0}   &    \gate{\map K}    &     \qw \poloFantasmaCn{A_1}    & \qw &   &\qw \poloFantasmaCn{A_2}  &  \gate{\map K}  &  \qw \poloFantasmaCn{A_3}  &\qw      }  
\quad ,
\end{align*}
 consisting of two subsequent uses of the channel $\map K$ with Choi operator 
 $K   :=   d \, P_-/d_-$.
 When the input gate $U$ is inserted in the open slot,  the overall  evolution from system $A_0$ to system $A_3$ is given by the channel 
 $\map F'   =     \map K \, \map U \,  \map K$, 
which optimally simulates the charge conjugate evolution $\overline U$.   

It is interesting to further elaborate on  the physical meaning  of the operations in the network.   At first, one may guess that the optimal way to conjugate an unknown unitary $U$ is to  approximate the sequence  of transformations  
\begin{align}\label{seq1}  
\rho  \quad  \xrightarrow{~{\rm transpose}~}\quad    \rho^T   \quad   \xrightarrow{\quad U \quad }   \quad   U \rho^T  U^\dag \quad  \xrightarrow{~{\rm transpose}~}     \quad   \overline  U \rho  U^T   \, .
\end{align}    
 As the transpose is not a physical operation, one may try to use the optimal transpose channel \cite{horodecki2002method,horodecki2003limits,buscemi2003optimal,kalev2013optimal,lim2011experimental}, which has Choi operator $T   =  d  P_+/d_+$.  However, this choice would be suboptimal, leading to the fidelity 
 \[F_{\rm transpose}  =   1/d_+  =   2/[d(d+1)]   < F_{\max}  \, . \]       Instead, the optimal strategy is to approximate the \emph{transpose $\tt NOT$},~i.~e.~the impossible transformation that maps every projector  into its orthogonal complement.    In the Heisenberg picture, the transpose $\tt NOT$ maps every observable $A$ into the observable $I-A^T$, allowing us to reproduce the charge conjugate dynamics as 
\[   A   \quad  \xrightarrow{~{\rm transpose~} \tt NOT~}\quad       I  -   A^T      \quad   \xrightarrow{\quad U \quad }   \quad             I  -    U^\dag A^T   U           \quad  \xrightarrow{~{\rm transpose}~\tt NOT~}     \quad   U^T  A \overline U   \, . \]
It turns out that the optimal approximation of the transpose $\tt NOT$ is exactly  the channel $\map K$ used in our  network:   
 in summary, the optimal simulation of the charge conjugate dynamics employs  the optimal transpose $\tt NOT$ instead of the optimal transpose. 
   Some intuition to justify this bizarre fact comes from the observation that the optimal transpose can be implemented via state estimation and, therefore, approximating the sequence (\ref{seq1})  would lead to a classical, estimation-based strategy.  Instead, the transpose $\tt NOT$ cannot be achieved via state estimation. For example, the transpose $\tt NOT $ for qubits is a unitary transformation, corresponding to the Pauli matrix $Y$.

\subsection{Optimal controlization of unknown gates \label{cu}}
Given a  unitary gate $U$,  the corresponding control unitary gate is  \[  {\tt ctrl-}U:=I\otimes|0 \> \< 0| + U \otimes |1\> \<1| \, ,\] where $|0\>$ and $|1\>$ are the states of a qubit acting as control system. Controlization is the task of transforming an unknown gate $U$, accessed as a black box, into the corresponding gate ${\tt ctrl-}U$.  

When $U$ is an arbitrary unitary, perfect controlization is impossible,  as it was recently 
 shown  in Refs. \cite{friis2014implementing,araujo-2014-njp}. Like the no-cloning Theorem, this ``no-controlization" result establishes the impossibility of a perfect functionality. But what about approximate controlization? {\em A priori},  nothing forbids that one could engineer an approximate controlization protocol that achieves high-fidelity, almost circumventing the no-go Theorem.   In the following we show that this is not the case. For a completely unknown unitary gate $U$, we show that  not only is perfect controlization impossible, but also that every quantum strategy for controlization will be at most as good as a  classical strategy that measures the control qubit and performs the gate $U$ or the identity depending on the measurement outcome.

For the controlization task, the performance operator $\Omega$ reads
\begin{align}
\Omega=&\frac{1}{2d^2}\int \d  U  \,  |{\tt ctrl-}U\> \!\> \<\!\< {\tt ctrl-}U| \otimes |\overline{U} \>\!\> \<\!\< \overline{U}| \ .
\end{align}
The evaluation of the maximum fidelity, carried out in \ref{app:ctrl}, yields the optimal  fidelity 
\begin{align*} F_{\max}=\frac 12 \,.
\end{align*} By direct inspection, one can check that this is the same fidelity achieved by a network that measures the control qubit in the computational basis $\{  |0\>,  |1\>\}$ and applies the unknown gate $U$ when the outcome is 1. Specifically, such strategy turns the input gate $U$ into the classically-controlled channel   $\map C_U$ defined by 
\[\map  C_U  (\rho  \otimes \sigma) :  =  \< 0|  \sigma|  0\>  \,    \rho   +    \<1|  \sigma|1\>  \,  U \rho U^\dag \, ,     \]
where $\rho$ is an arbitrary state of the system and $\sigma$ is an arbitrary state of the control qubit. It is immediate to check that the fidelity between the classically-controlled channel $\map C_U$ and the control-unitary gate is $1/2$ for every unitary.  
The above argument shows  that no quantum circuit can perform better than a  classical circuit where the  control qubit is decohered by a measurement.

\subsection{Maximization of the payoff in a  non-causal quantum game  \label{bell}} Here we consider the non-causal  game  introduced by Oreshkov, Costa, and Brukner  in  Ref.  \cite{oreshkov-2012-nature}.   The game involves two spatially separated parties, Alice and Bob, and a referee,  who sends inputs to and receives outputs from the players. Specifically,  the referee sends an input bit $a$ to Alice and two input bits $b$ and $b'$ to Bob. Then, the referee  demands  one output bit $x$ from Alice and one output bit $y$ from Bob.   
  The referee assigns a score $\omega  (x,y|a,b,b')$, given by  
   \begin{align}
 \omega (x,y|a,b,0)= \begin{cases}
  1  &  \qquad     x=b \\
  0 &  \qquad x\not  =  b \, ,
  \end{cases} 
    \qquad &{\rm and} \qquad
 \omega(x,y|a,b,1)= \begin{cases}
  1  &  \qquad  y=a \\
  0 &\qquad y\not =  a \, .
\end{cases}
\end{align}
In this game, Alice and Bob are not subject to the no-signalling constraint. In principle, Alice may be able to communicate to Bob, or  vice-versa.  
The only constraint is that Alice and Bob can interact only through a fixed network, which allows for communication at most in one-way: either from Alice to Bob, or from Bob to Alice.   

It is interesting to see how quantum resources can help Alice and Bob.      The most general quantum resource is described by a network that connects Alice's operations to Bob's operations.    The network will provide inputs $A^{\rm in}$ and $B^{\rm in}$ to Alice and Bob, respectively.   Alice and Bob then perform local operations, transforming systems $A^{\rm in}$ and $B^{\rm in}$ them into systems $A^{\rm out}$ and $B^{\rm out}$.    The  local operations depend on the inputs $a$ and $(b,b')$ and will generate the outputs $x$ and $y$, respectively.  Diagrammatically, this scenario is depicted in  Fig. \ref{game}. 

\begin{figure}
\begin{center}
\hspace*{0cm}\includegraphics[scale=0.5]{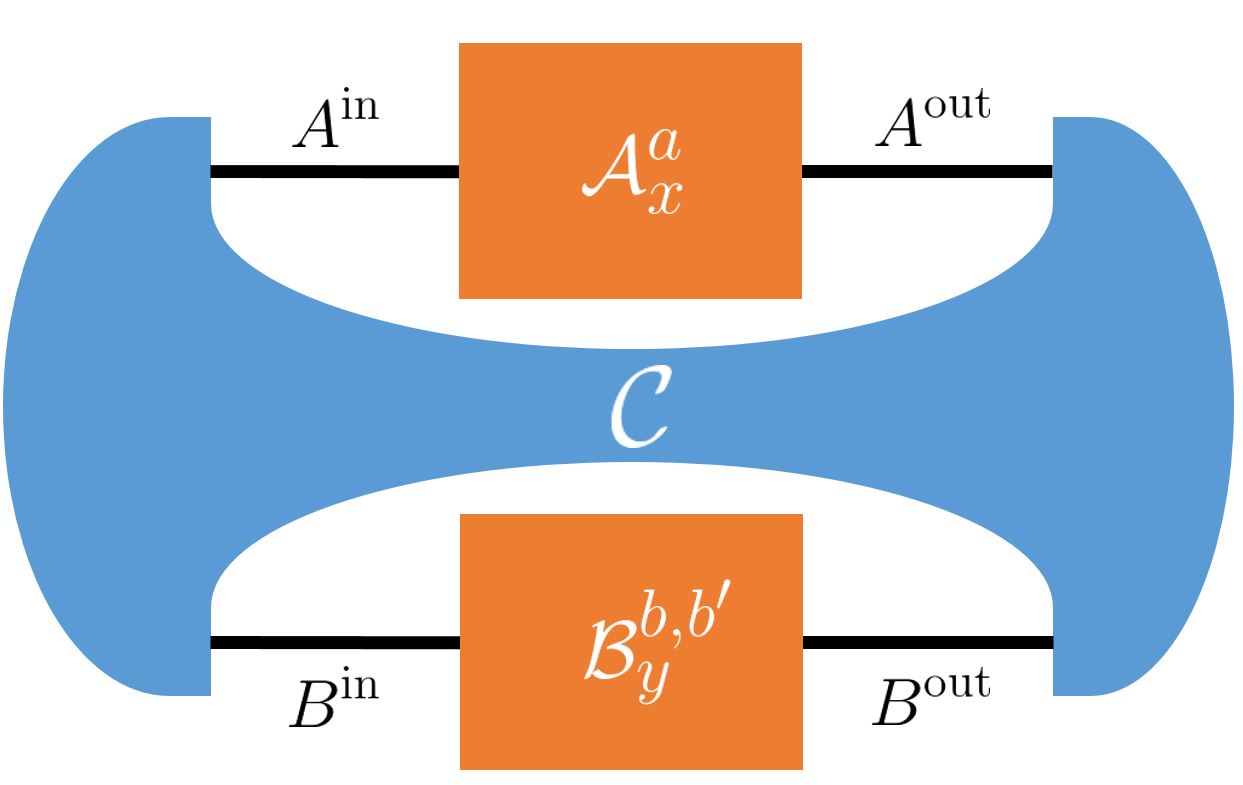}
\end{center}
\caption{The quantum operations $\map A^{a}_x$ and $\map B^{b,b'}_y$ in Alice's and Bob's laboratory interact through a non-causal network $\map C$. The operations act on the Hilbert spaces $\spc H_{A^{\rm in},A^{\rm out}}$ and $\spc H_{B^{\rm in},B^{\rm out}}$ respectively. The  network creates the input systems $A^{\rm in}$ and $B^{\rm in}$   and annihilates the  output systems $A^{\rm out}$ and $B^{\rm out}$.}
\label{game}
\end{figure} 
Mathematically, the  operations are described by two quantum instruments $\left\{  \map M^a_x\right\}_{x=0,1}$  and  $\left\{  \map N^{b,b'}_y\right\}_{y=0,1}$.  With these settings, the probability distribution of the outputs is  given by
\[p(x,y|a,b,b')=\Tr \left[ \left (  M_x^a \otimes N_y^{b,b'}  \right)   \,C  \right ] \, \] 
 where 
  $\{  M_x^a\}_{x=0,1}$ and $\{  N_y^{b,b'}\}_{y=0,1}$ are the Choi operators of Alice's and Bob's instruments, respectively, and $C$ is the Choi operator of the network that mediates the interaction.  
 
With this settings,  the average score is given by 
\begin{align*}
  \omega  &=\frac 18  \,  \sum_{a,b,b',x,y} \,  \omega(x,y|a,b,b')  \, p(x,y|a,b,b')\\
    &  =  \Tr \left[  \Omega  \,  C \right]   \, ,
\end{align*}
where $\Omega$ is the performance operator
\begin{align}\label{noncausalpay}
\Omega    & :=\frac{1}{8}  \,  \sum_{a,b,b',x,y} \,  \omega(x,y|a,b,b')  \,  \left(  M_x^a \otimes N_y^{b,b'} \right) \, .
\end{align}  
   
The main result by  Oreshkov, Costa, and Brukner is that the average score is upper bounded as $\omega  \le 3/4$ whenever the network $C$ has a definite causal order, whereas there exists a non-causal network $C_*$ and local operations $\left\{  \map M^a_{x*}\right\}_{x=0,1}$  and  $\left\{  \map N^{b,b'}_{y*}\right\}_{y=0,1}$  that achieve score
\begin{align}\label{ocb}\omega_*  =  \frac 12 \left(  1  +   \frac1{\sqrt 2}\right) \, .  
\end{align} 
Specifically, the score $\omega_*$ is achieved by choosing systems $A^{\rm in},B^{\rm in},  A^{\rm out}, B^{\rm out}$ to be qubits and by choosing the local operations with Choi operators  
\begin{align} \label{abob}
M_{x*}^a=& \frac{1}{4} \,  \left [I+(-1)^x \sigma_z \right]_{A^{\rm in}}\otimes \left[I+(-1)^a \sigma_z \right]_{A^{\rm out}} \nonumber \\
\nonumber N_{y*}^{b,b'}=&  \,  b' \, \left [   \frac{1}{2}  \left [I+(-1)^y \sigma_z \right]_{B^{\rm in}} \otimes \rho_{B^{\rm out}} \right\} \\
  & + (b'\oplus 1) \left\{  \frac{1}{4} \,  \left[I+(-1)^y \sigma_x \right]_{B^{\rm in}} \otimes \left [I+(-1)^{b+y} \sigma_z \right]_{B^{\rm out}}  \right\} \, .
\end{align}
where $\oplus$ denotes the addition  modulo 2 and $\rho_{B^{\rm out}}$ is a fixed  quantum state on Bob's output,  which  can be chosen  to be the maximally mixed state without loss of generality.

The score $\omega$ can be regarded  as a measure of the non-causality of the network mediating the interactions between Alice and Bob.   
An interesting question is whether $\omega_*$ is the maximum score attainable when Alice's and Bob's instruments  (\ref{abob})  are connected by an arbitrary   non-causal network.      This  question has been indirectly  answered  by Brukner \cite{brukner-2015-njp}, who considered a  more general scenario,   wherein Alice's and Bob's local operations are also subject to optimization.  Brukner showed that the payoff  $\omega_*  =   (1+  1/\sqrt 2)/2$ is maximum over all non-causal networks and over  a certain class of two-outcome instruments on Alice's and Bob's side, allowing Alice's and Bob's systems to have generic dimensions.    When Alice's and Bob's operations are fixed to the  qubit operations (\ref{abob}) used  in the original paper \cite{oreshkov-2012-nature}, we now present an alternative (and comparatively shorter) optimality proof for the value    $\omega_*  =   (1+  1/\sqrt 2)/2$.   This result serves as an illustration of the SDP method, which provides here a nice and straightforward solution. 

Inserting Eq. (\ref{abob}) into Eq. (\ref{noncausalpay})    we obtain the performance operator
 \begin{align*}
\Omega   
&  =  \sum_{i,j,k}   \, |i \> \< i|_{A^{\rm in}} \otimes |j \> \< j|_{A^{\rm out}} \otimes \Omega_{ijk} \otimes |k \> \<k|_{B^{\rm out}}  
\end{align*}
where  $\Omega_{ijk}$ are  operators acting on $B_1$ and are defined as 
\begin{align}
&\Omega_{000}=\frac{1}{8}(|+ \> \< +|+ |0 \> \< 0|), \qquad
&\Omega_{001}=\frac{1}{8}(|- \> \< -|+ |0 \> \< 0|),   \nonumber \\
&\Omega_{010}  =\frac{1}{8}(|+ \> \< +|+ |1 \> \< 1|), \qquad
&\Omega_{011}=\frac{1}{8}(|- \> \< -|+ |1 \> \< 1|), \nonumber \\
&\Omega_{100}  =\frac{1}{8}(|- \> \< -|+ |0 \> \< 0|) , \qquad
&\Omega_{101}=\frac{1}{8}(|+ \> \< +|+ |0 \> \< 0|),   \nonumber \\
&\Omega_{110}=\frac{1}{8}(|- \> \< -|+ |1 \> \< 1|), \qquad
&\Omega_{111}=\frac{1}{8}(|+ \> \< +|+ |1 \> \< 1|) . \nonumber
\end{align}

Now, the dual optimization problem is to find the minimum $\lambda$ such that $\lambda \, \Gamma  \ge  \Omega$, for some Choi operator $\Gamma$ representing a no-signalling channel.  The key observation is that all the  $\Omega_{ijk}$  have the same maximum eigenvalue, equal to  $   e_{\max}   =     1/8  (  1+   1/\sqrt 2)$.    As a result, we can satisfy the  dual constraint by setting  $\lambda  =  1/2  (  1+  1/\sqrt 2)$ and $\Gamma   =  I_{A^{\rm in} A^{\rm out} B^{\rm in} B^{\rm out}}/4$.  Note that  $\Gamma$ is  the Choi operator of a no-signalling channel, as it satisfies Eqs. (\ref{nosiga}) and (\ref{nosigb}).  
Hence, we obtain the bound 
\begin{align}
\label{qubitbound}\omega   \le  \frac 12 \left (  1+   \frac 1 {\sqrt 2}\right) \, ,
\end{align}
valid for every non-causal network.    The bound can be achieved, since r.h.s. matches the value in Eq. (\ref{ocb}). 
\\

\section{Conclusions \label{sec:conclu}}

We developed a semidefinite programming method for the optimization of  quantum networks.  The method can be applied to causal networks as well as more general networks with indefinite causal structure.     For a large class of optimization problems, we observed that the maximum performance  can be expressed in terms of a max relative entropy. 
 Building on this fact, we  extended  the notions of conditional min-entropy and max relative entropy from quantum states to quantum networks.   Specifically, the relative entropy between two networks can be characterized  as  the maximum of the relative entropy between the states that can be generated by the two networks.  
Similarly, the min-entropy of a quantum causal network can be characterized as the maximum  min-entropy that the network can build by interacting over time with a sequence of quantum devices.  Intuitively, the network min-entropy can be   regarded as a measure of the amount of quantum correlations generated over a sequence of time steps.


Our results  have applications  to  a number of scenarios, including~e.~g.~the optimization of algorithms for quantum causal discovery \cite{ried2015quantum}, tomography of quantum  channels and causal networks  \cite{scott2008optimizing,bisio2009optimal,chiribella-dariano-2009-pra,ringbauer2015characterizing}, and quantum machine learning \cite{aimeur2006machine,bisio-chiribella-2010-pra,paparo2014quantum}. 
 Another stimulating avenue of future research is on the quantum engineering side, where our method can be adapted to deal with  optimization tasks  in the presence of  limited energy resources.  For example,  it is interesting to explore the causal networks that can be implemented at zero-energy cost, extending to the network scenario the results obtained in Ref. \cite{chiribella2015optimal} for individual state transitions. The interesting aspect here is the possibility to borrow energy resources at a certain time and to return them at later times, resulting in an overall zero energy balance.   As a further step, the extension from quantum networks working in the  zero-energy regime to network using bounded energy resources is even more compelling in view of future applications. 
  Exploring  how  energy and coherence across energy eigenstates  can be optimally allocated within a distributed system is expected to unveil new quantum advantages, leading to a new layer of optimization in the design of quantum technologies.

\medskip
{{\bf Acknowledgments.} 
We acknowledge the referees of this paper for useful comments that helped improving the presentation and strengthening our results. 
The research of our group is supported by the  Foundational Questions Institute (FQXi-RFP3-1325 and FQXi-MGA-1502),   the National Natural Science Foundation of China through Grant  No. 11675136,  the Hong Kong Research Grant Council
through Grant No. 17326616,   the Canadian Institute for Advanced Research,  the HKU Seed Funding for Basic Research, and   the John Templeton Foundation. GC is grateful to F Buscemi and YC Liang for useful discussions and to A Ac\'in, M Hoban, and R Chavez for organizing the workshop ``Quantum Networks",  Barcelona, March 30-April 1 2016, which offered the occasion for a stimulating exchange of ideas that benefitted this paper.}

\bigskip 
\bibliographystyle{unsrt}
\bibliography{combref}

\begin{thebibliography}{100}

\bibitem{aspelmeyer2005advances}
M~Aspelmeyer, A~Zeilinger, M~Lindenthal, G~Molina-Terriza, A~Poppe, K~Resch,
  R~Ursin, P~Walther, and TD~Jennewein.
\newblock Advanced quantum communications experiments with entangled photons.
\newblock {\em Quantum Communications and Cryptography}, page~45, 2005.

\bibitem{pirandola2015advances}
Stefano Pirandola, Jens Eisert, Christian Weedbrook, Akira Furusawa, and
  Samuel~L Braunstein.
\newblock Advances in quantum teleportation.
\newblock {\em Nature Photonics}, 9(10):641--652, 2015.

\bibitem{wang-2015-nature}
Xi-Lin Wang, Xin-Dong Cai, Zu-En Su, Ming-Cheng Chen, Dian Wu, Li~Li, Nai-Le
  Liu, Chao-Yang Lu, and Jian-Wei Pan.
\newblock Quantum teleportation of multiple degrees of freedom of a single
  photon.
\newblock {\em Nature}, 518(7540):516--519, 2015.

\bibitem{politi2009integrated}
Alberto Politi, Jonathan~CF Matthews, Mark~G Thompson, and Jeremy~L O'Brien.
\newblock Integrated quantum photonics.
\newblock {\em Selected Topics in Quantum Electronics, IEEE Journal of},
  15(6):1673--1684, 2009.

\bibitem{crespi2011integrated}
Andrea Crespi, Roberta Ramponi, Roberto Osellame, Linda Sansoni, Irene
  Bongioanni, Fabio Sciarrino, Giuseppe Vallone, and Paolo Mataloni.
\newblock Integrated photonic quantum gates for polarization qubits.
\newblock {\em Nature communications}, 2:566, 2011.

\bibitem{barends2013coherent}
R~Barends, J~Kelly, A~Megrant, D~Sank, E~Jeffrey, Yu~Chen, Y~Yin, B~Chiaro,
  J~Mutus, C~Neill, et~al.
\newblock Coherent josephson qubit suitable for scalable quantum integrated
  circuits.
\newblock {\em Physical review letters}, 111(8):080502, 2013.

\bibitem{devoret2013superconducting}
Michel~H Devoret and Robert~J Schoelkopf.
\newblock Superconducting circuits for quantum information: an outlook.
\newblock {\em Science}, 339(6124):1169--1174, 2013.

\bibitem{zwanenburg2013silicon}
Floris~A Zwanenburg, Andrew~S Dzurak, Andrea Morello, Michelle~Y Simmons,
  Lloyd~CL Hollenberg, Gerhard Klimeck, Sven Rogge, Susan~N Coppersmith, and
  Mark~A Eriksson.
\newblock Silicon quantum electronics.
\newblock {\em Reviews of modern physics}, 85(3):961, 2013.

\bibitem{cirac1997quantum}
Juan~Ignacio Cirac, Peter Zoller, H~Jeff Kimble, and Hideo Mabuchi.
\newblock Quantum state transfer and entanglement distribution among distant
  nodes in a quantum network.
\newblock {\em Physical Review Letters}, 78(16):3221, 1997.

\bibitem{acin2007entanglement}
Antonio Ac{\'\i}n, J~Ignacio Cirac, and Maciej Lewenstein.
\newblock Entanglement percolation in quantum networks.
\newblock {\em Nature Physics}, 3(4):256--259, 2007.

\bibitem{kimble-2008-nature}
H~Jeff Kimble.
\newblock The quantum internet.
\newblock {\em Nature}, 453(7198):1023--1030, 2008.

\bibitem{perseguers2010quantum}
S{\'e}bastien Perseguers, M~Lewenstein, A~Ac{\'\i}n, and J~Ignacio Cirac.
\newblock Quantum random networks.
\newblock {\em Nature Physics}, 6(7):539--543, 2010.

\bibitem{yin-2013-oe}
Juan Yin, Yuan Cao, Shu-Bin Liu, Ge-Sheng Pan, Jin-Hong Wang, Tao Yang,
  Zhong-Ping Zhang, Fu-Min Yang, Yu-Ao Chen, Cheng-Zhi Peng, et~al.
\newblock Experimental quasi-single-photon transmission from satellite to
  earth.
\newblock {\em Optics express}, 21(17):20032--20040, 2013.

\bibitem{broadbent-2009-book}
Anne Broadbent, Joseph Fitzsimons, and Elham Kashefi.
\newblock Universal blind quantum computation.
\newblock In {\em Foundations of Computer Science, 2009. FOCS'09. 50th Annual
  IEEE Symposium on}, pages 517--526. IEEE, 2009.

\bibitem{beals-2013-prs}
Robert Beals, Stephen Brierley, Oliver Gray, Aram~W Harrow, Samuel Kutin, Noah
  Linden, Dan Shepherd, and Mark Stather.
\newblock Efficient distributed quantum computing.
\newblock In {\em Proceedings of the Royal Society of London A: Mathematical,
  Physical and Engineering Sciences}, volume 469, page 20120686. The Royal
  Society, 2013.

\bibitem{barz-2012-nature}
Stefanie Barz, Elham Kashefi, Anne Broadbent, Joseph~F Fitzsimons, Anton
  Zeilinger, and Philip Walther.
\newblock Demonstration of blind quantum computing.
\newblock {\em Science}, 335(6066):303--308, 2012.

\bibitem{gutoski2007toward}
Gus Gutoski and John Watrous.
\newblock Toward a general theory of quantum games.
\newblock In {\em Proceedings of the thirty-ninth annual ACM symposium on
  Theory of computing}, pages 565--574. ACM, 2007.

\bibitem{chiribella-dariano-2009-pra}
Giulio Chiribella, Giacomo~Mauro D'Ariano, and Paolo Perinotti.
\newblock Theoretical framework for quantum networks.
\newblock {\em Physical Review A}, 80(2):022339, 2009.

\bibitem{chiribella2013short}
Giulio Chiribella, Giacomo~Mauro D'Ariano, Paolo Perinotti, Dirk Schlingemann,
  and Reinhard Werner.
\newblock A short impossibility proof of quantum bit commitment.
\newblock {\em Physics Letters A}, 377(15):1076--1087, 2013.

\bibitem{portmann2015causal}
Christopher Portmann, Christian Matt, Ueli Maurer, Renato Renner, and Bj{\"o}rn
  Tackmann.
\newblock Causal boxes: Quantum information-processing systems closed under
  composition.
\newblock {\em arXiv preprint arXiv:1512.02240}, 2015.

\bibitem{chiribella2011informational}
Giulio Chiribella, Giacomo~Mauro D?Ariano, and Paolo Perinotti.
\newblock Informational derivation of quantum theory.
\newblock {\em Physical Review A}, 84(1):012311, 2011.

\bibitem{hardy2011reformulating}
Lucien Hardy.
\newblock Reformulating and reconstructing quantum theory.
\newblock {\em arXiv preprint arXiv:1104.2066}, 2011.

\bibitem{hardy2012operator}
Lucien Hardy.
\newblock The operator tensor formulation of quantum theory.
\newblock {\em Philosophical Transactions of the Royal Society of London A:
  Mathematical, Physical and Engineering Sciences}, 370(1971):3385--3417, 2012.

\bibitem{tucci1995quantum}
Robert~R Tucci.
\newblock Quantum bayesian nets.
\newblock {\em International Journal of Modern Physics B}, 9(03):295--337,
  1995.

\bibitem{leifer2008quantum}
MS~Leifer and David Poulin.
\newblock Quantum graphical models and belief propagation.
\newblock {\em Annals of Physics}, 323(8):1899--1946, 2008.

\bibitem{leifer2013towards}
Matthew~S Leifer and Robert~W Spekkens.
\newblock Towards a formulation of quantum theory as a causally neutral theory
  of bayesian inference.
\newblock {\em Physical Review A}, 88(5):052130, 2013.

\bibitem{henson2014theory}
Joe Henson, Raymond Lal, and Matthew~F Pusey.
\newblock Theory-independent limits on correlations from generalized bayesian
  networks.
\newblock {\em New Journal of Physics}, 16(11):113043, 2014.

\bibitem{ried2015quantum}
Katja Ried, Megan Agnew, Lydia Vermeyden, Dominik Janzing, Robert~W Spekkens,
  and Kevin~J Resch.
\newblock A quantum advantage for inferring causal structure.
\newblock {\em Nature Physics}, 11(5):414--420, 2015.

\bibitem{chaves2015information}
Rafael Chaves, Christian Majenz, and David Gross.
\newblock Information--theoretic implications of quantum causal structures.
\newblock {\em Nature communications}, 6, 2015.

\bibitem{pienaar2015graph}
Jacques Pienaar et~al.
\newblock A graph-separation theorem for quantum causal models.
\newblock {\em New Journal of Physics}, 17(7):073020, 2015.

\bibitem{costa2015quantum}
Fabio Costa and Sally Shrapnel.
\newblock Quantum causal modelling.
\newblock {\em arXiv preprint arXiv:1512.07106}, 2015.

\bibitem{selinger2004towards}
Peter Selinger.
\newblock Towards a semantics for higher-order quantum computation.
\newblock In {\em Proceedings of the 2nd International Workshop on Quantum
  Programming Languages, TUCS General Publication}, volume~33, pages 127--143,
  2004.

\bibitem{chiribella-2009-arxiv}
G~Chiribella, GM~D'Ariano, P~Perinotti, and B~Valiron.
\newblock Beyond quantum computers.
\newblock {\em arXiv preprint arXiv:0912.0195}, 2009.

\bibitem{chiribella-dariano-2013-pra}
Giulio Chiribella, Giacomo~Mauro D'Ariano, Paolo Perinotti, and Benoit Valiron.
\newblock Quantum computations without definite causal structure.
\newblock {\em Physical Review A}, 88(2):022318, 2013.

\bibitem{nielsen2000quantum}
Michael Nielsen and I~Chuang.
\newblock Quantum information and computation.
\newblock {\em Quantum Information and Computation}, 2000.

\bibitem{kitaev2002classical}
Alexei~Yu Kitaev, Alexander Shen, and Mikhail~N Vyalyi.
\newblock {\em Classical and quantum computation}, volume~47.
\newblock American Mathematical Society Providence, 2002.

\bibitem{chiribella-2012-pra}
Giulio Chiribella.
\newblock Perfect discrimination of no-signalling channels via quantum
  superposition of causal structures.
\newblock {\em Physical Review A}, 86(4):040301, 2012.

\bibitem{colnagh-2012-pla}
Timoteo Colnaghi, Giacomo~Mauro D'Ariano, Stefano Facchini, and Paolo
  Perinotti.
\newblock Quantum computation with programmable connections between gates.
\newblock {\em Physics Letters A}, 376(45):2940--2943, 2012.

\bibitem{araujo-2014-prl}
Mateus Ara{\'u}jo, Fabio Costa, and {\v{C}}aslav Brukner.
\newblock Computational advantage from quantum-controlled ordering of gates.
\newblock {\em Physical review letters}, 113(25):250402, 2014.

\bibitem{hardy-2009-book}
Lucien Hardy.
\newblock Quantum gravity computers: On the theory of computation with
  indefinite causal structure.
\newblock In {\em Quantum Reality, Relativistic Causality, and Closing the
  Epistemic Circle}, pages 379--401. Springer, 2009.

\bibitem{oreshkov-2012-nature}
Ognyan Oreshkov, Fabio Costa, and {\v{C}}aslav Brukner.
\newblock Quantum correlations with no causal order.
\newblock {\em Nature communications}, 3:1092, 2012.

\bibitem{baumeler-2013-arxiv}
{\"A}.~{Baumeler} and S.~{Wolf}.
\newblock {Perfect signaling among three parties violating predefined causal
  order}.
\newblock {\em ArXiv e-prints}, December 2013.

\bibitem{morimae2014acausal}
Tomoyuki Morimae.
\newblock Acausal measurement-based quantum computing.
\newblock {\em Physical Review A}, 90(1):010101, 2014.

\bibitem{baumeler-wolf-2016}
{\"A}min Baumeler and Stefan Wolf.
\newblock Non-causal computation avoiding the grandfather and information
  antinomies.
\newblock {\em arXiv preprint arXiv:1601.06522}, 2016.

\bibitem{akibue2016entanglement}
Seiseki Akibue, Masaki Owari, Go~Kato, and Mio Murao.
\newblock Entanglement assisted classical communication simulates" classical
  communication" without causal order.
\newblock {\em arXiv preprint arXiv:1602.08835}, 2016.

\bibitem{kitaev-watrous-2000}
Alexei Kitaev and John Watrous.
\newblock Parallelization, amplification, and exponential time simulation of
  quantum interactive proof systems.
\newblock In {\em Proceedings of the thirty-second annual ACM symposium on
  Theory of computing}, pages 608--617. ACM, 2000.

\bibitem{chiribella-dariano-2008-prl}
Giulio Chiribella, G~Mauro D'Ariano, and Paolo Perinotti.
\newblock Quantum circuit architecture.
\newblock {\em Physical review letters}, 101(6):060401, 2008.

\bibitem{renner-wolf-2004-book}
Renato Renner and Stefan Wolf.
\newblock Smooth r{\'e}nyi entropy and applications.
\newblock In {\em IEEE International Symposium on Information Theory}, pages
  233--233, 2004.

\bibitem{renner-2008-thesis}
Renato Renner.
\newblock Security of quantum key distribution.
\newblock {\em International Journal of Quantum Information}, 6(01):1--127,
  2008.

\bibitem{datta-renner-2009}
Nilanjana Datta and Renato Renner.
\newblock Smooth entropies and the quantum information spectrum.
\newblock {\em Information Theory, IEEE Transactions on}, 55(6):2807--2815,
  2009.

\bibitem{konig-renner-2009-ieee}
Robert K{\"o}nig, Renato Renner, and Christian Schaffner.
\newblock The operational meaning of min-and max-entropy.
\newblock {\em Information Theory, IEEE Transactions on}, 55(9):4337--4347,
  2009.

\bibitem{tomamichel-2015-book}
Marco Tomamichel.
\newblock {\em Quantum Information Processing with Finite Resources:
  Mathematical Foundations}, volume~5.
\newblock Springer, 2015.

\bibitem{chiribella-dariano-2008-epl}
Giulio Chiribella, Giacomo~Mauro D'Ariano, and Paolo Perinotti.
\newblock Transforming quantum operations: Quantum supermaps.
\newblock {\em EPL (Europhysics Letters)}, 83(3):30004, 2008.

\bibitem{brukner-2015-njp}
{\v{C}}aslav Brukner.
\newblock Bounding quantum correlations with indefinite causal order.
\newblock {\em New Journal of Physics}, 17(8):083034, 2015.

\bibitem{kraus-1983}
Karl Kraus, Arno B{\"o}hm, John~D Dollard, and William~H Wootters.
\newblock States, effects, and operations fundamental notions of quantum
  theory.
\newblock 1983.

\bibitem{choi-1975}
Man-Duen Choi.
\newblock Completely positive linear maps on complex matrices.
\newblock {\em Linear algebra and its applications}, 10(3):285--290, 1975.

\bibitem{yurke1992}
Bernard Yurke and David Stoler.
\newblock Einstein-podolsky-rosen effects from independent particle sources.
\newblock {\em Physical review letters}, 68(9):1251, 1992.

\bibitem{zukowski1993}
M~Zukowski, A~Zeilinger, MA~Horne, and AK~Ekert.
\newblock ``event-ready-detectors" bell experiment via entanglement swapping.
\newblock {\em Physical Review Letters}, 71(26):4287--4290, 1993.

\bibitem{chiribella2010probabilistic}
Giulio Chiribella, Giacomo~Mauro D?Ariano, and Paolo Perinotti.
\newblock Probabilistic theories with purification.
\newblock {\em Physical Review A}, 81(6):062348, 2010.

\bibitem{thomas2001introduction}
Thomas~H.. Cormen, Charles~Eric Leiserson, Ronald~L Rivest, and Clifford Stein.
\newblock {\em Introduction to algorithms}, volume~6.
\newblock MIT press Cambridge, 2001.

\bibitem{chiribella2008memory}
Giulio Chiribella, Giacomo~M D'Ariano, and Paolo Perinotti.
\newblock Memory effects in quantum channel discrimination.
\newblock {\em Physical review letters}, 101(18):180501, 2008.

\bibitem{ziman2008process}
M{\'a}rio Ziman.
\newblock Process positive-operator-valued measure: A mathematical framework
  for the description of process tomography experiments.
\newblock {\em Physical Review A}, 77(6):062112, 2008.

\bibitem{bisio2010}
Alessandro Bisio, Giulio Chiribella, Giacomo~Mauro D’Ariano, Stefano
  Facchini, and Paolo Perinotti.
\newblock Optimal quantum learning of a unitary transformation.
\newblock {\em Physical Review A}, 81(3):032324, 2010.

\bibitem{giovannetti2006quantum}
Vittorio Giovannetti, Seth Lloyd, and Lorenzo Maccone.
\newblock Quantum metrology.
\newblock {\em Physical review letters}, 96(1):010401, 2006.

\bibitem{giovannetti2011advances}
Vittorio Giovannetti, Seth Lloyd, and Lorenzo Maccone.
\newblock Advances in quantum metrology.
\newblock {\em Nature Photonics}, 5(4):222--229, 2011.

\bibitem{gutoski2012}
Gus Gutoski.
\newblock On a measure of distance for quantum strategies.
\newblock {\em Journal of Mathematical Physics}, 53(3):032202, 2012.

\bibitem{jenvcova2016conditions}
Anna Jen{\v{c}}ov{\'a} and Martin Pl{\'a}vala.
\newblock Conditions for optimal input states for discrimination of quantum
  channels.
\newblock {\em arXiv preprint arXiv:1603.01437}, 2016.

\bibitem{sedlak2016incompatible}
Michal Sedl{\'a}k, Daniel Reitzner, Giulio Chiribella, and M{\'a}rio Ziman.
\newblock Incompatible measurements on quantum causal networks.
\newblock {\em Physical Review A}, 93(5):052323, 2016.

\bibitem{bisio2009optimal}
Alessandro Bisio, Giulio Chiribella, GM~D'Ariano, Stefano Facchini, and Paolo
  Perinotti.
\newblock Optimal quantum tomography of states, measurements, and
  transformations.
\newblock {\em Physical review letters}, 102(1):010404, 2009.

\bibitem{pollock2015complete}
Felix~A Pollock, C{\'e}sar Rodr{\'\i}guez-Rosario, Thomas Frauenheim, Mauro
  Paternostro, and Kavan Modi.
\newblock Complete framework for efficient characterisation of non-markovian
  processes.
\newblock {\em arXiv preprint arXiv:1512.00589}, 2015.

\bibitem{modi2012unification}
K~Modi, L~Mazzola, A~Aspuru-Guzik, et~al.
\newblock Unification of witnessing initial system-environment correlations and
  witnessing non-markovianity.
\newblock {\em EPL (Europhysics Letters)}, 99(2):20010, 2012.

\bibitem{Deutsch73}
D.~Deutsch.
\newblock Quantum computational networks.
\newblock {\em Proceedings of the Royal Society of London A: Mathematical,
  Physical and Engineering Sciences}, 425(1868):73--90, 1989.

\bibitem{marvian2016}
Iman Marvian and Seth Lloyd.
\newblock Universal quantum emulator.
\newblock {\em arXiv preprint arXiv:1606.02734}, 2016.

\bibitem{chiribella2008optimal}
Giulio Chiribella, Giacomo~Mauro D?Ariano, and Paolo Perinotti.
\newblock Optimal cloning of unitary transformation.
\newblock {\em Physical review letters}, 101(18):180504, 2008.

\bibitem{dur2015deterministic}
W~D{\"u}r, P~Sekatski, and M~Skotiniotis.
\newblock Deterministic superreplication of one-parameter unitary
  transformations.
\newblock {\em Physical review letters}, 114(12):120503, 2015.

\bibitem{chiribella2015universal}
Giulio Chiribella, Yuxiang Yang, and Cupjin Huang.
\newblock Universal superreplication of unitary gates.
\newblock {\em Physical review letters}, 114(12):120504, 2015.

\bibitem{mivcuda2016experimental}
M~Mi{\v{c}}uda, R~St{\'a}rek, I~Straka, M~Mikov{\'a}, M~Sedl{\'a}k,
  M~Je{\v{z}}ek, and J~Fiur{\'a}{\v{s}}ek.
\newblock Experimental replication of single-qubit quantum phase gates.
\newblock {\em Physical Review A}, 93(5):052318, 2016.

\bibitem{chiribella2016quantum}
Giulio Chiribella and Yuxiang Yang.
\newblock Quantum superreplication of states and gates.
\newblock {\em Frontiers of Physics}, 11(3):1--19, 2016.

\bibitem{grover-1996}
Lov~K. Grover.
\newblock {A Fast quantum mechanical algorithm for database search}.
\newblock 1996.

\bibitem{watrous-2011-notes}
John Watrous.
\newblock {\em Lecture notes on semidefinite programming}.
\newblock University of Waterloo, 2011.

\bibitem{boyd2004convex}
Stephen Boyd and Lieven Vandenberghe.
\newblock {\em Convex optimization}.
\newblock Cambridge university press, 2004.

\bibitem{datta-2009-ieee}
Nilanjana Datta.
\newblock Min-and max-relative entropies and a new entanglement monotone.
\newblock {\em Information Theory, IEEE Transactions on}, 55(6):2816--2826,
  2009.

\bibitem{brandao2011one}
Fernando~GSL Brandao and Nilanjana Datta.
\newblock One-shot rates for entanglement manipulation under non-entangling
  maps.
\newblock {\em Information Theory, IEEE Transactions on}, 57(3):1754--1760,
  2011.

\bibitem{horodecki2013fundamental}
Micha{\l} Horodecki and Jonathan Oppenheim.
\newblock Fundamental limitations for quantum and nanoscale thermodynamics.
\newblock {\em Nature communications}, 4, 2013.

\bibitem{brandao2015second}
Fernando Brand{\~a}o, Micha{\l} Horodecki, Nelly Ng, Jonathan Oppenheim, and
  Stephanie Wehner.
\newblock The second laws of quantum thermodynamics.
\newblock {\em Proceedings of the National Academy of Sciences},
  112(11):3275--3279, 2015.

\bibitem{datta2013smooth}
Nilanjana Datta, Milan Mosonyi, Min-Hsiu Hsieh, and Fernando~GSL Brandao.
\newblock A smooth entropy approach to quantum hypothesis testing and the
  classical capacity of quantum channels.
\newblock {\em IEEE Transactions on Information Theory}, 59(12):8014--8026,
  2013.

\bibitem{chiribella2013optimal}
Giulio Chiribella and Jinyu Xie.
\newblock Optimal design and quantum benchmarks for coherent state amplifiers.
\newblock {\em Physical review letters}, 110(21):213602, 2013.

\bibitem{jencova2014base}
A~Jen{\v{c}}ov{\'a}.
\newblock Base norms and discrimination of generalized quantum channels.
\newblock {\em Journal of Mathematical Physics}, 55(2):022201, 2014.

\bibitem{buscemi2016equivalence}
Francesco Buscemi and Nilanjana Datta.
\newblock Equivalence between divisibility and monotonic decrease of
  information in classical and quantum stochastic processes.
\newblock {\em Physical Review A}, 93(1):012101, 2016.

\bibitem{bae2016operational}
Joonwoo Bae and Dariusz Chruscinski.
\newblock Operational characterization of divisibility of dynamical maps.
\newblock {\em arXiv preprint arXiv:1601.05522}, 2016.

\bibitem{beckman-2001-pra}
David Beckman, Daniel Gottesman, MA~Nielsen, and John Preskill.
\newblock Causal and localizable quantum operations.
\newblock {\em Physical Review A}, 64(5):052309, 2001.

\bibitem{piani-2006-pra}
Marco Piani, Michal Horodecki, Pawel Horodecki, and Ryszard Horodecki.
\newblock Properties of quantum nonsignaling boxes.
\newblock {\em Physical Review A}, 74(1):012305, 2006.

\bibitem{dariano-2011-prl}
Giacomo~Mauro D'Ariano, Stefano Facchini, and Paolo Perinotti.
\newblock No signaling, entanglement breaking, and localizability in bipartite
  channels.
\newblock {\em Physical review letters}, 106(1):010501, 2011.

\bibitem{bennett-bernstein-1996-pra}
Charles Bennett, Herbert Bernstein, Sandu Popescu, and Benjamin Schumacher.
\newblock Concentrating partial entanglement by local operations.
\newblock {\em Physical Review A}, 53:2046--2052, Apr 1996.

\bibitem{popescu-1997-pra}
Sandu Popescu and Daniel Rohrlich.
\newblock Thermodynamics and the measure of entanglement.
\newblock {\em Physical Review A}, 56(5):R3319, 1997.

\bibitem{vedral-1997-prl}
Vlatko Vedral, Martin~B Plenio, Michael~A Rippin, and Peter~L Knight.
\newblock Quantifying entanglement.
\newblock {\em Physical Review Letters}, 78(12):2275, 1997.

\bibitem{friis2014implementing}
Nicolai Friis, Vedran Dunjko, Wolfgang D{\"u}r, and Hans~J Briegel.
\newblock Implementing quantum control for unknown subroutines.
\newblock {\em Physical Review A}, 89(3):030303, 2014.

\bibitem{araujo-2014-njp}
Mateus Ara{\'u}jo, Adrien Feix, Fabio Costa, and {\v{C}}aslav Brukner.
\newblock Quantum circuits cannot control unknown operations.
\newblock {\em New Journal of Physics}, 16(9):093026, 2014.

\bibitem{nakayama2015quantum}
Shojun Nakayama, Akihito Soeda, and Mio Murao.
\newblock Quantum algorithm for universal implementation of the projective
  measurement of energy.
\newblock {\em Physical review letters}, 114(19):190501, 2015.

\bibitem{bisio2015quantum}
Alessandro Bisio, Michele Dall'Arno, and Paolo Perinotti.
\newblock The quantum conditional statement.
\newblock {\em arXiv preprint arXiv:1509.01062}, 2015.

\bibitem{raginsky-2001-pla}
Maxim Raginsky.
\newblock A fidelity measure for quantum channels.
\newblock {\em Physics Letters A}, 290(1):11--18, 2001.

\bibitem{nielsen-2002-pla}
Michael~A Nielsen.
\newblock A simple formula for the average gate fidelity of a quantum dynamical
  operation.
\newblock {\em Physics Letters A}, 303(4):249--252, 2002.

\bibitem{horodecki-1999-pra}
Micha\l{} Horodecki, Pawe\l{} Horodecki, and Ryszard Horodecki.
\newblock General teleportation channel, singlet fraction, and
  quasidistillation.
\newblock {\em Phys. Rev. A}, 60:1888--1898, Sep 1999.

\bibitem{nielsen-chuang-1997-prl}
Michael Nielsen and Isaac Chuang.
\newblock Programmable quantum gate arrays.
\newblock {\em Physical Review Letters}, 79(2):321, 1997.

\bibitem{buvzek1999optimal}
V~Bu{\v{z}}ek, M~Hillery, and RF~Werner.
\newblock Optimal manipulations with qubits: Universal-not gate.
\newblock {\em Physical Review A}, 60(4):R2626, 1999.

\bibitem{rungta-2001-pra}
Pranaw Rungta, Vladimir Bu{\v{z}}ek, Carlton~M Caves, Mark Hillery, and
  Gerard~J Milburn.
\newblock Universal state inversion and concurrence in arbitrary dimensions.
\newblock {\em Physical Review A}, 64(4):042315, 2001.

\bibitem{chiribella-dariano-2005-pra}
Giulio Chiribella, GM~D'Ariano, and MF~Sacchi.
\newblock Optimal estimation of group transformations using entanglement.
\newblock {\em Physical Review A}, 72(4):042338, 2005.

\bibitem{horodecki2002method}
Pawe{\l} Horodecki and Artur Ekert.
\newblock Method for direct detection of quantum entanglement.
\newblock {\em Physical review letters}, 89(12):127902, 2002.

\bibitem{horodecki2003limits}
Pawe{\l} Horodecki.
\newblock From limits of quantum operations to multicopy entanglement witnesses
  and state-spectrum estimation.
\newblock {\em Physical Review A}, 68(5):052101, 2003.

\bibitem{buscemi2003optimal}
F~Buscemi, GM~D'Ariano, P~Perinotti, and MF~Sacchi.
\newblock Optimal realization of the transposition maps.
\newblock {\em Physics Letters A}, 314(5):374--379, 2003.

\bibitem{kalev2013optimal}
Amir Kalev and Joonwoo Bae.
\newblock Optimal approximate transpose map via quantum designs and its
  applications to entanglement detection.
\newblock {\em Physical Review A}, 87(6):062314, 2013.

\bibitem{lim2011experimental}
Hyang-Tag Lim, Yong-Su Kim, Young-Sik Ra, Joonwoo Bae, and Yoon-Ho Kim.
\newblock Experimental realization of an approximate partial transpose for
  photonic two-qubit systems.
\newblock {\em Phys. Rev. Lett.}, 107:160401, Oct 2011.

\bibitem{scott2008optimizing}
Andrew~James Scott.
\newblock Optimizing quantum process tomography with unitary 2-designs.
\newblock {\em Journal of Physics A: Mathematical and Theoretical},
  41(5):055308, 2008.

\bibitem{ringbauer2015characterizing}
Martin Ringbauer, Christopher~J Wood, Kavan Modi, Alexei Gilchrist, Andrew~G
  White, and Alessandro Fedrizzi.
\newblock Characterizing quantum dynamics with initial system-environment
  correlations.
\newblock {\em Physical review letters}, 114(9):090402, 2015.

\bibitem{aimeur2006machine}
Esma A{\"\i}meur, Gilles Brassard, and S{\'e}bastien Gambs.
\newblock Machine learning in a quantum world.
\newblock In {\em Advances in artificial intelligence}, pages 431--442.
  Springer, 2006.

\bibitem{bisio-chiribella-2010-pra}
Alessandro Bisio, Giulio Chiribella, Giacomo~Mauro D'Ariano, Stefano Facchini,
  and Paolo Perinotti.
\newblock Optimal quantum learning of a unitary transformation.
\newblock {\em Physical Review A}, 81(3):032324, 2010.

\bibitem{paparo2014quantum}
Giuseppe~Davide Paparo, Vedran Dunjko, Adi Makmal, Miguel~Angel Martin-Delgado,
  and Hans~J Briegel.
\newblock Quantum speedup for active learning agents.
\newblock {\em Physical Review X}, 4(3):031002, 2014.

\bibitem{chiribella2015optimal}
Giulio Chiribella and Yuxiang Yang.
\newblock Optimal processing of quantum coherence at zero energy cost.
\newblock {\em arXiv preprint arXiv:1502.00259}, 2015.

\end{thebibliography}


\appendix

\section{Proof of Theorem \ref{lem:geomdual}}\label{app:theo1}

\Proof
By definition, the value of the primal problem is given by 
\begin{align*}
\omega_{\rm primal}   &  =    \sup\{    \<  A,  X\>  ~|~   X\ge 0 \, ,  X  \in  \set S   \, ,\} \\
&    =    \sup  \left\{  \< A,  X  \>  ~|~    X\ge 0  \, , X  \in   \overline {\overline 
 {\set S}}  \,  \right\}    \\
 &  =   \sup  \left\{  \< A,  X  \>  ~|~    X\ge 0  \, ,   \<   \Gamma,  X\>   =1 \, , \forall \Gamma \in \overline 
 {\set S}  \,   \right\}   \, ,
 \end{align*}
 having used the  relation $\set S   =\overline {\overline 
 {\set S}} $.  
   Now, let us pick an affine  basis for $\overline {\set S}$, say $(\Gamma_i)_{i=1}^K $  and re-write the value of the primal problem as 
   \begin{align*}
\omega_{\rm primal} &   =    \sup \{  \< A,  X  \>  ~|~  X\ge 0 \, ,    \<\Gamma_i ,  X\>  =  1 \, ,   \,\forall  i\in\{  1,\dots,  K\}  \,  \} \, .   
\end{align*}
Weak duality then yields the relation 
\begin{align}
\label{internalslater}\omega_{\rm primal}   &\le     \inf \left\{   \sum_{i=1}^K   \lambda_i ~|~   \lambda_i  \in \R  \, ,      \sum_i  \lambda_i  \,  \Gamma_i  \ge A \right\}  \\
\label{nonzero} &\le     \inf \left\{   \sum_{i=1}^K   \lambda_i ~|~   \lambda_i  \in \R  \, ,      \sum_i  \lambda_i  \,  \Gamma_i  \ge A  \, ,   \sum_i  \lambda_i     \not = 0 \right\}  \\ 
\label{dualset} &  = \inf_{\Gamma\in \overline {\set S}}  \, \min \left\{   \lambda \in \R  ~|~  \lambda  \Gamma  \ge  A   \ \right\}  \,  ,    
\end{align}
having defined $\lambda:  =  \sum_i    \, \lambda_i$ and $\Gamma  :  =  \sum_i  \lambda_i \Gamma_i  /\lambda$.  
  
Now, suppose that  $\set S$ contains a positive operator  $X_0$ and $\overline {\set S}$ contains a strictly positive operator  $\Gamma_0$,  then Slater's Theorem implies the equality: indeed, one can choose the  affine basis  $(\Gamma_i)_{i=1}^K $  to contain the operator $\Gamma_0$. Since $\Gamma_0$ is strictly positive, one can find strictly positive coefficients $  (\lambda_i)_{i=1}^{K}$ such that $ \sum_{i}  \, \lambda_i  \,  \Gamma_i  \ge  A$.   This means that the dual problem in the r.h.s. of Eq. (\ref{internalslater}) admits a strictly positive solution.  Hence, proposition     \ref{Slater} implies the equality in Eq. (\ref{internalslater}).  The equality holds also in   Eq. (\ref{nonzero}), because every solution with $\sum_i  \lambda_i  =  0$ can be replaced by a new solution with   $\sum_i  \lambda_i'  =  \epsilon$, by substituting $\lambda_1$ with $\lambda_1  + \epsilon$,  $\epsilon >0$.  Since $\epsilon$ can be arbitrarily small, this substitution does not change the value of the  infimum.  If $A$ is positive, then one has the lower bound $\omega_{\rm primal}    \ge  \<  A  ,   X_0\>  \ge  0 $.   Eq. (\ref{dualset}) then implies that every $\lambda$ satisfying $\lambda \Gamma \ge A$,   $\Gamma \in \overline {\set S}$ must be  non-negative.  If $\lambda$ is strictly positive, the operator $\Gamma$ must be positive.  If $\lambda=  0$, the operator $\Gamma$ can be chosen to be positive without loss of generality.   In conclusion,   the infimum in Eq. (\ref{dualset}) can be restricted to $\overline {\set S}_+$.   Setting $w : =  1/\lambda$ one finally  obtains the desired expression.

\qed
  

\section{Proof of Theorem \ref{dual2}}\label{app:theo2}

\Proof  The maximum performance is given by Eq. (\ref{omegamax}). The expression can be re-written as 
 \begin{align}  \omega_{\max}     &:  =  \max  \left\{    \,           \<    \Omega, C\>~|~  C  \in \set S  \, , C\ge  0  \right\}   \, , 
 \end{align}
where $\set S$ is  the  affine space of all the   operators on   $  \bigotimes_{j=1}^{N} \, \left(   \spc H_j^{\rm out} \otimes \spc H_j^{\rm in}  \right)$ that are Hermitian and satisfy the linear constraint  (\ref{tr}). Note that  $\set S$ contains the strictly positive operator 
\begin{align}\label{C0}
C_0   =  I_0\otimes I_1 \otimes  \cdots \otimes I_{2N-1}/(d_1 d_3  \dots  d_{2N-1})
\end{align}
and  the dual affine space $\overline {\set S}$  contains the strictly positive operator 
\begin{align}\label{gamma0}
\Gamma_0   =  I_0\otimes I_1 \otimes  \cdots \otimes I_{2N-1}/(d_0 d_2  \dots  d_{2N-2}) 
\end{align}
 Since the sets $\set S$ and $\overline {\set S}$ contain strictly positive operators, the expression in Theorem \ref{lem:geomdual} holds with the equality.    Moreover, one can choose the performance operator $\Omega$ to be positive without loss of generality:  if $\Omega$ is not positive, one can define $\Omega'   =   \Omega  +   c  \Gamma_0$, where $c$ is a positive constant and $\Gamma_0$ is the operator in Eq. (\ref{gamma0}). This substitution only shifts the primal and dual values by the constant $c$, while preserving the optimal solutions.    For the shifted problem, Theorem \ref{lem:geomdual} guarantees that  the dual optimization can be restricted to the positive operators in  $\overline {\set S}_+ $, namely  
\[\omega'_{\max}  =    \inf_{  \Gamma \in \overline  {\set S}_+  }  \min\{\lambda \in \R ~|~  \lambda \Gamma \ge \Omega'  \} \, .\] 
Now, the set $\overline {\set S}_+$ has  been characterized in   Ref.   \cite{chiribella-dariano-2009-pra}: precisely, $\overline {\set S}^+$ is the set of all positive operators $\Gamma$ satisfying the linear constraint 
\begin{align}
\Gamma  =& I_{A_N^{\rm out} }\otimes \Gamma^{(N)} \nonumber \\
\Tr_{A_n^{\rm in}}[\Gamma^{(N)}]=&I_{A^{\rm out}_{n-1}}\otimes \Gamma^{(N-1)} \ , \quad n=2,\ldots,N \nonumber \\
\label{trgamma} \Tr_{A_1^{\rm in}}[\Gamma^{(1)}]=&1 \ , 
\end{align} 
for suitable  positive operators $\Gamma^{(n)}$ acting on  $ \spc H_n^{\rm in}  \otimes  \left[    \bigotimes_{j=1}^{n-1}   \left(  \,  \spc H^{\rm out}_j  \otimes \spc H_{j}^{\rm in} \right)\right]$.    Observing that $I^{A^{\rm out}}_N$ is the Choi operator of the trace channel $\Tr_{A_N^{\rm out}}$ and comparing Eq. (\ref{trgamma}) with Eq. (\ref{tr}) we then obtain that every operator $\Gamma$  in   $\overline {\set S}^+$ is the  Choi operator of a network of the form (\ref{dualcircuit}).    Hence, $\overline {\set S}^+   =  \Dual\Comb$.    Finally, note that the set $\Dual\Comb$ is compact and therefore the infimum is a minimum. 
\qed  

\section{Proof of Proposition \ref{prop:maxoutput}}\label{app:maxoutput}

\Proof     By definition, the max relative entropies are  given by  $D_{\max} (C_0 \|  C_1 )    =   -\log \max  \set W$ and $ D_{\max}  \left (  \sqrt \Gamma   C_0   \sqrt \Gamma  \, \| \,     \sqrt \Gamma   C_1   \sqrt \Gamma  \right  )    =   - \log \max \set W(\Gamma) $, 
 with
  \begin{align*}
 \set W  &: =   \{  w  \in  \R  ~|~  w    C_0   \le  C_1     \}\\
 \set W (\Gamma)  &:  =   \left\{  w  \in  \R  ~|~  w        \sqrt \Gamma C_0  \sqrt \Gamma   \le    \sqrt \Gamma  C_1     \sqrt \Gamma \right \} \, .
 \end{align*}
 By construction, one has $\set W   \subseteq  \set W  (\Gamma) $ for every $\Gamma$, and therefore 
\[   D_{\max}(  C_0 \,  \|     \,C_1)      \ge       D_{\max}  \left (  \sqrt \Gamma   C_0   \sqrt \Gamma  \, \| \,     \sqrt \Gamma   C_1   \sqrt \Gamma  \right  )  \, .  \]
  On the other hand, if $\overline{\set S}_+$  contains a full-rank element  $\Gamma_*$, then $\set W  (\Gamma_*)   =   \set W$.  
  \qed

\section{Proof of Proposition \ref{entropymotivation}}\label{app:entropyproof}

\Proof  Let us compute the conditional min-entropy of the output state  
\[\rho  =    \sigma*   D_1  *  E_1  *D_2  *E_2  *  \cdots  *  E_{N-1}  *  D_N   \in  \St \left(  B_N^{\rm out}  \otimes \spc B_N^{\rm out\prime}   \right)  \]
[cf. Eq. (\ref{state})].  
By the operational characterization of the conditional min-entropy [Eq. (\ref{entropystates})], we have  
\begin{align}   H_{\min}   ( B_N^{\rm out}  \,  |   \,  B_N^{\rm out'} )_\rho  &  =     \max_{  \begin{array}{c}  C  \ge 0  \, , \\  \Tr_{B_N^{\rm out}}  [C]  =  I_{B_N^{\rm out '}}    \end{array}}  \frac{  \Tr  [ \, \rho\,  C^T\, ]}{d_{B_N^{\rm out}} }  \, ,  \end{align}
where $C$ is the Choi operator of a recovery channel  $\map C$, which attempts to turn $\rho$ into the maximally entangled state $|\Phi\>$. Substituting the expression for $\rho$ and maximizing over the sequence $(\sigma,  E_1, \dots,  E_{N-1})$, we then obtain   
  \begin{align} 
 \nonumber  & \max_{\sigma, E_1, E_2, \dots , E_{N-1}}      \,     H_{\min}   ( B_N^{\rm out}  \,  | \,  B_N^{\rm out'} )_\rho   \\  
 \nonumber      & \qquad   = \max_{\sigma, E_1, E_2, \dots , E_{N-1}, C}  \,  \frac{\Tr \left[  \left(  \sigma*  D_1 *  E_1  *  D_2 *  E_2  * \dots *  E_{N-1}  * D_N \right)  \,   C^T \,  \right ]}{d_{B_N^{\rm out}}}   \\
     & \qquad   = \max_{\sigma, E_1, E_2, \dots , E_{N-1}, C}  \,  \frac{   \left(  \sigma*  D_1 *  E_1  *  D_2 *  E_2  * \dots *  E_{N-1}  * D_N \right) *    C  }{d_{B_N^{\rm out}}}   \\
      & \qquad   = \max_{\sigma, E_1, E_2, \dots , E_{N-1}, C}  \,  \frac{    E'  *    R }{d_{B_N^{\rm out}}} \, ,
       \end{align}
   having defined  \[   E'    =  \sigma*  E_1\cdots *E_{N-1}   *  C  \qquad {\rm and}  \qquad R  =  D_1 * \cdots  D_N  \, ,  \]  
     Now, note that $E'$ is the Choi operator of a network of the form of Eq. (\ref{scorer}). 
      Moreover, since the channel $\map C$ can be chosen to be the identity, $E'$ is the Choi operator of an \emph{arbitrary} network of the form of Eq. (\ref{scorer}).    Using   Eqs.  (\ref{Fidelity}) and (\ref{Fmax}) we finally obtain   
 \[   \max_{\sigma, E_1, E_2, \dots , E_{N-1}}      \,     H_{\min}   ( B_N^{\rm out}  \,  | \,  B_N^{\rm out'} )_\rho    =   H_{\min}  (t_N|  t_1\dots  t_{N-1})_R  \,.\]
\qed

\section{Proof of Proposition \ref{prop:maxrelativecausal}}\label{app:maxrelativecausal}  

 \Proof    The proof is based on Proposition \ref{prop:maxoutput}.    Take an operator  $\Gamma \in \Dual\Comb \left( \spc H_{A_1}^{\rm in}\, ,  \spc H_{A_1}^{\rm out} ,   \dots \, ,   \spc H_{A_N}^{\rm in}\, ,  \spc H_{A_N}^{\rm out}  \right)$ and diagonalize it as $\Gamma  =   \sum_i \,   g_i\,    |\phi_i\>\<\phi_i|$. 
 Choose $S$ to be the composite system $A_1^{\rm in}A_1^{\rm out}  \cdots A_N^{\rm in}  A_N^{\rm out}$ and define the vector  
 $  |\Psi\>  =   \sum_i \,  \sqrt {g_i} \,    |   \phi_i\>|\overline \phi_i\>  \in  \spc  H_{A_1^{\rm in}}  \otimes \spc  H_{A_1^{\rm out}}  \otimes \cdots   \spc  H_{A_N^{\rm in}}  \otimes \spc  H_{A_N^{\rm out}}  \otimes \spc H_S$.    Then, the positive operator  $E   =  |\Psi\>\<\Psi|$ is the Choi operator of a network of the form  of Eq. (\ref{scorer1}), as one can check from Eq. (\ref{tr}). 
  Then, explicit calculation gives 
\[    C^{(x)}   *    E    =     \sqrt \Gamma  C^{(x)}  \,  \sqrt \Gamma \, .\]   
Using Proposition (\ref{prop:maxoutput}) we then conclude the equality  
\[  D_{\max}   (C^{(0)} \,\| \, C^{(1)}  )      =   \max_\Gamma  \,     D_{\max}   (  C^{(0)}   *    E   \, \| \,  C^{(1)}   *    E)  \, .  \]        \qed  

\medskip

\section{Normalization condition for supermaps on product channels \label{app:condition}}   
Equation  (\ref{localsupermapchoi})  gives us the Choi operator   $C$. In order for $C$ to be the Choi operator  of a channel, we must have 
\begin{align}
\Tr_{ A_2^{\rm out}, B_2^{\rm out} }  [  C]   =         I_{A_2^{\rm in}} \otimes    I_{B_2^{\rm in}} \, .
\end{align}
Inserting Eq. (\ref{localsupermapchoi}), we then obtain the condition
\begin{align}\label{choicondition}
\Tr_{    A_2^{\rm out},  B_2^{\rm out}, A_1^{\rm in},  A_1^{\rm out}, B_1^{\rm in}, B_1^{\rm out} }    \left [     (   I_{A_2^{\rm out}}  \otimes I_{A_2^{\rm in}}  \otimes  N  \otimes  I_{\map B_2^{\rm out}}  \otimes   I_{  \map B_2^{\rm in}})   (  A  \otimes B)  \right]    =    I_{A_2^{\rm in}}  \otimes I_{B_2^{\rm in}} \, ,
\end{align}
which must be satisfied whenever $A$ and $B$ satisfy the conditions  
\begin{align}\label{choiachoib}
\Tr_{A_1^{\rm out},  A_2^{\rm out}}[A]=   I_{A_1^{\rm in}}  \otimes I_{A_2^{\rm in}}   \qquad {\rm and}    \qquad \Tr_{B_1^{\rm out},  B_2^{\rm out}}[B]=   I_{B_1^{\rm in}}  \otimes I_{B_2^{\rm in}}  \, .
\end{align}

Now, we have the following 
\begin{prop}
For every  operator $N$, the following conditions are equivalent:  
\begin{enumerate}
\item $N$ satisfies the condition    (\ref{choicondition}) for  every operators $A  $ and $B$ satisfying the condition  (\ref{choiachoib})
\item $N$ satisfies the condition  
\begin{align}\label{choicondition2}
\Tr_{ A_1^{\rm in},  A_1^{\rm out}, B_1^{\rm in}, B_1^{\rm out}}  [   N  \left(   \widetilde A  \otimes \widetilde B \right ) ]  =1  \end{align}
for every operators $\widetilde A  \in  \Lin  (  \spc H_{A_1^{\rm out}}\otimes   \spc H_{A_1^{\rm in}})$ and $\widetilde B  \in  \Lin  (  \spc H_{B_1^{\rm out}}\otimes   \spc H_{B_1^{\rm in}})$  satisfying the conditions  
\begin{align} \label{choiachoib2}
\Tr_{A_1^{\rm out}}[\widetilde A]=   I_{A_1^{\rm in}}    \qquad {\rm and}    \qquad \Tr_{B_1^{\rm out}}[ \widetilde B]=   I_{B_1^{\rm in}} \, .
\end{align}
\end{enumerate}
\end{prop}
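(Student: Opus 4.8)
The plan is to prove the equivalence of the two normalization conditions by unpacking the Choi-operator formalism and using the fact that every channel Choi operator decomposes in terms of the building blocks appearing in conditions (\ref{choiachoib}) and (\ref{choiachoib2}). First I would observe that condition (\ref{choiachoib}) is exactly the statement that $A$ (respectively $B$) is the Choi operator of a channel of type $A_1^{\rm in}A_2^{\rm in}\to A_1^{\rm out}A_2^{\rm out}$; in particular the set of such $A$ is convex and spans, modulo the normalization constraint, the relevant operator space. The key structural remark is that one may restrict attention to \emph{product} channels $A = \widetilde A\otimes \widehat A$ with $\Tr_{A_1^{\rm out}}[\widetilde A]=I_{A_1^{\rm in}}$ and $\Tr_{A_2^{\rm out}}[\widehat A]=I_{A_2^{\rm in}}$, and similarly for $B$: because condition (\ref{choicondition}) is linear in $A$ and in $B$, and because every channel Choi operator satisfying (\ref{choiachoib}) can be written as an affine combination of product channel Choi operators (fix any reference channel on $A_2$ and vary the $A_1$-part, or vice versa), verifying (\ref{choicondition}) on all admissible $A,B$ is equivalent to verifying it on all products of the above form.

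The implication (ii)$\Rightarrow$(i) would then go as follows: given product channels $A=\widetilde A\otimes\widehat A$ and $B=\widetilde B\otimes\widehat B$, plug them into the left-hand side of (\ref{choicondition}). The trace over $A_2^{\rm out},B_2^{\rm out}$ together with the identity factors $I_{A_2^{\rm out}}\otimes I_{A_2^{\rm in}}$ and $I_{B_2^{\rm out}}\otimes I_{B_2^{\rm in}}$ acts only on $\widehat A$ and $\widehat B$; using $\Tr_{A_2^{\rm out}}[\widehat A]=I_{A_2^{\rm in}}$ (and similarly for $\widehat B$) this factor reduces precisely to $I_{A_2^{\rm in}}\otimes I_{B_2^{\rm in}}$, while the remaining trace over $A_1^{\rm in},A_1^{\rm out},B_1^{\rm in},B_1^{\rm out}$ is exactly $\Tr[N(\widetilde A\otimes\widetilde B)]$, which equals $1$ by hypothesis (\ref{choicondition2}). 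This gives (\ref{choicondition}). Conversely, for (i)$\Rightarrow$(ii), I would specialize the general identity to the particular choice $\widehat A = I_{A_2^{\rm out}}\otimes I_{A_2^{\rm in}}/d_{A_2^{\rm out}}$ (the Choi operator of the depolarizing-to-maximally-mixed channel, which satisfies $\Tr_{A_2^{\rm out}}[\widehat A]=I_{A_2^{\rm in}}$) and similarly for $\widehat B$; comparing both sides of (\ref{choicondition}) after tracing, the right-hand side $I_{A_2^{\rm in}}\otimes I_{B_2^{\rm in}}$ is a nonzero multiple of the identity and the left-hand side equals $\Tr[N(\widetilde A\otimes\widetilde B)]\cdot (I_{A_2^{\rm in}}\otimes I_{B_2^{\rm in}})$, forcing the scalar $\Tr[N(\widetilde A\otimes\widetilde B)]$ to be $1$.

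The main obstacle I anticipate is being careful about which Hilbert spaces each partial trace and each identity operator acts on, and ensuring that the "product channel suffices" reduction is airtight — i.e.\ that the linear span of product channel Choi operators, intersected with the affine slice $\Tr_{A_1^{\rm out}A_2^{\rm out}}[A]=I_{A_1^{\rm in}A_2^{\rm in}}$, is the whole affine slice, so that no admissible $(A,B)$ escapes the verification. This is true because one can write any channel Choi operator $A$ as $A = \widetilde A_0\otimes\widehat A_0 + (A - \widetilde A_0\otimes\widehat A_0)$ where the correction term has zero partial trace and hence lies in the linear span of differences of product channel Choi operators; linearity of (\ref{choicondition}) then propagates the result. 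Once this bookkeeping is in place the computation itself is a one-line partial-trace manipulation, so I would keep the exposition short and emphasize the reduction step.
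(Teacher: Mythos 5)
Your direction (i)$\Rightarrow$(ii) is fine and is exactly what the paper does: pad $\widetilde A$ and $\widetilde B$ with the completely depolarizing channel on the $A_2,B_2$ systems and read off the scalar. The problem is in (ii)$\Rightarrow$(i). Your entire reduction rests on the claim that every Choi operator $A$ satisfying (\ref{choiachoib}) is an affine combination of product channel Choi operators $\widetilde A\otimes\widehat A$, so that checking (\ref{choicondition}) on products suffices by bilinearity. That claim is false. Every product $\widetilde A\otimes\widehat A$ satisfies the linear condition $\Tr_{A_2^{\rm out}}[\,\widetilde A\otimes\widehat A\,]=\widetilde A\otimes I_{A_2^{\rm in}}$, i.e.\ it is no-signalling from $A_2$ to $A_1$; this condition is preserved under arbitrary linear (hence affine) combinations, so the affine hull of product channels sits inside the proper affine subspace of no-signalling channels. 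A ${\tt SWAP}$ or ${\tt CNOT}$ channel between the $A_1$ and $A_2$ wires satisfies (\ref{choiachoib}) but violates that condition, so it is not an affine combination of products. Your closing remark that the correction term $A-\widetilde A_0\otimes\widehat A_0$ "has zero partial trace and hence lies in the linear span of differences of product channel Choi operators" is precisely where the argument breaks: having vanishing partial trace over $A_1^{\rm out}A_2^{\rm out}$ does not place an operator in that span (a dimension count, or the no-signalling obstruction above, shows the span is strictly smaller). Since condition (\ref{choiachoib}) deliberately allows channels that correlate and signal between the $A_1$ and $A_2$ subsystems, your verification never reaches them.

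The proposition is nonetheless true, and the paper closes the gap without decomposing $A$ at all. For arbitrary $A,B$ satisfying (\ref{choiachoib}), set $\overline A=\Tr_{A_2^{\rm out}}[A]$ and note that the left-hand side of (\ref{choicondition}) equals $\Tr_{A_1^{\rm in}A_1^{\rm out}B_1^{\rm in}B_1^{\rm out}}[(I\otimes N\otimes I)(\overline A\otimes\overline B)]$. The operator identity ${\rm LHS}=I_{A_2^{\rm in}}\otimes I_{B_2^{\rm in}}$ is then tested against arbitrary states $\rho$ on $A_2^{\rm in}$ and $\sigma$ on $B_2^{\rm in}$: the resulting scalar is $\Tr[N(\overline A_\rho\otimes\overline B_\sigma)]$ with $\overline A_\rho:=\Tr_{A_2^{\rm in}}[(\rho\otimes I)\overline A]$, and one checks directly that $\Tr_{A_1^{\rm out}}[\overline A_\rho]=I_{A_1^{\rm in}}$, so $\overline A_\rho$ satisfies (\ref{choiachoib2}) and hypothesis (ii) applies. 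In other words, the right move is not to write the bipartite channel as a combination of product channels, but to observe that slicing a general bipartite channel with a state on the spectator input always produces a legitimate single-system channel on $A_1$. If you replace your spanning argument with this observation, the rest of your computation goes through.
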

\begin{proof}
Suppose that the operators $\widetilde{A}$ and $\widetilde{B}$ satisfy the trace conditions (\ref{choiachoib2}).  By defining the operators $A$ and $B$ as $A= \widetilde{A}   \otimes I_{A_2^{\rm in},A_2^{\rm out}}/d_{A_2^{\rm in}}$ and $B=\widetilde{B}   \otimes I_{B_2^{\rm in},B_2^{\rm out}}/d_{B_2^{\rm in}}$, we see that  Eq. (\ref{choiachoib}) is satisfied. Then, Eq. (\ref{choicondition}) becomes
\begin{align}
& \Tr_{    A_2^{\rm out},  B_2^{\rm out}, A_1^{\rm in},  A_1^{\rm out}, B_1^{\rm in}, B_1^{\rm out} }    \left [     (   I_{A_2^{\rm out}}  \otimes I_{A_2^{\rm in}}  \otimes  N  \otimes  I_{ B_2^{\rm out}}  \otimes   I_{  B_2^{\rm in}})   (  A  \otimes B)  \right]  \nonumber \\
=& \Tr_{    A_2^{\rm out},  B_2^{\rm out}, A_1^{\rm in},  A_1^{\rm out}, B_1^{\rm in}, B_1^{\rm out} }    \left \{  \frac{ I_{A_2^{\rm out}} }{d_{A_2^{\rm in}}} \otimes  I_{A_2^{\rm in}}    \otimes   \left[ N(\widetilde{A}\otimes \widetilde{B}) \right]  \otimes \frac{  I_{B_2^{\rm out}}}{d_{B_2^{\rm in}}}  \otimes   I_{   B_2^{\rm in}}  \right\}    =    I_{A_2^{\rm in}}  \otimes I_{B_2^{\rm in}}  \, .
\end{align}
The above equation holds if and onyl if   condition  (\ref{choicondition2})  is satisfied.  
Conversely, if the operator $N$ satisfies condition (\ref{choicondition2}) and $\widetilde{A}$ and $\widetilde{B}$ the trace conditions (\ref{choiachoib2}), we obtain
\begin{align}
& \Tr_{    A_2^{\rm out},  B_2^{\rm out}, A_1^{\rm in},  A_1^{\rm out}, B_1^{\rm in}, B_1^{\rm out} }    \left [     (   I_{A_2^{\rm out}}  \otimes I_{A_2^{\rm in}}  \otimes  N  \otimes  I_{ B_2^{\rm out}}  \otimes   I_{ B_2^{\rm in}})   (  A  \otimes B)  \right]  \nonumber \\
=& \Tr_{A_1^{\rm in},A_1^{\rm out},B_1^{\rm in},B_1^{\rm out}}\left[  \left (I_{A_2^{\rm in}}\otimes N \otimes I_{B_2^{\rm in}}\right)\,  \left( \Tr_{A_2^{\rm out},B_2^{\rm out}} [A \otimes   B ] \right) \right] \nonumber \\
=& \Tr_{A_1^{\rm in},A_1^{\rm out},B_1^{\rm in},B_1^{\rm out}}\left[ \left(  I_{A_2^{\rm in}}\otimes N \otimes I_{B_2^{\rm in}} \right)  \,  \left(\overline{A} \otimes \overline{B} \right) \right] 
\end{align}
where we defined $\Tr_{A_2^{\rm out}}[A]=\overline{A}$ and $\Tr_{B_2^{\rm out}}[B]=\overline{B}$. Hence,  Eq. (\ref{choicondition}) holds if and only if 
\begin{align}
\Tr_{A_1^{\rm in},A_1^{\rm out},B_1^{\rm in},B_1^{\rm out}}\left[ \left(  I_{A_2^{\rm in}}\otimes N \otimes I_{B_2^{\rm in}} \right)  \,  \left(\overline{A} \otimes \overline{B} \right) \right]   =  I_{A_2^{\rm in}}  \otimes I_{B_2^{\rm in}} \, .
\end{align}
In turn, the above equation holds if and only if
\begin{align}\label{final}
\Tr_{A_1^{\rm in},A_1^{\rm out},B_1^{\rm in},B_1^{\rm out}}\left[   N   \,  \left(\overline{A}_\rho \otimes \overline{B}_\sigma \right) \right]   = 1   \,   ,  \qquad   \forall  \rho  \in\St  (  \spc H_{A_2^{\rm in}}) \, ,  \forall  \sigma \in  \St (\spc H_{B_2^{\rm in}}) \, ,
\end{align}
where   $\overline{A}_\rho$ and $ \overline{B}_\sigma$ are defined as 
\begin{align}
\overline A_\rho : =    \Tr_{A_2^{\rm in}}  [     (  \rho  \otimes  I_{A_1^{\rm out} A_1^{\rm in}}) \overline A]   \qquad {\rm and}  \qquad \overline B_\sigma : =    \Tr_{B_2^{\rm in}}  [     (  \rho  \otimes  I_{B_1^{\rm out} B_1^{\rm in}}) \overline B] \, . 
\end{align}  
Now, the normalization condition  (\ref{final}) is nothing but Eq.  (\ref{choicondition2}).  The condition is satisfied because  the operators    $\overline{A}_\rho$ and $ \overline{B}_\sigma$   satisfy condition  (\ref{choiachoib2}). 
\end{proof}

\section{Maximum fidelity for the inversion of an unknown dynamics}\label{app:inversion}
The performance operator $\Omega$  reads
\begin{align}
\Omega=&\frac{1}{d^2}\int \d U  \,   |U^{\dagger} \kk \bb U^{\dagger}|_{30} \otimes | \overline{U} \kk \bb \overline{U} |_{21} \nonumber \\
=&\frac{1}{d^2} \int \d U  \,   (I_3\otimes \overline{U}_{0}\otimes \overline{U}_{2}\otimes I_1)  \, \left(  \,   |I  \kk  \bb I  |_{30} \otimes |I  \kk \bb   I|_{21} \, \right)  \, (I_3\otimes \overline{U}_{0}\otimes \overline{U}_{2}\otimes I_1)^{\dagger} \ .
\end{align}
Explicit calculation using Schur's lemma yields the relations 
\begin{align}
 \label{adjoint1} [\Omega,I_3 \otimes U_{2} \otimes I_1 \otimes U_{0}]  &=0  \\
 \label{adjoint2} [\Omega,U_{3} \otimes I_2 \otimes U_{1} \otimes I_0]&=0 \, ,
 \end{align}
 required to hold for every unitary $U$. 
  Explicitly, the operator $\Omega$  is given by 
\begin{align}
\Omega  =   \frac{1}{d^2} \left( \frac{P_{+,31}\otimes P_{+,20}}{d_+} + \frac{P_{-,31}\otimes P_{-,20}}{d_-} \right) \ , 
\end{align}
$P_+$ and $P_-$ are the projectors on the symmetric and antisymmetric subspace, respectively.

 The problem is to find the minimum $\lambda$ such that $\lambda \Gamma  \ge \Omega$,  for $\Gamma$ satisfying the conditions  (\ref{setgamma}).  The first condition requires $\Gamma$ to be of the form $  \Gamma =   I_3\otimes  T_{210}$. Now,   Eq.   (\ref{adjoint2}) implies that, without loss of generality,   the operator $T_{210}$ can be chosen to  satisfy the condition
\begin{align}
 \ [T_{210},I_2 \otimes U_{1} \otimes I_0]& =0 \qquad \forall \ U \in \grp{  SU}  (d)  
 \end{align}
 which in turn implies 
 \begin{align}\label{d5}
 T_{210} = Q_{20}\otimes I_1
 \end{align}
 where $Q_{20}$ is some  positive operator on $\spc H_{20}$.
 Similarly, Eq. (\ref{adjoint1}) implies that we can choose $T_{210}$  to  satisfy the condition
\begin{align}
 [T_{210},U_{2} \otimes I_1 \otimes U_{0} ]=0 \qquad  \forall \ U \in  \grp{SU} (d) \, .
 \end{align}
 Combined with Eq. (\ref{d5}), the above relation implies 
 \begin{align}
 [Q_{20},U_{2}\otimes U_{0}] =0 \qquad  \forall \ U \in  \grp{SU} (d)
 \end{align}
and therefore
\begin{align}
Q_{20}=\alpha P_+ + \beta P_-
\end{align}
 Finally, the last condition in Eq. (\ref{setgamma}) gives  $\Tr[Q_{20}]=1$ and, therefore, 
\[\alpha d_+ + \beta d_- = d\,  \]

The dual constraint $\lambda \, \Gamma \ge \Omega$ then  reads
\begin{align}
  \lambda \left [  \, \alpha (I_{31} \otimes P_{+,20}) +  \beta (I_{31} \otimes P_{-,20} ) \,  \right] \geq \frac{1}{d^2} \left( \frac{P_{+,31}\otimes P_{+,20}}{d_+} + \frac{P_{-,31}\otimes P_{-,20}}{d_-} \right) \ .
\end{align}
Pinching  both sides with the projectors  $P_{+,31}\otimes P_{+,20}$ and $P_{-,31}\otimes P_{-,20}$, one obtains 
\begin{align}
\lambda   \geq \frac{1}{d_+ d^2  \, \alpha}  \qquad {\rm and} \qquad \lambda   \geq \frac{1}{ d^2  \,  ( d-\alpha d_+)    }
\end{align}
By separately considering the cases $d_+ \alpha d^2 \geq  (d-d_+ \alpha) d^2$ and $d_+ \alpha d^2  <  (d-d_+ \alpha) d^2$, we find that the minimum  $\lambda$ is $\lambda_{\min} =2/d^2$.

\section{Maximum fidelity for the charge conjugation of an unknown unitary evolution}\label{app:conj}
The maximization of the fidelity proceeds in the same way as for gate inversion.  
The only difference is that now  the performance operator $\Omega$ is given by Eq. (\ref{omegaconj}), namely 
\begin{align}
\Omega=&\frac{1}{d^2} \left( \frac{P_{+,32}\otimes P_{+,10}}{d_+} + \frac{P_{-,32}\otimes P_{-,10}}{d_-} \right) \ .
\end{align}
The form  of  $\Omega$ implies the relations 
\begin{align} 
\label{sym1} [\Omega,U_{3} \otimes U_{2}  \otimes I_{10}]&=0  \\
\label{sym2} [\Omega, I_{32} \otimes U_{1}\otimes U_{0}]&=0  \, ,
\end{align}
valid for every $U$ in $\grp {SU} (d)$.  Now, one has to find the minimum $\lambda$ such that $\lambda \,   \left(  I_3 \otimes T_{210}  \right)  \ge   \Omega$,  with some $\Gamma $  satisfying Eqs. (\ref{setgamma}).    
  Eq. (\ref{sym1}) implies that, without loss of generality, one has 
  \begin{align}
[T_{210}, U_{2} \otimes  I_{10}]=0     \qquad \forall U\in  \grp{SU}  (d) \, ,
\end{align}
and therefore   $T_{210} =     I_2 \otimes Q_{10}$.  
Moreover, the second condition in Eq. (\ref{setgamma}) reads
\begin{align}
 \Tr_2[T_{210}]=I_1\otimes \rho_0 \nonumber 
 \end{align}
 and implies that $Q_{10}$ has the form $ Q_{10}={I_1\otimes \rho_0}/{d}$. Finally, Eq. (\ref{sym2}) implies that one can choose $\rho_0=  I/d$ without loss of generality. 
 Summing everything up, $\Gamma$ can be chosen to be of the form $\Gamma=I_3\otimes T_{210}=I_{3210}/d^2$. The dual constraint  $\lambda \Gamma \ge \Omega$  then becomes 
  \begin{align}
  \lambda \,   \frac{I_{3210}}{{d^2}} \geq \frac{1}{d^2} \left( \frac{P_{+,32}\otimes P_{+,10}}{d_+} + \frac{P_{-,32}\otimes P_{-,10}}{d_-} \right)
\end{align}
yielding the minimal value  $\lambda_{\min}  =1/d_-  =2/d(d-1)$.
\section{Maximum fidelity for unitary controlization}\label{app:ctrl}
The performance  operator for the controlization problem is 
\begin{align}
\Omega=&\frac{1}{4d^2}\int \d g\,   |{\tt ctrl-}U   \kk \bb  {\tt ctrl-}U|_{30Q'Q} \otimes |\overline{U}   \kk \bb \overline{U}|_{21} \nonumber \\
=&           \Omega_{3210}^{(0)} \otimes     |0\>\<0|_{Q'} \otimes  |0\>\<0|_{Q}     +            \Omega_{3210}^{(1)} \otimes     |1\>\<1|_{Q'}  \otimes |1\>\<1|_{Q}  \, , 
\end{align}
where $Q$ and $Q'$ denote the control qubit before and after the interaction, respectively, and 
\begin{align}
\label{omegauno1} \Omega_{3210}^{(0)}    &:  =   \frac{1}{4d^2} \,    \left(      E_{30} \otimes  I_{21} \right) \\
\label{omegauno2}\Omega_{3210}^{(1)}    &:  =  \frac{1}{4d^2}   \,   \left(   E_{32}   \otimes E_{10}   +\frac{E^\perp_{32} \otimes E^\perp_{10}}{d_\perp}\right) \, . 
\end{align}
Here $E$ denotes  the projector on the maximally entangled state $|\Phi^+\>    =   |I\kk/\sqrt d$, $E_\perp$ is  the orthogonal projector $E_\perp  :=  I^{\otimes 2}  -  E$, and  $d_\perp   :=  d^2-1 $.      Note that the operators   $\Omega_{3210}^{(0)}$ and   $\Omega_{3210}^{(1)}$ satisfies the conditions
\begin{align}
\label{symm0}  \left [\Omega_{3210}^{(1)},  U_{ 3} \otimes  I_{21}  \otimes \overline{U}_{0}  \right]&=0 \\
\label{symm1}  \left [\Omega_{3210}^{(1)},  U_{ 3} \otimes \overline{U}_{2} \otimes I_{10}  \right]&=0 \\
\label{symm2} \left[\Omega_{3210}^{(1)}, I_{32}\otimes \overline{U}_{  1} \otimes U_{ 0}  \right]&=0  \, ,
\end{align}
for every group element $U \in \grp {SU} (d)$. 

To solve the dual problem, we have to find the minimum $\lambda$ satisfying the relation $\lambda\,  \Gamma  \ge  \Omega$ for some dual comb $\Gamma$.   By Eq. (\ref{setgamma}), we have $\Gamma  =    I_b\otimes I_3  \otimes T_{210a}$, for some suitable operator $T_{210a}$ satisfying the conditions  
\begin{align*}
\Tr_2  \left[   T_{210a}  \right]    &=   I_1\otimes   \rho_{0a}   \\
\Tr_{0a}  \left[   \rho_{0a}\right]      &=  1 \, .
\end{align*}  
Without loss of generality, $T_{210a}$ can be chosen of the form 
\begin{align}\label{ta}
T_{210a}   =  T^{(0)}_{210}  \otimes  |0\>\<0|_Q  +      T^{(1)}_{210}  \otimes  |1\>\<1|_Q   \, ,
\end{align}
with  the operators $T^{(0)}_{210}$ and $T^{(1)}_{210}$ satisfying the conditions  
\begin{align}
\label{aa}
\Tr_2  \left[  T^{(0)}_{210}  \right]   =   p_0  \,   \left [  I_1\otimes  \rho_0^{(0)}\right]   \qquad  {\rm and}  \qquad  \Tr_2  \left[  T^{(1)}_{210}  \right]   =   p_1  \,   \left[   I_1\otimes  \rho_0^{(1)}  \right]  
\end{align}
where   $\rho_0^{(0)}$ and $\rho_0^{(1)}$ are  two density matrices and $p_0$ and $p_1$ are probabilities. 
The dual constraint is then reduced to  
\begin{align}\label{dualsplit}
\lambda \, \left[  I_3\otimes  T^{(k)}_{210}  \right]\ge   \Omega_{3210}^{(k)}   \, , \qquad \forall k\in \{0,1\} \, .
\end{align}
  
At this point, Eq. (\ref{symm0}) implies that, without loss of generality, one can choose $T^{(0)}_{210}$ to satisfy the relation  
\[    \left [ T^{(0)}_{210}   ,   I_{21}  \otimes  \overline U_{0} \right]  =  0  \, ,\qquad \forall U\in\grp{SU}  (d)  \, ,\,  \]  
which implies $T^{(0)}_{210}  =  Q^{(0)}_{21}  \otimes  I_0$ for some suitable operator $Q^{(0)}_{21}$.  Moreover, Eq. (\ref{omegauno1}) implies that, without loss of generality, one can choose    $Q_{21}^{(0)}$ to be proportional to the identity, so that, eventually one has
\begin{align}\label{to}
T^{(0)}_{210}   =p_0\,   \frac {  I_2\otimes I_1\otimes I_0}{d^2}   \, . 
\end{align}
Similarly, Eq. (\ref{omegauno2}) implies that, without loss of generality, one can choose   $T^{(1)}_{210}$  to satisfy the relations
\begin{align}
\label{tuno1} \left[T^{(1)}_{210},\overline{U}_{2} \otimes I_{10}\right]  &=0\\
\label{tuno2}   \left[T^{(1)}_{210},I_{2}\otimes \overline{U}_{1} \otimes {U}_{0}\right]&=0 \,, 
\end{align}
for every unitary $U\in\grp {SU}(d)$.    Now, equation (\ref{tuno1}) implies   that $T^{(1)}_{210}$ has the form 
\begin{align}\label{q}  T^{(1)}_{210}=I_{2}\otimes Q^{(1)}_{10} 
\end{align} 
and Eq. (\ref{aa}) implies the condition   
\begin{align*}   d\, Q^{(1)}_{10}    & =        \Tr_{2} \left[T^{(1)}_{210}  \right]  \\
&=  {p_1}    \,      \left  [  I_{1}\otimes \rho^{(1)}_{0}   \right]   
\end{align*}  
for some probability $p_1$ and some quantum state  $\rho^{(1)}_{0}$.   
Combining  Eqs.   (\ref{q}) and  (\ref{tuno2}) one finally obtains   $Q^{(1)}_{10}=p_1 \,   {  I_{1}\otimes  I_{0} }/{d^2}$,
and therefore
\begin{align}\label{ti}
T^{(1)}_{210}   =p_1\,   \frac {  I_2\otimes I_1\otimes I_0}{d^2}   \, .
\end{align}
Inserting the above relations into the dual constraint, we then obtain
\begin{align}
\lambda \,  p_0  \,   \frac{I_{3210}}{d^2}   &\geq \frac{1}{4d^2}   \left( E_{30}  \otimes I_{21}  \right)  \\  
\lambda  \,  p_1 \,  \frac{I_{3210}}{d^2}    & \geq \frac{1}{4d^2} \left( E_{32} \otimes E_{10} +\frac{E^\perp_{32} \otimes E^\perp_{10}}{d_\perp}\right) \, ,\end{align}
having used  Eqs. (\ref{to}), and (\ref{ti}).    To satisfy the constraint, the parameters $\lambda, p_0,$ and $p_1$ must satisfy  $\lambda  p_0  \ge  1/4$ and $\lambda p_1   \ge 1/4$, leading to the minimum value  $  \lambda_{\min}  =     1/2 $.

\end{document}